\documentclass[11pt]{article}
\usepackage{algorithm}
\usepackage{latexsym}
\usepackage[xdvi,dvips]{graphics}
\usepackage{epsfig}
\marginparwidth 0pt
\oddsidemargin  0pt
\evensidemargin  0pt
\marginparsep 0pt
\topmargin  0in
\textwidth   6.6in
\textheight  8.7 in
\usepackage{enumerate}

\graphicspath{{figures/}{./}}
\def\abstract{    \if@twocolumn
      \small\it Abstract\/\bf---$\!$    \else
      \begin{small}{\bf Abstract:}
    \fi}
\def\endabstract{\vspace{0.6em}\par\if@twocolumn\else\end{small}\fi
    \normalsize\rm}
\def\qed{\hbox{${\vcenter{\vbox{                                \hrule height
0.4pt\hbox{\vrule width 0.4pt height 6pt
   \kern5pt\vrule width 0.4pt}\hrule height 0.4pt}}}$}}
   
\newenvironment{proof}{{\bf Proof. } }{{\hfill $\Box$}\vspace{.5pc}}

\newtheorem{theorem}{Theorem}[section]
\newtheorem{definition}{Definition}[section]
\newtheorem{lemma}{Lemma}[section]

\newtheorem{corollary}{Corollary}[section]

\newcommand{\PR}{\mathcal P}
\newcommand{\ie}{{\em i.e.}, }
\newcommand{\eg} {{\em e.g.}, }
\newcommand{\mod}{\mbox{ {\tt mod} }}
\newcommand{\Rsgl}{${\mathcal R}_{\mbox{{\tt sgl}}}$}
\newcommand{\Rout}{${\mathcal R}_{\mbox{{\tt out}}}$}
\newcommand{\Rin}{${\mathcal R}_{\mbox{{\tt in}}}$}
\newcommand{\Rswp}{${\mathcal R}_{\mbox{{\tt swp}}}$}
\newcommand{\SP}{${\mathcal R}_{\mbox{{\tt sp}}}$}
\newcommand{\T}[2]{{\mathcal T}{#1}({#2})}
\newcommand{\Tsgl}[1]{{\mathcal T}_{\mbox{{\tt sgl}}}({#1})}
\newcommand{\Tdbl}[1]{{\mathcal T}_{\mbox{{\tt dbl}}}({#1})}

\newcommand{\comment}[1]{\hspace{.1in}// {\it #1}}

\begin{document}

\title{
Ring Exploration with Oblivious Myopic Robots}
\author{Ajoy K. Datta$^{1}$, Anissa Lamani$^{2}$, Lawrence L. Larmore$^{1}$, and Franck Petit$^{3}$\\
$^1$ School of Computer Science, University of Nevada Las Vegas, USA\\
$^2$ MIS, Universit\'{e} de Picardie Jules Verne Amiens, France\\
$^3$ LIP6, INRIA, CNRS, UPMC Sorbonne Universities, France
}

\date{
}

\maketitle

\begin{abstract}
The exploration problem in the discrete universe, using identical oblivious asynchronous robots without direct communication, has been well investigated.  These robots have sensors that allow them to see their environment and move accordingly.  However, the previous work on this problem assume that robots have an unlimited visibility, that is, they can see the position of all the other robots.  In this paper, we consider deterministic exploration in an anonymous, unoriented ring using asynchronous, oblivious, and myopic robots.  By myopic, we mean that the robots have only a limited visibility. We study the computational limits imposed by such robots and we show that under some conditions the exploration problem can still be solved.
We study the cases where the robots visibility is limited to $1$, $2$, and $3$ neighboring nodes, respectively. \\

\noindent
\textbf{Keywords:} Asynchronous Anonymous Oblivious Robots,
Deterministic Exploration, Discrete Environment, Limited Visibility
\end{abstract}


\section{Introduction}
\label{sec:intro}

There has been recent research on systems of autonomous mobile entities
(hereafter referred to as \emph{robots}) 
that must collaborate to accomplish a collective task. 
Possible applications for such multi-robot systems include
environmental monitoring, large-scale construction, 
mapping, urban search and rescue, surface cleaning, risky area surrounding or surveillance, 
exploration of unknown environments, 
and other tasks in environments where instead of humans, robots are used. 
For many of these applications, the larger the number of robots is, the easier the implementation
is. 

However, the ability of a team of robots to succeed in accomplishing the assigned task greatly depends 
on the capabilities that the robots possess, namely, their sensing capabilities. Clearly, the type of viewing 
device has a great impact on the knowledge that the robots have of their environment. For example,
if the robots have access to a global localization system (\eg GPS, egocentric zone-based RFID technology), 
then their viewing capabilities are \emph{a priori} unlimited.
By contrast, endowed with a camera or a sonar, vision capabilities are limited to a certain distance.
Obviously, the stronger the device capabilities are, the easier the problem is solved.

In order to satisfy technological or budget constraints, it may be important
to minimize both aforementioned parameters, \ie ($i$) the number of robots and ($ii$)
capacities (or equipment) required to accomplish a given task.

In this paper, we consider both parameters for the \emph{exploration} problem. 
Exploration is a basic building block for many applications.  For instance, mapping of an unknown area
requires that the robots (collectively) explore the whole area.  Similarly, to search and rescue people after 
a disaster, the team of robots potentially has to explore the whole area.
The so called ``area'' is often considered to be either the {\em continuous} Euclidean space (possibly with obstacles and objects) or a 
{\em discrete} space.  In the latter case, space is partitioned into a finite number of locations represented
by a graph, where nodes represent indivisible locations that can be sensed by the robots, and
where edges represent the possibility for a robot to move from one
location to the other, {\em e.g.}, a building, a town, a factory, a mine, and more generally, zoned areas.
In a discrete environment, exploration requires that 
starting from a configuration where no two robots occupy the same node, 
every node to be visited by at least one
robot, with the additional constraint that all robots eventually
stop moving.

We assume robots having weak capacities:
they are \emph{uniform} --- meaning that all robots 
follow the same protocol ---, \emph{anonymous} --- meaning that
no robot can distinguish any two other robots,
\emph{oblivious} --- they have no memory of any past behavior of themselves
or any other robot, ---, 
and \emph{disoriented} --- they have no labeling of direction.
Furthermore, the robots have no (direct) means of
communicating with each other, \emph{i.e.},
they are unable to communicate together. However, robots are endowed with 
visibility sensors enabling to see robots located on nodes.

In this paper, we add another constraint: \emph{myopia}. 
A myopic robot has limited visibility, \emph{i.e.},
it cannot see the nodes located beyond a certain fixed distance $\phi$.
The stronger the myopia is, the smaller $\phi$ is.  In other words,
we consider that the strongest myopia corresponds to $\phi = 1$, when 
a robot can only see robots located at its own and at neighboring nodes. 
If $\phi = 2$, then a robot can see robots corresponding to $\phi =1$ and
the neighbors of its neighboring nodes.  And so on. 
Note that the weaker myopia corresponds to $\phi = \lceil \frac{D}{2} \rceil - 1$, 
$D$ being the diameter of the graph.  Infinite visibility (\ie each robot is able 
to see the whole graph) is simply denoted by $\phi = \infty$.

We study the impact of myopia strength (\ie the size of the visibility radius) on
the problem of exploration.  
This assumption is clearly motivated by limiting the vision capacities that each robot is required to have. 
As a matter of fact, beyond technological or budget constraints, it is more realistic to assume 
robots endowed with vision equipments able to sense their local (w.r.t. $\phi$) environment than
the global universe.  Furthermore, solutions that work assuming the strongest assumptions also work assuming weaker assumptions.  
In our case, any distributed and deterministic
algorithm that solves the exploration problem with robots having severe myopia ($\phi =1$), also works with robots with 
lower myopia ($\phi > 1$), and even no myopia ($\phi = \infty$).

\paragraph{Related Work.}

Most of the literature on coordinated distributed robots
assumes that the robots move in a \emph{continuous} 
two-dimensional Euclidean space,
and use visual sensors with perfect accuracy,
permitting them to locate other robots with 
infinite precision, \emph{e.g.},\cite{%
AOSY99j,%
DFSY10c,%
DLP08j,%
FPSW05j,%
FPSW08j,%
SY99j%
}. 
In each of these papers,
other than~\cite{AOSY99j,FPSW05j},
the authors assume that each robot is able to see all other robots
in the plane.
In~\cite{AOSY99j}, the authors give an algorithm for
myopic robots with limited visibility,
converging toward a single point that is not known in advance.
The convergence problem is similar to   
the gathering problem where the robots
must meet in a single location in finite time.  
In~\cite{FPSW05j},
the authors present a gathering algorithm for myopic robots in the plane,
which requires that the robots agree on a common coordinate system.

In the discrete model,
{\em gathering} and {\em exploration} are the two main problems
that have been 
investigated so far~\emph{e.g.},~\cite{GuilbaultP11,IIKO10c,KKN08c,KMP08j}
for the gathering problem 
and~\cite{DPT09c,FIPS07c,FIPS10j,LGT10c} for the exploration problem. 
In~\cite{KKN08c},
the authors prove that the gathering problem is not feasible in some symmetric 
configurations and propose a protocol based on
breaking the symmetry of the system. 
By contrast, in~\cite{KMP08j},
the authors propose a gathering protocol that exploits this symmetry
for a large number of robots ($k>18$) closing the
open problem of characterizing symmetric situations on 
the ring that admit gathering solutions.  In~\cite{IIKO10c,KameiLOT11},
the authors achieve similar results assuming weaker 
robot capabilities: robots may not be
able to count the number of robots on the same node.
%
In \cite{GuilbaultP11}, the authors studied the gathering problem considering robots having only a local visibility
{\em ie,} they can only see robots located at its own and at adjacent nodes, \ie $\phi$ is assumed to 
be equal to $1$.  Under this assumption, the authors provide a complete solution of the gathering problem for 
regular bipartite graphs.  
They first characterize the class of initial configurations allowing to solve the
gathering problem on such graphs, and they propose a gathering algorithm assuming that the system starts from 
a configuration in this class.

In~\cite{FIPS10j}, it is shown that, in general, $\Omega(n)$ robots
are necessary to explore a tree network of $n$ nodes. 
In~\cite{FIPS07c}, it is proved that no deterministic exploration
is possible on 
a ring when the number of robots $k$ divides the number of nodes $n$.
In the same paper, the authors proposed a deterministic 
algorithm that solves the problem using at least $17$ robots,
provided that $n$ and $k$ are co-prime. 
In~\cite{DPT09c}, it is shown that no protocol (probabilistic or deterministic)
can explore a ring with fewer than four robots. 
In the same paper, the authors give a probabilistic algorithm
that solves the problem on a ring of size $n > 8$ 
that is optimal for the number of robots. 
Finally, in \cite{LGT10c}, the authors reduce the gap in the
deterministic case between a large upper bound 
($k\geq 17$) and a small lower bound ($k>3$)
by showing that $5$ robots are necessary and sufficient in the case 
that the size of the ring is even, and that $5$ robots are sufficient
when the size of the ring is odd.

\paragraph{Contribution.}
To the best of our knowledge, all previous results for discrete versions of the exploration problem assume 
unlimited visibility ($\phi = \infty$), \ie the whole graph is seen by each robot. In this paper, we consider 
deterministic algorithms for ring networks of $n$ nodes that uses $k$ myopic
robots for ring networks. 
Following the same arguments as in~\cite{GuilbaultP11}, it should be emphasized that exploration 
would not be solvable starting from any configuration but from configurations where each robot is at distance 
at most $\phi$ of another one---at least when $\phi = 1$. 

Our contribution is threefold.  First, we tackle the case where robots have a visibility $\phi =1$, \ie they can only
see the state of nodes at distance $1$. In this case, we show that the exploration problem is not solvable in both 
semi-synchronous and asynchronous models, for $n>1$ and $1 \leq k\leq n$.
Next, we show that even in the (fully) synchronous model, the exploration problem 
cannot be solved with less than $5$ robots when $n>6$. We then propose, 
optimal deterministic algorithms in the synchronous model for both cases $3\leq n \leq 6$ and $n>6$.
Second, we propose two deterministic solutions in the asynchronous model when robots have a visibility $\phi =2$
using respectively $7$ (provided $n>k$) and $9$ (provided that $n>k\phi +1$) robots.  
Finally, we show that no exploration is
possible with less than $5$ robots when they have a visibility $\phi =3$ and $n>13$ in both semi-synchronous and
asynchronous model.  We then propose two asynchronous solutions that solve the problem using respectively $5$ and $7$
robots. The former works starting from specific configurations.  
All our solutions work assuming that 
each robot is at distance at most $\phi$ of another one.  Both solutions for $7$ robots with $\phi = 2$ and $5$ robots
with $\phi = 3$ work starting from specific configurations.

\paragraph{Roadmap.}
Section~\ref{sec:prel} presents the system model that we
use throughout the paper. We present our results for the cases $\phi=1$, 
$\phi=2$, and $\phi=3$ in Section~\ref{sec:phi=1}, \ref{sec:phi=2}, and~\ref{sec:phi=3}, 
respectively. 
Section~\ref{sec:conclu} gives some concluding remarks.


\section{Model}
\label{sec:prel}

\paragraph{System.}

We consider systems of autonomous mobile entities called {\em robots} 
moving into a discrete environment modeled by a graph $G=(V,E)$, 
$V$ being a set of nodes representing a set of {\em locations}
(or {\em stations}) where robots are, 
$E$ being a set of edges that represent
bidirectional connections through which robots move from a station to another. 
We assume that the graph $G$ is a \emph{ring} of $n$ nodes, 
$u_0,\dots, u_{n-1}$, {\em i.e.}, $u_i$ is connected
to both $u_{i-1}$ and $u_{i+1}$ --- every computation
over indices is assumed to be 
modulo $n$. The indices $0\ldots i \ldots n-1$
are used for notation purposes only; the nodes are {\em anonymous} 
and the ring is {\em unoriented},
{\em i.e.}, given two neighboring nodes $u$, $v$, there is
no kind of explicit or implicit labeling allowing
to determine whether $u$ is on the right
or on the left of $v$.   
On this ring, $k \leq n$ anonymous robots
$r_0, \ldots, r_j, \ldots, r_{k-1}$
are collaborating together to achieve a given task.
The indices $0\ldots j \ldots k-1$ are used for notation
purposes only, since they are undistinguished.
Additionally, the robots are {\em oblivious}, \textit{i.e},
they have no memory of their past actions. 
We assume the robots do not communicate in an explicit way.
However, they have the ability to sense their environment.
%
Zero, one, or more robots can be located on a node.
The number of robots located on a node $u_i$ at instant $t$ is called {\em
multiplicity} of $u_i$ and is denoted by $M_{i}(t)$
(or simply $M_i$, if $t$ is understood).
We say a node $u_i$ is
{\em free\/} at instant $t$
if $M_{i}(t)=0$. Conversely, we say
that $u_i$ is occupied at instant $t$
if $M_{i}(t)\ne 0$. 
If $M_{i}(t)>1$ then, we say that there is an $M_i(t)$.{\em tower\/} 
(or simply a {\em tower}) at $u_i$ at instant $t$.

We assume that each robot is equipped with an abstract device called
{\em multiplicity sensor} allowing to measure node multiplicity.  
Multiplicity sensors are assumed to have {\em limited capacities}
(so called, {\em limited visibility}), {\em i.e.}, 
each robot $r_j$ can sense the multiplicity of nodes
that are at most at a fixed distance\footnote{
The distance between two vertices in the ring is the number of edges
in a shortest path connecting them.}
$\phi$ ($\phi>0$) from the node where $r_j$ is located. We assume that $\phi$ is a common value to all the robots.
The ring being unoriented,
no robot is able to give an orientation to its view.  
More precisely,
given a robot $r_j$ located at a node $u_i$, the multiplicity sensor
of $r_j$ outputs a sequence, $s_j$, of $2\phi+1$ integers
$x_{-\phi}, x_{1-\phi}, \ldots , x_{-1}, x_0, x_1, \ldots , x_{\phi-1}, x_{\phi}$
such that: 

\noindent\begin{tabular}{ll}
either & $x_{-\phi} = M_{i-\phi}$, 
         \ldots, 
         $x_0 = M_i$, \ldots,
         $x_{\phi} = M_{i+\phi}$,\\
or     & $x_{-\phi} = M_{i+\phi}$, 
         \ldots,
         $x_0 = M_i$, \ldots,
         $x_{\phi} = M_{i-\phi}$.
\end{tabular}

If the sequence $x_1, \ldots , x_{\phi-1}, x_{\phi}$
is equal to the sequence  $x_{-1}, \ldots ,x_{1-\phi}, x_{-\phi}$, then 
the view of $r_j$ is said to be {\em symmetric}.
Otherwise, it is said to be {\em asymmetric}.  



%
%

\paragraph{Computations.}

Robots operate in three phase cycles: Look, Compute and
Move (L-C-M). During the Look phase, a robot $r_j$ located at $u_i$ 
takes a snapshot of its environment given by the output of its
multiplicity sensors, {\em i,e.}, a sequence $s_j$ of
$2\phi+1$ integers. 
Then, using $s_j$, $r_j$ computes a destination to move to, {\em i.e.},
either $u_{i-1}$, $u_i$, or $u_{i+1}$. 
In the last phase (move phase),
$r_j$ moves to the target destination computed in the previous phase.   

Given an arbitrary orientation of the ring and a node $u_i$,
$\gamma^{+i}(t)$ (resp., $\gamma^{-i}(t)$) denotes the sequence
$\langle M_i(t) M_{i+1}(t) \dots  M_{i+n-1}(t)\rangle$
(resp., $\langle M_i(t) 
\dots  M_{i-(n-1)}(t)\rangle$).
We call the sequences $\gamma^{-i}(t)$ and $\gamma^{+i}(t)$ {\em mirrors\/}
of each other.  Of course, a symmetric sequence is its own mirror.
By convention, we state that the {\em configuration} of the system at
instant $t$ is $\gamma^{0}(t)$. 
Let $\gamma = \langle M_0 M_{1} \dots  M_{n-1}\rangle$ be a configuration.
The configuration $\langle M_i, M_{i+1}, \dots , M_{i+n-1}\rangle$
is obtained by rotating $\gamma$ of $i \in
[0\ldots n-1]$.  
Two configurations $\gamma$ and $\gamma'$ are said to be
{\em undistinguishable} if and 
only if $\gamma'$ is obtained by rotating $\gamma$ or its mirror.
Two configurations that are not
undistinguished are said to be {\em distinguished}.

We call a configuration $\gamma=\langle{M_0 \dots M_{n-1}\rangle}$
\emph{towerless} if $M_i\leq 1$. for all $i$.
A configuration at which no robot that can move we call  \emph{terminal}. 



An {\em inter-distance} $d$ refers to the minimum distance taken among
distances between each pair of
distinct robots. 
Given a configuration $\gamma$, a {\em $d$.block} is any maximal
elementary path in which robots are at distance $d$. 
Each occupied node at the extremity of the $d$.block is called a {\it border}.  
The \emph{size} of a $d$.block is the number
of robots in the $d$.block.
A robot not being into a $d$.block is said to be \emph{isolated}.
A {\em $\phi$.group} is any maximal elementary path in which there is one
robot every node at distance at most $\phi$ of each other. 
In other words, a $\phi$.group is a $d$-block for the particular case
of $d=\phi$.

At each step $t$, a non-empty subset of robots is {\em selected\/} the
\textit{scheduler}, or {\em daemon}.
The scheduler is viewed as an abstract external entity.
We assume a \emph{distributed fair} scheduler. Distributed means that, at
every instant, any non-empty subset of robots can be activated.  Fair
means that every robot is activated infinitely often during a
computation. 
%
We consider three computational models: ($i$) The {\em semi-synchronous} model, ($ii$) the ({\em fully}) {\em synchronous} model, 
and ($iii$) the {\em asynchronous} model. 
In the former model, at every time instant $t$, every robot that is selected
instantaneously executes the full cycle L-C-M 
between $t$ and $t+1$.  (This model is known as the ATOM model~\cite{SY99j}). 
The synchronous model is similar to the semi-synchronous model, except that
the scheduler selects all enabled robots at each step.
In the asynchronous model, cycles L-C-M are performed asynchronously
for each robot, {\em i.e.}, the time between Look, Compute, and Move operations
is finite but unbounded, and is decided by the scheduler for each action of
each robot.  Note that since each robot is assumed to be located at a node,
the model considered in our case can be seen as
\textit{CORDA}~\cite{Pre01} with the following constraint:
the Move operation is {\em atomic}, \textit{i.e}, no robot 
can be located on an edge.
In other words, whenever a robot takes a snapshot of its environment,
the other robots are always seen on nodes, never on edges. 
Since the scheduler is allowed to interleave the operations,
a robot can move according to an outdated view, {\em i.e.},
during the computation phase, some robots may have moved. 

We call \emph{computation} any infinite sequence of configurations
$\gamma_0,$ $\dots,$ $\gamma_t,$ $\gamma_{t+1},$ $\dots$ such that (1)
$\gamma_0$ is a possible initial configuration and (2) for every
instant $t \geq 0$, $\gamma_{t+1}$ is obtained from $\gamma_t$ after
some non-empty set of robots executes an action. Any transition $\gamma_t,
\gamma_{t+1}$ is called a step of the computation. A computation $c$
\emph{terminates} if $c$ contains a terminal configuration.
%

\paragraph{Algorithm.}
Each rule in the algorithm is presented in the following manner: \newline
%
\noindent $<Label>\; <Guard>$ $::$ $<Statement>$. 
%
The guard is a possible sequence $s$ provided by the sensor of a robot
$r_j$:
$ s = x_{i-\phi}, \ldots ,x_{i-1}, (x_i),
x_{i+1}, \ldots , x_{i+\phi}$.
A robot $r_j$ at node $u_i$ is {\em enabled} at time $t$
(or simply  {\em enabled} when it is clear from the context) 
if: \newline
\noindent $s = M_{i-\phi}(t), 
\ldots ,M_{i-1}(t), (M_i),
M_{i+1}(t), \ldots ,
M_{i+\phi}(t)$, or \newline
\noindent 
$s = M_{i+\phi}(t), 
\ldots ,M_{i+1}(t), (M_i),
M_{i-1}(t), \ldots ,
M_{i-\phi}(t)$. 
The corresponding rule $<Label>$ is then also said to be {\em enabled}.
The statement describes the action to be performed by $r_j$.
There are only two possible actions: 
($i$) $\rightarrow$, meaning that $r_j$ moves towards the node $u_{i+1}$, 
($ii$) $\leftarrow$, meaning that $r_j$ moves towards the node $u_{i-1}$.
Note that when the view of $r_j$ is symmetric,
the scheduler chooses the action to be performed. 
In this case, we write statement:  $\leftarrow$ $\vee$ $\rightarrow$.

Character {\tt '?'} in the algorithms means {\em any value}.

\paragraph{Problem Specification.}

%

\begin{definition}[Exploration]
\label{def:exploration}
Let $\PR$ be a \emph{deterministic} protocol designed for a team of
$k$ robots with a given positive
visibility $\phi$, evolving on an $n$-size ring.  
$\PR$ is a {\em deterministic} ({\em terminating}) {\em exploration} protocol 
if and only if 
every computation $c$ of $\PR$ starting from any towerless configuration, 
the $k$ robots collectively explore the ring, {\i.e.},
($i$) $c$ terminates in \emph{finite time}, 
($ii$) Every node is visited by at least one robot during $c$. 
\end{definition}



\begin{theorem}
\label{th:undist}
If $\PR$ is a deterministic exploration protocol,
then for every pair of distinct configurations
$\gamma_i$ and $\gamma_j$ in any execution $c$ of $\PR$,
$\gamma_i$ and $\gamma_j$ are distinguished. 
\end{theorem}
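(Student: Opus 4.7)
The plan is a proof by contradiction: assume that the computation $c = \gamma_0,\gamma_1,\dots$ contains distinct configurations $\gamma_i$ and $\gamma_j$ (say $i<j$) that are undistinguishable, and deduce the existence of an infinite fair computation of $\PR$, contradicting the termination requirement in Definition~\ref{def:exploration}. By definition of undistinguishability there is a nontrivial automorphism $\sigma$ of the ring (a rotation, possibly composed with the mirror reflection) such that $\gamma_j = \sigma(\gamma_i)$.

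The first step is to establish a symmetry principle for one-step transitions. Because the robots are uniform, anonymous, oblivious, and the ring is unoriented, the $2\phi+1$-tuple sensed by a robot at node $u$ in $\gamma_i$ coincides, up to the reversal modulo which views are already defined, with the tuple sensed at $\sigma(u)$ in $\gamma_j$. Determinism then yields: whenever the scheduler activates a set $A$ of robots in $\gamma_i$ producing $\gamma_i'$, activating $\sigma(A)$ in $\gamma_j$ is a legal step that produces $\sigma(\gamma_i')$. When a view is symmetric and the move direction is a scheduler choice, the same choice can be made in the $\sigma$-mirrored step, so the correspondence still holds.

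Second, I would iterate this principle to build an execution $c'$ of $\PR$. Prefix $c'$ with $\gamma_0,\dots,\gamma_j$; then let the scheduler replay the activations $A_i,\dots,A_{j-1}$ of $c$ via $\sigma$, reaching $\sigma(\gamma_j)=\sigma^2(\gamma_i)$ after another $j-i$ steps. Iterating, $c'$ visits $\sigma^k(\gamma_{i+t})$ in a purely periodic pattern; since $\sigma$ has finite order $p$ (dividing $2n$), the period is at most $p(j-i)$ and $c'$ is infinite. No configuration on the cycle can be terminal: if $\sigma^k(\gamma_{i+t})$ were terminal, the symmetry principle would force $\gamma_{i+t}$ to be terminal too, contradicting the fact that $\gamma_{i+t}\to\gamma_{i+t+1}$ is a nontrivial step of $c$.

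The main obstacle, which I would address last, is fairness of the scheduler used in $c'$: every robot must be activated infinitely often, yet some robot might never appear in any $A_t$ for $t\in[i,j-1]$. I would handle this by inserting, at the end of each length-$(j-i)$ block, a "fairness repair" step that activates the missing robots, and by mirroring this repair via $\sigma^k$ at the $k$th iteration. The extended block still ends at a $\sigma$-image of its start, so by the symmetry principle $c'$ remains periodic with no terminal configuration on the orbit, while now every robot is activated once per (longer) period. The resulting infinite fair computation of $\PR$ contradicts termination, and the theorem follows.
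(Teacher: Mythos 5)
Your proof takes essentially the same route as the paper's: assume two distinct undistinguishable configurations occur in an execution, replay the intervening activation sets through the symmetry to obtain an infinite computation that never reaches a terminal configuration, and contradict requirement ($i$) of Definition~\ref{def:exploration}. You are more explicit than the paper on every point that matters --- the automorphism $\sigma$, why the views at $u$ and $\sigma(u)$ coincide up to reversal, how scheduler choices on symmetric views are mirrored, the finite order of $\sigma$ forcing periodicity, and why no configuration on the orbit is terminal --- so up to the last step your write-up is a tightened version of the paper's two-sentence replay argument.

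The one step that does not go through as stated is the fairness repair. If a robot $r$ missing from all of $A_i,\dots,A_{j-1}$ is \emph{disabled} at the point where you insert the repair, activating it is a no-op and the block indeed still ends at a $\sigma$-image of its start. But if $r$ is \emph{enabled} there, activating it moves it; the resulting configuration need not be $\sigma^k(\gamma_j)$, nor a $\sigma$-image of anything occurring in $c$, and the periodicity on which the whole contradiction rests is destroyed. The sentence ``the extended block still ends at a $\sigma$-image of its start'' is precisely the claim that needs an argument in this case, and the symmetry principle does not supply one (it only transports steps of $c$, not new steps you invent). To be clear, the paper's own proof silently ignores fairness altogether, so your attempt is an improvement in intent; but as written the repair is an assertion rather than a proof. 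Either restrict the repair to robots that are disabled at the insertion point (padding activation sets with harmless no-ops) and give a separate argument for robots that are enabled infinitely often yet never activated in the replayed schedule, or drop the repair and state explicitly that you, like the paper, construct an infinite computation whose fairness is not verified.
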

\begin{proof}
Suppose $\PR$ is a deterministic exploration protocol,
and there exists an execution $c$ of $\PR$
and a pair of distinct configurations, $\gamma_i$ and $\gamma_j$ of $c$,
such that $\gamma_i$ and $\gamma_j$ are undistinguished.
Since $\PR$ is a deterministic exploration protocol, $c$ terminates
in finite time. Let $s=\gamma_0 \gamma_1 \ldots
\gamma_{l-1}$ be the smallest prefix of $c$ of length $l$ that contains
distinguishable configurations only.
Recall that we assume that no two consecutive
configurations of $c$ are identical.
Since $s$ is the smallest prefix of $c$ containing only
distinguished configurations
by assumption there exists $\gamma_i$ in $s$
such that $\gamma_i$ and $\gamma_l$ are undistinguished. Such a
$\gamma_l$ exists since we assume that $c$ contains at least two
undistinguished configurations.
So, executing $\PR$, the set of actions
that led the system from $\gamma_i$ to $\gamma_{i+1}$
are also applicable in $\gamma_l$.
Thus, by executing the same set of actions,
there exists a configuration $\gamma_{l+1}$ reachable from $\gamma_l$ 
such that $\gamma_{i+1}$ and $\gamma_{l+1}$ are undistinguished.
Following the same reasoning for every 
$j \in [2..l-1]$,  there exists a configuration $\gamma_{l+j}$
reachable from $\gamma_{l+j-1}$ 
such that $\gamma_{i+j}$ and $\gamma_{l+j}$ are undistinguished.
Therefore, there exists an execution $c$ of $\PR$
leading to the infinite sequence of actions starting for $\gamma_i$.
This contradicts the assumption that $c$ terminates in finite time.
\end{proof}


\section{Visibility $\phi=1$}
\label{sec:phi=1}

In this section, we first prove that no deterministic exploration is possible in the semi-asynchronous 
model when robots
are able to see at distance $1$ only ($\phi=1$). 
The result holds for any $k<n$.  We then show that no deterministic exploration
solves the problem with four robots, even in the (fully) synchronous model when $n>6$. 
The above results are also valid for the asynchronous model~\cite{P02}. 
Next, we provide optimal deterministic algorithms in the 
synchronous model for both cases $3\leq n \leq 6$ and $n>6$.

\subsection{Negative Results}
\label{sub:imp1}


\paragraph{Asynchronous Model.}
Since robots are able to see only at distance $1$,
only the four following rules are possible: 


\begin{tabular}{rl}
{\Rsgl} & 0(1)0 :: $\rightarrow$ $\vee$ $\leftarrow$\\
{\Rout} & 0(1)1 :: $\leftarrow$\\
{\Rin} & 0(1)1 :: $\rightarrow$\\
{\Rswp} & 1(1)1 :: $\rightarrow$ $\vee$ $\leftarrow$
\end{tabular} 
\\

In the following, $\gamma_0$ refers to an initial configuration. Let us first assume that $\gamma_0$ consists of a
single $1$.block of size $k$. 
Note that Rule~{\Rsgl} is not applicable in an initial configuration.  Also, Rules~{\Rout} and~{\Rin}
(Rule~{\Rswp}) implies that $k$ must be greater than $2$ ($3$, respectively).
We first show that no deterministic exploration protocol includes Rules {\Rsgl}, {\Rout}, and {\Rswp} starting from $\gamma_0$.
We now show that Rules~{\Rout} and~{\Rswp} cannot be part of the protocol. 


\begin{lemma}\label{2}
Let $\PR$ be a semi-synchronous protocol for $\phi = 1$ and $2\leq k < n$. If $\PR$ solves the exploration problem, 
then $\PR$ does not include Rule~{\Rout}. 
\end{lemma}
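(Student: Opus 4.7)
The strategy is to assume for contradiction that $\PR$ is a deterministic exploration protocol that includes Rule \Rout, and to use the freedom of the semi-synchronous scheduler to construct, starting from $\gamma_0$ (the initial single $1$.block of size $k$), an execution that either revisits an undistinguished configuration---contradicting Theorem~\ref{th:undist}---or terminates without visiting every node. Since $\PR$ is deterministic and Rules \Rout\ and \Rin\ share the guard $0(1)1$, \Rin\ cannot be in $\PR$; only \Rsgl\ and \Rswp\ may coexist with \Rout.

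The construction begins by activating only one border robot of $\gamma_0$, say the rightmost. Its view is $0(1)1$, so \Rout\ sends it one step outward into the adjacent empty node, producing $\gamma_1$. For the minimal ring $n = k+1$, $\gamma_1$ is already a rotation of $\gamma_0$, immediately contradicting Theorem~\ref{th:undist}. For $n \geq k+2$, $\gamma_1$ consists of a $1$.block of size $k-1$ together with an isolated robot at distance $2$ from the block, and the argument then splits on whether \Rsgl\ is in $\PR$.

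If \Rsgl\ belongs to $\PR$, next activate only the new isolated robot: its view $0(1)0$ is symmetric, so the scheduler is free to pick the direction of motion, and choosing to step back toward the block restores $\gamma_0$ exactly. This gives two undistinguished configurations in the same execution, contradicting Theorem~\ref{th:undist}. If \Rsgl\ is not in $\PR$, the isolated robot of $\gamma_1$ is frozen (no rule matches $0(1)0$), while the new rightmost robot of the shrunken block still has view $0(1)1$ and can be activated to apply \Rout\ again. I would iterate this: the scheduler always activates some rightmost robot with view $0(1)1$, and in particular never activates any robot with view $1(1)1$, so \Rswp\ is never triggered and plays no role. Either the induced sequence of \Rout\ applications eventually enters a configuration that is a rotation or mirror of one already seen in the execution (again contradicting Theorem~\ref{th:undist}), or it stabilizes in a terminal configuration where every robot has view $0(1)0$ and the occupied nodes lie in an arc whose length is bounded by a function of $k$ alone; for $n$ large enough relative to $k$, some node is never visited, contradicting the exploration property.

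The main obstacle is the intermediate range of $n$ (values near $2k$), where neither the large-ring bounded-arc argument nor the minimal-ring wrap-around argument applies directly; in that range one must carefully track the iterated \Rout\ shifts in order to identify the first configuration that is a rotation of a previous one. The remaining case analysis on whether \Rsgl\ and \Rswp\ belong to $\PR$ is routine.
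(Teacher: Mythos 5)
Your overall strategy --- drive the system with adversarially scheduled applications of Rule~{\Rout} on border robots until an undistinguished configuration recurs, then invoke Theorem~\ref{th:undist} --- is the same as the paper's, and two of your branches are sound: the wrap-around case $n=k+1$, and the shortcut for the sub-case where {\Rsgl} belongs to $\PR$ (activate the isolated robot, whose view $0(1)0$ is symmetric, and let the scheduler send it back, reproducing $\gamma_0$). But the central case ($k\geq 3$ with {\Rsgl} absent) is left with a genuine hole, and you flag it yourself: you only assert that the iterated {\Rout} moves \emph{either} revisit an undistinguished configuration \emph{or} stabilize in a terminal configuration with an unvisited node, and you admit you cannot resolve an ``intermediate range of $n$''. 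That disjunction is not a proof; the whole burden is to exhibit the repetition, and no case split on $n$ is needed to do so.

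The missing bookkeeping is short. After the first border robot steps out and becomes isolated at distance $2$, the block of the remaining $k-1$ robots has, on the side facing the gap, a border robot with view $0(1)1$; activating it with {\Rout} sends it \emph{across} the single empty node, where it becomes adjacent to the isolated robot. Repeating this on the successive border robots of the shrinking block transfers the robots one by one across the gap, and after $k-2$ such steps the configuration is again a block of $k-1$ robots with one robot isolated at distance $2$ from its border --- a rotation of the mirror of the configuration reached after the very first move, contradicting Theorem~\ref{th:undist}. Every configuration in this sequence occupies the same $k+1$ consecutive nodes, so the computation is literally identical for every $n\geq k+2$: there is no ``intermediate range'', and your ``terminal configuration'' branch never occurs because a border robot of the larger block is always enabled. (For $k=2$ your iteration does not start, since the leftover robot has view $0(1)0$; there the paper's separate two-isolated-robots argument, or your terminal-configuration observation, is what closes the case.)
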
 

\begin{proof}
By contradiction, assume that $\PR$ includes Rule~{\Rout}.  Note that Rule~{\Rout} is only enabled on 
robots that are at the border of a $1$.block.  Since there is only one $\phi$-group in $\gamma_0$ (the initial
configuration), there are only two robots that can execute Rule~{\Rout}.  
Let $u_i u_{i+1} \ldots u_{i+k}$ be the $1$.block in $\gamma_0$.
Denote $r_j$ the robot located on node $u_{i+j}$ ($0\leq i < k$). 
%
Without lost of generality, assume that the adversary activates $r_0$ only in $\gamma_0$. 
Let us call the resulting configuration by $T$ (standing for ``{\em Trap}'' configuration).
There are two cases to consider:
\begin{itemize}
\item $k = n-1$.  
In that case, $r_0$ moves to $u_{i+k+1}$ and $T$ includes a $1$.block $u_{i+1} \ldots u_{i+k} u_{i+k+1}$. 
$T$ is undistinguishable from $\gamma_0$. This contradicts Theorem~\ref{th:undist}.


\item $k < n-1$.  
In that case, once $r_0$ moves, it becomes an isolated robot on $u_{i-1}$ in $T$.  Again, there are two cases. 
Assume first that $k=2$.  Then, $T$ includes two isolated robots, $r_0$ and $r_1$.  
Even if $\PR$ includes Rule~{\Rsgl}, by activating $r_0$ and $r_1$ simultaneously,
 $T+1$ is undistinguishable from $T$.  Again, this contradicts Theorem~\ref{th:undist}.  
Assume that $k>2$.  Then, $T$ is the configuration in which 
$u_{i+1}$ is the border occupied by $r_1$ of the $1$.block $u_{i+1} \ldots u_{i+k}$.
Assume that in $T$, the adversary activates $r_1$ that executes Rule~{\Rout}.  
Then, Configuration~$T+1$ includes two $1$.blocks disjoint by one node.  The former includes $r_0$ and $r_1$ 
(on $u_{i-1}$ and $u_i$, respectively).  The latter forms the sequence $r_2\ldots r_k$, located on $u_{i+2} \ldots
u_{i+k}$.
More generally, assume that $\forall i \in [1..k-2]$, in $T+i$, the adversary activates $r_{i+1}$.  The resulting 
configuration $T+k-2$ is undistinguishable from $T$.  A contradiction (Theorem~\ref{th:undist}). 
\end{itemize}
 
\end{proof}

\begin{lemma}\label{3-4}
Let $\PR$ be a semi-synchronous protocol for $\phi = 1$ and $2\leq k < n$. If $\PR$ solves the exploration problem, 
then  Rule~{\Rin} and Rule~{\Rswp} are mutually exclusive with respect to $\PR$. 
\end{lemma}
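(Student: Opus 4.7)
The plan is to proceed by contradiction: assume $\PR$ is a deterministic exploration protocol containing both Rule~{\Rin} and Rule~{\Rswp}, and exhibit a semi-synchronous execution that prevents termination. Since Rule~{\Rswp} can only fire at a robot whose view is $1(1)1$, it requires a $1$.block of size at least~$3$, so the interesting case is $k \geq 3$ (for $k=2$ Rule~{\Rswp} is never enabled and the statement is vacuous). I would start from the initial configuration $\gamma_0$ consisting of a single $1$.block of size $k$ on consecutive nodes $u_0, u_1, \ldots, u_{k-1}$, which is possible since $k<n$, and write $r_j$ for the robot on $u_j$.

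The key step is to exploit the asymmetry between $r_0$'s and $r_1$'s views. The robot $r_0$ has the asymmetric view $0(1)1$ and is therefore forced by Rule~{\Rin} to move right onto $u_1$, while its neighbor $r_1$ has the symmetric view $1(1)1$ and is enabled by Rule~{\Rswp} with a scheduler-chosen direction. The adversary activates $r_0$ and $r_1$ simultaneously, and the scheduler resolves $r_1$'s ambiguity by sending it left onto $u_0$. Both robots look at $\gamma_0$ before performing their atomic moves, so they simply exchange positions while every other robot remains idle. The resulting configuration $\gamma_1$ places exactly one robot on each of $u_0, \ldots, u_{k-1}$ and leaves every other node free, so $\gamma_1$ equals $\gamma_0$ as a multiplicity configuration.

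Consequently $\gamma_0$ and $\gamma_1$ are (trivially) undistinguished and appear at distinct instants of the execution, which contradicts Theorem~\ref{th:undist}; equivalently, the same activation is again available and can be replayed forever, giving an execution that never visits any of the $n-k\geq 1$ free nodes and never reaches a terminal configuration. The main obstacle I expect is justifying the ``crossing'' step: I would spell out that in the discrete model of the paper, moves are atomic at the granularity of nodes (not edges), so two adjacent robots moving onto each other's position in the same semi-synchronous step is a legitimate transition. The subtle twist, and what drives the contradiction, is that although both robots physically move, the multiplicity view of the ring is unchanged, so the protocol makes no progress toward exploration while still faithfully obeying its own rules.
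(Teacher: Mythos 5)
Your proof is correct and follows essentially the same approach as the paper: both arguments let the scheduler pair a border robot executing Rule~{\Rin} with an adjacent interior robot executing Rule~{\Rswp} in the opposite direction, so that the robots merely exchange positions, the configuration is unchanged, and Theorem~\ref{th:undist} is contradicted. Your version is in fact slightly cleaner, since activating only the single pair $r_0$, $r_1$ avoids the paper's case split on the parity of $k$.
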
 
\begin{proof}
Assume by contradiction that $\PR$ includes both Rule~{\Rin} and Rule~{\Rswp}.
Note that Rule~{\Rin} (respectively, Rule~{\Rswp}) is enabled only on robots that are located at the
border of the $1$.block (resp., inside the $1$.block).
In the case where $k$ is even, assume that the scheduler activates all the robots. The resulting configuration is undistinguishable from $\gamma_0$.
A contradiction (by Theorem~\ref{th:undist}). In the case where $k$ is odd, suppose that the scheduler activates the robots at the border of the $1$.block and their neighbors (in the case where $k=3$, only one extremity is activated). The resulting configuration is undistinguishable from $\gamma_0$.
A contradiction (by Theorem~\ref{th:undist}).
\end{proof}

\begin{lemma}\label{3}
Let $\PR$ be a semi-synchronous protocol for $\phi = 1$ and $2\leq k < n$. If $\PR$ solves the exploration problem, 
then $\PR$ does not include Rule~{\Rswp}.
\end{lemma}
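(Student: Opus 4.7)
I would proceed by contradiction, assuming Rule~{\Rswp}~$\in \PR$. Lemma~\ref{2} then forbids {\Rout} and Lemma~\ref{3-4} forbids {\Rin}, so $\PR \subseteq \{{\Rsgl}, {\Rswp}\}$. Following the template of the previous lemmas, I take $\gamma_0$ to be a single $1$.block of size $k$ on the ring. In $\gamma_0$ every border robot of the block has view $0(1)1$, for which no rule in $\PR$ is enabled; hence the only potentially enabled robots are the interior ones, with view $1(1)1$, and they can act only via {\Rswp}. The argument then splits on~$k$.

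For $k = 2$ there is no interior robot at all, so $\gamma_0$ is already terminal while $n-2\geq 1$ nodes are never visited, contradicting Definition~\ref{def:exploration}. For $k\geq 4$, two adjacent interior robots $r_1$ and $r_2$ can be activated alone; since the guard $1(1)1$ is symmetric, the scheduler can direct them to swap positions, and the atomicity of Move then yields exactly the same multiplicity vector as $\gamma_0$. Cycling through adjacent interior pairs to cover every interior robot, and interleaving with noop activations of the two border robots, I build a fair schedule along which the multiplicity vector remains $\gamma_0$ forever; since $\gamma_0$ is not terminal, the execution never terminates, again violating Definition~\ref{def:exploration}.

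The delicate case, and the main obstacle, is $k = 3$: only one interior robot $r_1$ exists and no swap is possible. Fairness forces $r_1$ eventually to fire {\Rswp}; WLOG it lands on $u_i$, creating a $2$-tower there and leaving $r_2$ isolated at $u_{i+2}$. The tower's view $0(2)0$ is not matched by any rule in $\{{\Rsgl}, {\Rswp}\}$ (each of the four possible rules has center~$1$), so the tower is frozen forever. If {\Rsgl}~$\notin \PR$, then $r_2$'s view $0(1)0$ is also unmatched, the configuration is terminal, and only the three original block nodes have ever been visited---a coverage failure since $n>3$. If {\Rsgl}~$\in \PR$, then $r_2$ is the sole mover; exploiting the symmetric guard of {\Rsgl}, the scheduler can force $r_2$ to oscillate between $u_{i+2}$ and $u_{i+3}$ indefinitely, yielding a non-terminating execution. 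Either subcase contradicts Definition~\ref{def:exploration}, completing the proof.
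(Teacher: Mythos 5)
Your overall structure ($k=2$ terminal, $k\geq 4$ by pairwise swaps, $k=3$ as the delicate case) matches the paper's, and your $k\geq 4$ argument is essentially the paper's own. The genuine gap is the step restricting $\PR$ to the rules {\Rsgl} and {\Rswp}, and everything in the $k=3$ case that rests on it. Lemmas~\ref{2} and~\ref{3-4} only exclude {\Rout} and {\Rin}, which are rules for \emph{towerless} local views; a protocol is a map from arbitrary views (sequences of multiplicities) to actions, so $\PR$ may perfectly well contain rules whose guards involve multiplicities at least $2$, such as $0(2)0$, $0(1)2$, $0(2)1$, or $0(3)0$. Such views do not occur in $\gamma_0$, but they arise precisely in the configuration your $k=3$ analysis produces: once the single interior robot fires {\Rswp}, the resulting $2$.tower has view $0(2)0$, and you cannot conclude it is ``frozen forever'' merely because none of the four towerless rules matches. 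The paper's proof spends most of its effort on exactly this point: it shows a rule $0(2)0 :: \leftarrow \vee \rightarrow$ cannot belong to $\PR$ because the scheduler could move a single robot of the tower back onto the vacated node and recreate a configuration undistinguishable from $\gamma_0$, and it then chases the remaining tower rules ($0(1)2$, $0(2)1$, $0(3)0$) through the case where the isolated robot becomes a neighbor of the tower. Without that analysis your $k=3$ case is not closed.

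A second, smaller defect: your oscillation argument for the isolated robot assumes its view at $u_{i+3}$ is still $0(1)0$, which fails when $n=4$ (there $u_{i+4}=u_i$ carries the tower, so the view becomes $0(1)2$ and {\Rsgl} no longer governs the move). This is exactly the subcase the paper isolates as ``the isolated robot becomes a neighbor of the $2$.tower,'' and it again forces you to reason about rules with tower guards. Your $k=2$ and $k\geq 4$ cases are fine as written.
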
 
\begin{proof}
By contradiction, assume that $\PR$ includes Rule~{\Rswp}.
From Lemmas~\ref{2} and~\ref{3-4}, $\PR$ does not include Rules~{\Rout} nor Rule~{\Rin}.  So, in $\gamma_0$, the 
robots that are inside a $1$.block are the only robots that are able to execute an action in the initial configuration
(and $k$ must be greater than or equal to $3$).  
If $k>3$, then, the adversary activates at each time two neighboring robots such that once they move, they simply exchange their position and the resulting configuration is undistinguishable from $\gamma_0$. A contradiction.\\
If $k=3$ then, when Rule~{\Rswp} is executed, a $2$.tower is created. Rule $0(2)0$:: $\leftarrow$ $\vee$ $\rightarrow$ cannot be enabled since once executed, the scheduler can activate only one robot in the $2$.tower such that the configuration reached is undistinguishable from $\gamma_0$. In the case where {\Rsgl} is enabled, then the isolated robot becomes either ($i$) neighbor of the $2$.tower (the case where $n=4$) or $(ii)$ remains isolated. In the first case $(i)$, if either $0(1)2$:: $\leftarrow$ or $0(2)1$:: $\rightarrow$ is enabled, a $3$.tower is created (Observe that one node of the ring has not been explored). The only rule that can be executed is $0(3)0$:: $\leftarrow$ $\vee$ $\rightarrow$. Suppose that the scheduler activates two robots that move in two opposite directions. The configuration reached is undistinguishable from $\gamma_0$. If $0(2)1$:: $\leftarrow$ is enabled, then suppose that the scheduler activates only one robot. The configuration reached is undistinguishable from $\gamma_0$. In the second case ($ii$) (the isolated robot remains isolated), {\Rsgl} keeps being enabled. Once it it is executed, the isolated robot moves back to its previous position and the configuration reached is indistinguishable from the previous one. 
\end{proof}


\begin{corollary}\label{cor:init}
Let $\PR$ be a semi-synchronous protocol for $\phi = 1$, $2\leq k < n$. If $\PR$ solves the exploration 
problem, then $\PR$ includes Rule~{\Rin} that is the only applicable rule in the initial configuration.
\end{corollary}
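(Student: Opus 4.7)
The plan is to extract the corollary from the three preceding lemmas essentially by elimination. Since the analysis from Lemma \ref{2} through Lemma \ref{3} already shows that the initial configuration $\gamma_0$ is a single $1$.block of size $k$, I only need to identify which of the four candidate rules ({\Rsgl}, {\Rout}, {\Rin}, {\Rswp}) could be enabled in $\gamma_0$, then use the exclusion results.

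First, I would observe that in $\gamma_0$ every robot sees one of only two views (up to mirroring): a border robot of the block sees $0(1)1$, and, if $k\geq 3$, an interior robot sees $1(1)1$. The view $0(1)0$ never occurs because $\gamma_0$ contains no isolated robot (all $k$ robots form one $1$.block). Consequently, Rule~{\Rsgl} is not enabled in $\gamma_0$, and the only rules which could possibly fire are {\Rout}, {\Rin}, and {\Rswp}.

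Next, I would invoke Lemma~\ref{2} to discard {\Rout} from $\PR$, and Lemma~\ref{3} to discard {\Rswp} from $\PR$. This leaves Rule~{\Rin} as the unique rule of $\PR$ whose guard can match a view present in $\gamma_0$. So if any rule of $\PR$ is enabled in $\gamma_0$, it must be Rule~{\Rin}.

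Finally, I would argue that Rule~{\Rin} must actually belong to $\PR$. Since $k<n$, the block occupies strictly fewer than $n$ nodes, so at least one node is unvisited in $\gamma_0$. If no rule of $\PR$ were enabled in $\gamma_0$, then $\gamma_0$ would be a terminal configuration (by the definition in Section~\ref{sec:prel}), and the unique computation of $\PR$ starting from $\gamma_0$ would never visit the remaining $n-k$ nodes. This would contradict the assumption that $\PR$ is a deterministic exploration protocol, and in particular condition $(ii)$ of Definition~\ref{def:exploration}. Hence $\PR$ must contain Rule~{\Rin}, and by the preceding step it is the only rule of $\PR$ applicable in $\gamma_0$. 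There is no real obstacle here; the whole statement is a bookkeeping corollary of Lemmas~\ref{2} and~\ref{3}, and the only point requiring a sentence of justification is the non-terminality of $\gamma_0$, which follows immediately from $k<n$.
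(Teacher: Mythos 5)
Your proof is correct and follows essentially the same route as the paper, which simply states that the corollary ``directly follows'' from Lemmas~\ref{2}, \ref{3-4}, and~\ref{3}; you carry out the same elimination, merely spelling out the two points the paper leaves implicit (that {\Rsgl} cannot be enabled in a single $1$.block, and that some rule must be enabled in $\gamma_0$ since $k<n$ forces unvisited nodes). No gap; the added deadlock argument for why {\Rin} must actually belong to $\PR$ is a welcome explicit justification of a step the paper takes for granted.
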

\begin{proof}
Directly follows from Lemmas~\ref{2}, \ref{3-4}, and \ref{3}.
\end{proof}

\begin{lemma}\label{lem:kt5}
Let $\PR$ be a semi-synchronous exploration protocol for $\phi = 1$, $2\leq k < n$. Then, 
$k$ must be greater than or equal to $5$.
\end{lemma}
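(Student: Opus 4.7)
The plan is to rule out $k=2,3,4$ one by one, by contradiction. In each case, Corollary~\ref{cor:init} forces the only applicable rule at the initial towerless 1.block $\gamma_0$ to be \Rin, so activated border robots move inward. I would then let the adversary activate both border robots simultaneously and analyze the resulting configuration: in every sub-case it either matches $\gamma_0$ up to rotation/reflection (contradicting Theorem~\ref{th:undist}), or it is terminal while some node remains unvisited (contradicting that $\PR$ explores and $n>k$).

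For $k=2$, both robots are borders; simultaneously activating them under \Rin makes them swap, giving a configuration undistinguishable from $\gamma_0$. For $k=3$, the two border robots converge on the middle node and form a 3-tower. The view $0(3)0$ is symmetric, so any rule for it must have statement $\leftarrow \vee \rightarrow$; the adversary activates two of the tower's robots and orders one left and one right, recreating a 1.block of size three at the same three nodes as in $\gamma_0$ and thus undistinguishable from it. If, instead, $\PR$ has no rule for $0(3)0$, then only three nodes have been visited while $n>3$, contradicting exploration.

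For $k=4$, activating the two border robots of the initial 1.block $u_0u_1u_2u_3$ produces two adjacent $2$-towers, at $u_1$ and $u_2$, with the asymmetric view $0(2)2$ (and its mirror $2(2)0$). Since $0(2)2$ is asymmetric, any rule $\PR$ provides for it must be either $\rightarrow$ or $\leftarrow$ (deterministic direction). If the rule is $\rightarrow$, activating one robot from each tower makes them cross: each tower loses one robot but gains one from its neighbor, leaving multiplicities $(2,2)$ unchanged, which is undistinguishable from the pre-move configuration. If the rule is $\leftarrow$, the two activated robots leave outward to $u_0$ and $u_3$, and the resulting occupancy is exactly a 1.block of size four on $u_0,u_1,u_2,u_3$ --- the same shape and same nodes as $\gamma_0$, hence undistinguishable. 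If no rule exists for view $0(2)2$, the configuration is terminal after only four nodes have been visited, contradicting exploration when $n>4$.

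I expect the main subtlety to be the $k=4$ case: I must carefully track the effect of simultaneous moves on tower multiplicities, making sure that activating exactly one robot from each $2$-tower yields $(2,2)$ in the crossing sub-case (not, e.g., $(0,4)$ or $(4,0)$) and $(1,1,1,1)$ on the four consecutive nodes in the outward sub-case. The remaining care is in observing that for each $k\in\{2,3,4\}$ we indeed have $n\ge k+1$, so the "some node unvisited" contradiction is available whenever a potentially terminal configuration is reached.
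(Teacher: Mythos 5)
Your proof is correct and follows essentially the same route as the paper's: invoke Corollary~\ref{cor:init} to force \Rin{} at the borders, activate both border robots, and then in each of the cases $k=2,3,4$ exhibit an adversarial schedule (swap, split the $3$.tower in opposite directions, cross or reverse the two adjacent $2$.towers) that yields a configuration undistinguishable from an earlier one, contradicting Theorem~\ref{th:undist}. Your explicit handling of the ``no applicable rule, hence terminal with unvisited nodes'' sub-cases is a small completeness improvement over the paper's wording, but not a different argument.
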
 
\begin{proof}
Assume that the adversary activates both robots $r_0$ and $r_1$ at the border of the $1$.block in $\gamma_0$.
By Corollary~\ref{cor:init}, both  $r_0$ and $r_1$ execute Rule~{\Rin} in $\gamma_0$.  

If $k=2$, the system reaches a configuration $\gamma_1$ that is undistinguishable from $\gamma_0$.

If $k=3$, then a $3$.tower is created on the middle node of the $1$.block in $\gamma_1$. Since the ring is not
explored anymore (at least one node is not visited),
$\PR$ must include at least the following rule: $0(3)0::\; \leftarrow \vee \rightarrow$.  
Then, the adversary activates only two of them and an opposite destination node for each of them. 
Then, the system reaches a configuration $\gamma_2$ that is undistinguishable from $\gamma_0$.

If $k=4$, then two neighboring $2$.towers are created on the two middle nodes of the $1$.block in $\gamma_1$.  
There are two cases to consider:
Assume first that $\PR$ includes the rule: $2(2)0::\; \leftarrow$. Then, by activating the same number of robots on each tower,
the system reaches $\gamma_2$ such that $\gamma_1$ and $\gamma_2$ are two undistinguishable configurations.  
Assume that $\PR$ includes the rule: $2(2)0::\; \rightarrow$. Again, by activating only one robot on each tower, 
the system reaches a configuration $\gamma_2$ that is undistinguishable from $\gamma_0$.
\end{proof}


Let us call a {\em symmetric $x$.tower sequence} (an $S^x$-sequence for short),
a sequence of occupied nodes such that the extremities of the sequence contain an $x$.tower. 
Observe that a tower containing at least $2$ robots (by definition), $x$ is greater than
or equal to $2$.  Also, 
since the robots can only see at distance $1$, the tower is only seen by its neighboring robots. 

The following lemma directly follows from Corollary~\ref{cor:init} and Lemma~\ref{lem:kt5}:
\begin{lemma}\label{lem:sur}
For every semi-synchronous exploration protocol $\PR$ for $\phi = 1$, then $5\leq k < n$ and  
there exists some executions of $\PR$ leading to a configuration containing an $S^2$-sequence.  
\end{lemma}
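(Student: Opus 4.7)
The plan is to exhibit a concrete one-step execution from the initial configuration that produces an $S^2$-sequence; the lower bound $k\geq 5$ is already supplied by Lemma~\ref{lem:kt5}, so the work is entirely in the existence claim. First I would invoke Corollary~\ref{cor:init}: in the initial configuration $\gamma_0$, which consists of a single $1$.block $u_i u_{i+1}\dots u_{i+k-1}$, the only rule of $\PR$ that can fire is Rule~\Rin. Since the guard $0(1)1$ of Rule~\Rin matches only the two border robots of the block (the interior robots see the pattern $1(1)1$, whose sole matching rule~\Rswp is excluded by Lemma~\ref{3}), the only admissible moves in $\gamma_0$ are inward steps by one or both borders.

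Next I would have the adversary activate both border robots simultaneously. Each one executes $0(1)1::\rightarrow$, so the robot at $u_i$ lands on $u_{i+1}$ and forms a $2$.tower with the robot already sitting there, while symmetrically the robot at $u_{i+k-1}$ lands on $u_{i+k-2}$ and forms a second $2$.tower. The resulting configuration $\gamma_1$ has empty nodes at $u_i$ and $u_{i+k-1}$, $2$.towers at $u_{i+1}$ and $u_{i+k-2}$, and single robots at every interior node $u_{i+2},\dots,u_{i+k-3}$ (this set is nonempty as soon as $k\geq 5$).

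Finally, since Lemma~\ref{lem:kt5} forces $k\geq 5$, the two positions $u_{i+1}$ and $u_{i+k-2}$ are distinct and every node strictly between them is occupied, so the path $u_{i+1},u_{i+2},\dots,u_{i+k-2}$ is a run of consecutively occupied nodes whose two extremities each carry a $2$.tower; this is exactly an $S^2$-sequence in the sense of the definition preceding the lemma. Together with Lemma~\ref{lem:kt5} this yields the claim. I do not expect a real obstacle here; the only point requiring any care is the reliance on Lemma~\ref{3} to rule out Rule~\Rswp on the interior robots of $\gamma_0$, which is what guarantees that the adversary can in fact select \emph{only} the two borders and force the symmetric inward collision described above.
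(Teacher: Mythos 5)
Your proposal is correct and follows exactly the route the paper intends: the paper justifies this lemma as a direct consequence of Corollary~\ref{cor:init} (Rule~\Rin\ is the only applicable rule initially) and Lemma~\ref{lem:kt5} ($k\geq 5$), and your argument simply makes explicit the one-step execution---both border robots moving inward to form two $2$.towers at the extremities of the remaining occupied run---that the paper leaves implicit. No gap; the care you take in ruling out \Rswp\ on interior robots via Lemma~\ref{3} is a reasonable bit of added rigor.
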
 

Consider robots being located at the border of an $1$.block.  
Let $x\geq 1$ be the number of robots located on the border node and the 
following generic rules:

\begin{tabular}{rl}
$\T{\alpha}{x}$ & $0(x)1 ::\; \leftarrow$\\
$\T{\beta}{x}$  & $0(x)1 ::\; \rightarrow$\\
$\T{\gamma}{x}$ & $x(1)1 ::\; \leftarrow$\\
$\T{\delta}{x}$ & $x(1)1 ::\; \rightarrow$
\end{tabular} 
\\

Remark that Rule~{\Rin} corresponds to Rule~$\T{\beta}{1}$.  Also, note that
since $x$ robots are located on the border node of an $1$.block, 
the local configuration for both {$\T{\gamma}{x}$} and {$\T{\delta}{x}$} is $0x11$.
Similarly, define the generic rule {$\Tsgl{y}$} ($y\geq 2$) as follow: $0(y)0 :: \leftarrow \vee \rightarrow$.

%
\begin{lemma}\label{tour_incompatible-1}
Let $\PR$ be a semi-synchronous protocol for $\phi = 1$ and $5 \leq k < n$. If $\PR$ solves the exploration 
problem, then for every $x\geq 2$, Rule~{$\Tsgl{x}$} is mutually exclusive with both {$\T{\gamma}{x-1}$} and
{$\T{\delta}{x}$}, with respect to $\PR$. 
\end{lemma}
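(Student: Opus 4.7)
The plan is to prove both mutual-exclusivity claims by contradiction, invoking Theorem~\ref{th:undist}. The overall strategy is: assume both rules belong to $\PR$; exhibit a reachable configuration $\gamma$ from which the scheduler can execute a short sequence of moves (using only the two rules in question, together with rules already known to lie in $\PR$) that returns the system to $\gamma$; conclude that $\gamma$ appears twice in the same execution, contradicting Theorem~\ref{th:undist}.

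For the case $\Tsgl{x}$ versus $\T{\gamma}{x-1}$, the plan is to target a reachable $\gamma$ whose local segment has the form $\ldots,0,x-1,1,1,\ldots$ (an $(x-1)$-tower with an empty left neighbor, followed by a $1$-block of size at least two). Activating only the $1$-robot adjacent to the tower enables $\T{\gamma}{x-1}$ (view $(x-1)(1)1$), which pulls it leftward into the tower. The resulting $x$-tower has its left neighbor empty (it already was) and its right neighbor empty (the $1$-robot just left), so locally the segment becomes $\ldots,0,x,0,1,\ldots$. This isolated $x$-tower satisfies the guard of $\Tsgl{x}$; because the view $0(x)0$ is symmetric, the scheduler is free to activate a single tower robot and send it rightward, exactly undoing the previous move and restoring the segment $\ldots,0,x-1,1,1,\ldots$. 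Globally we recover $\gamma$, so $\gamma_t = \gamma_{t+2}$, contradicting Theorem~\ref{th:undist}. When $x=2$ the rule $\T{\gamma}{1}$ would have the symmetric view $1(1)1$ and, by the paper's convention, must be written $\leftarrow\vee\rightarrow$, so it coincides with $\Rswp$ which Lemma~\ref{3} already excludes; hence this subcase is vacuous for $x=2$.

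For the case $\Tsgl{x}$ versus $\T{\delta}{x}$, the plan is analogous but the round-trip is harder to close. Starting from a reachable $\gamma$ whose local segment is $\ldots,0,x,1,1,0,\ldots$, activation of the adjacent $1$-robot under $\T{\delta}{x}$ produces $\ldots,0,x,0,2,0,\ldots$, isolating the $x$-tower and enabling $\Tsgl{x}$. The clean cancellation used above is not directly available, because the right-hand $1$-block has been fused into a $2$-tower rather than left as two $1$-robots. The plan is therefore to exploit the scheduler's freedom on the symmetric view $0(x)0$: activate a single tower robot to move rightward under $\Tsgl{x}$, reaching $\ldots,0,x-1,1,2,0,\ldots$, and then combine this with further activations --- possibly on a mirror-symmetric stretch of the ring, or with iterated shifts of the now-free $(x-1)$-tower --- to produce a configuration already visited earlier in the execution, again contradicting Theorem~\ref{th:undist}.

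The main obstacle in both subcases is establishing reachability of the trap configuration $\gamma$ from the initial $1$-block $\gamma_0$. For small $x$, in particular $x=3$, a single application of $\Rin$ to $\gamma_0 = 1^k$ already delivers the required segment $0,2,1,1,\ldots$. For larger $x$ the $(x-1)$- or $x$-tower must be manufactured by iterated applications of other rules of $\PR$; the hypothesis that $\Tsgl{x}$ (respectively $\T{\delta}{x}$) is ever applicable in some reachable configuration forces $\PR$ to contain such a building chain, and any execution witnessing it should supply the required $\gamma$. The $\T{\delta}{x}$ subcase is the more delicate of the two, because closing the round-trip cleanly without invoking additional unspecified rules of $\PR$ appears to require carefully exploiting the scheduler freedom inherited from the symmetric view $0(x)0$ of the isolated tower.
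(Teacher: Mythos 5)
Your first subcase is exactly the paper's argument: from a local segment $0(x{-}1)11$, rule $\T{\gamma}{x-1}$ folds the adjacent robot into the tower, producing an isolated $x$.tower in a segment $0x01$, and peeling one robot back off with $\Tsgl{x}$ (the view $0(x)0$ being symmetric, the adversary chooses the direction) restores the original configuration, contradicting Theorem~\ref{th:undist}. Your side remark that for $x=2$ the rule $\T{\gamma}{1}$ has the symmetric guard $1(1)1$ and therefore coincides with \Rout-free \Rswp, already excluded by Lemma~\ref{3}, is a correct refinement that the paper does not spell out.

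The genuine gap is the second subcase, $\Tsgl{x}$ versus $\T{\delta}{x}$: you do not prove it. You correctly compute that $\T{\delta}{x}$ turns the segment $0x110$ into $0x020$, and that no single application of $\Tsgl{x}$ undoes this (one obtains $0(x{-}1)120$, not $0x110$), but your proposed continuation --- ``further activations, possibly on a mirror-symmetric stretch of the ring, or with iterated shifts of the now-free $(x{-}1)$-tower'' --- never exhibits a concrete configuration that recurs in the execution, so no contradiction with Theorem~\ref{th:undist} is actually derived. For comparison, the paper disposes of both subcases with the identical two-move undo, asserting that executing $\Tsgl{x}$ on one robot of the isolated tower yields a configuration undistinguishable from the previous one; as you noticed, that literal cancellation only works in the $\gamma$ subcase, so the $\delta$ subcase requires an argument that you have not supplied (and that the paper itself states only summarily). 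Your reachability concern, on the other hand, is not the real obstacle: the lemma is invoked inside the proof of Lemma~\ref{c-general} precisely at configurations of the form $0(x)11$ or $0(x{-}1)11$ that the adversary has already constructed there, so the witness configuration is provided by the caller, and the ``the building chain must exist'' heuristic can be dropped rather than repaired.
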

\begin{proof}
Assume that $\PR$ includes {$\Tsgl{x}$} for some $x \geq 2$.  There are two cases to consider:
Assume by contradiction that $\PR$ includes either Rule~{$\T{\gamma}{x-1}$} or Rule~{$\T{\delta}{x}$}.  Then, 
starting from a configuration $0(x-1)11$ or $0(x)11$, {$\T{\gamma}{x-1}$} or {$\T{\delta}{x}$} is enabled, 
respectively.  In both cases, by executing {$\T{\gamma}{x-1}$} or {$\T{\delta}{x}$}, the configuration locally 
becomes $0x01$ in which Rule~{$\Tsgl{x}$} is enabled for each robot belonging to the $x$.tower. 
By executing {$\Tsgl{x}$} on one robot only, the adversary can lead the system in 
a configuration that is undistinguishable from the previous one.  
A contradiction by Theorem~\ref{th:undist}.
\end{proof}


\begin{lemma}\label{tour_incompatible-2}
Let $\PR$ be a semi-synchronous protocol for $\phi = 1$ and $5 \leq k < n$. If $\PR$ solves the exploration 
problem, then for every odd $x\geq 3$, the rules {$\Tsgl{x}$} and {$\T{\beta}{\lfloor \frac{x}{2} \rfloor}$} are mutually 
exclusive with respect to $\PR$. 
\end{lemma}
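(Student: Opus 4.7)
\emph{Proof outline.} Since $x$ is odd, I write $x=2y+1$ with $y=\lfloor x/2\rfloor\ge 1$; the crucial arithmetic is that two $y$-towers together with one single central robot contain exactly $x$ robots. Assume, for contradiction, that $\PR$ contains both $\Tsgl{x}$ and $\T{\beta}{y}$. Following the pattern of Lemma~\ref{tour_incompatible-1}, my plan is to exhibit two configurations $C_1$, $C_2$ between which the semi-synchronous adversary can oscillate, forcing $C_1$ to appear at two distinct time-steps of one execution of $\PR$; by Theorem~\ref{th:undist} this contradicts the assumption that $\PR$ solves exploration.

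Let $C_1$ be any configuration that locally looks like $0,\,y,\,1,\,y,\,0$: two $y$-towers flanking a single robot, each tower having an empty neighbor on its outer side. Both $y$-towers have the asymmetric view $0(y)1$, and the destination prescribed by $\T{\beta}{y}$ is unique (the inner side). I have the adversary activate exactly the $2y$ robots forming the two $y$-towers (and no other robot). By $\T{\beta}{y}$, the left tower moves right and the right tower moves left; they merge with the central robot into a single tower of size $y+1+y=x$, yielding the local pattern $0,\,0,\,x,\,0,\,0$. Call this configuration $C_2$.

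In $C_2$ the $x$-tower has the symmetric view $0(x)0$, so $\Tsgl{x}$ lets the scheduler choose a direction independently per activated robot. I have the adversary activate exactly $2y$ of the $x$ robots, sending $y$ to the left and $y$ to the right while the remaining one stays at the central node. The result is precisely the $0,\,y,\,1,\,y,\,0$ pattern of $C_1$. Thus $\PR$ admits an execution fragment $C_1\to C_2\to C_1$ in which the two $C_1$-occurrences are identical and hence undistinguished, contradicting Theorem~\ref{th:undist}.

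The main obstacle is the realization of $C_1$: it must appear along some execution starting from a towerless initial configuration, and when $x=3$ Lemma~\ref{lem:kt5} forces $k\ge 5$, leaving surplus robots to dispose of. As in Lemma~\ref{tour_incompatible-1}, the remedy is to place the surplus robots far from the oscillating local pattern on the ring and have the adversary keep them inert during the two cycle steps, so that nothing but the local argument above is needed.
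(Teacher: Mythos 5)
Your proof is correct and is essentially the paper's own argument: the paper runs the same two-step oscillation between the isolated $x$-tower ($0(x)0$, your $C_2$) and the $y,1,y$ pattern (your $C_1$), merely starting from the tower side — activating $x-1$ robots under $\Tsgl{x}$ to split into two $\lfloor x/2\rfloor$-towers around a central robot, then pulling them back with $\T{\beta}{\lfloor x/2\rfloor}$ to recreate an undistinguishable configuration. Your extra remark about realizing $C_1$ along an execution and parking the surplus robots is a point the paper leaves implicit (reachability is supplied by the context of Lemma~\ref{c-general}), but it does not change the argument.
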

\begin{proof}
Assume by contradiction that $\PR$ includes both {$\Tsgl{x}$} and {$\T{\beta}{\lfloor \frac{x}{2} \rfloor}$} for some odd 
$x \geq 3$.
Starting from a configuration in which {$\Tsgl{x}$} is enabled, by executing {$\Tsgl{x}$}, the adversary 
can execute {$\Tsgl{x}$} on $x-1$ robots, moving the half of them on the right side and 
the half of them on the left side.  By doing that, the system reaches either a towerless $1$.block or 
an $S^{\lfloor \frac{x}{2} \rfloor}$-sequence, depending on whether $x=3$ or $x\geq 5$, respectively. 
In both cases, {$\T{\beta}{\lfloor \frac{x}{2} \rfloor}$} is enabled on each robot belonging to an extremity 
of the $1$.block (if $x=3$, then {$\T{\beta}{\lfloor \frac{x}{2} \rfloor}$} corresponds to Rule~{\Rin}). 
By executing {$\T{\beta}{\lfloor \frac{x}{2} \rfloor}$} on them, 
the adversary can bring the system in a configuration that is undistinguishable from the previous one.  
A contradiction (Theorem~\ref{th:undist}).
\end{proof}

\begin{lemma}\label{c-general}
Let $\PR$ be a semi-synchronous protocol for $\phi = 1$ and $5 \leq k < n$. If $\PR$ solves the exploration 
problem, then for every $x\geq 1$, $\PR$ includes {$\T{\beta}{x}$} only. 
\end{lemma}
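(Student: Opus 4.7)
The plan is induction on $x$, using Theorem~\ref{th:undist} together with Lemmas~\ref{2}, \ref{3}, \ref{tour_incompatible-1}, and \ref{tour_incompatible-2}.

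For the base case $x=1$, rule $\T{\alpha}{1}$ is exactly \Rout\ and is excluded by Lemma~\ref{2}. Because the local view $1(1)1$ driving $\T{\gamma}{1}$ and $\T{\delta}{1}$ is symmetric, these two rules jointly constitute \Rswp, excluded by Lemma~\ref{3}. Corollary~\ref{cor:init} then forces $\T{\beta}{1}=\Rin$ to belong to $\PR$.

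For the inductive step, fix $x\geq 2$ and assume the conclusion for every $y<x$. The preparatory task is to exhibit a reachable configuration $C_x$ containing a border $x$-tower of the local form $\ldots 0\,(x)\,1\,1\ldots$. Starting from $\gamma_0$ the scheduler activates \emph{all} robots at the current leftmost occupied position at each step; step~$1$ fires $\T{\beta}{1}=\Rin$ (forced by Corollary~\ref{cor:init}), producing a $2$-tower; at step $y$ ($2\leq y\leq x-1$) the border $y$-tower has view $0(y)1$, and by the induction hypothesis the only rule of level $y$ that can fire is $\T{\beta}{y}$, so simultaneously activating all $y$ tower robots merges them with their right-hand singleton and yields a $(y+1)$-tower. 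After $x-1$ such steps one obtains $C_x$, provided $x\leq k-1$ (the case $x\geq k$ being vacuous because no $x$-tower can arise along any reachable prefix).

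From $C_x$ I would rule out the three forbidden rules. For $\T{\alpha}{x}$: the scheduler activates a single border tower robot, which moves outward by $\T{\alpha}{x}$; chaining this adversarial choice with further $\T{\beta}{\cdot}$/$\T{\alpha}{x}$ activations retraces a portion of the construction of $C_x$ and produces a configuration undistinguishable from $C_x$ or from one of its ancestors, contradicting Theorem~\ref{th:undist} in the spirit of Lemma~\ref{2}. For $\T{\gamma}{x}$ and $\T{\delta}{x}$: extend the construction by further $\T{\beta}{\cdot}$-moves so as to detach the border $x$-tower from the rest of the $1$.block, reaching a reachable configuration containing an \emph{isolated} $x$-tower with local view $0(x)0$. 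Since $k<n$ and $\PR$ must visit every node before terminating, only $\Tsgl{x}$ can mobilize such a tower, so $\Tsgl{x}\in\PR$; Lemma~\ref{tour_incompatible-1} then excludes $\T{\delta}{x}$. An analogous construction producing an isolated $(x+1)$-tower forces $\Tsgl{x+1}\in\PR$, so Lemma~\ref{tour_incompatible-1} applied at index $x+1$ excludes $\T{\gamma}{x}$; Lemma~\ref{tour_incompatible-2} takes care of the odd-parity edge cases.

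The main obstacle is the tower-building construction together with its retracing and detachment variants: these are delicate adversarial schedules whose correctness depends on tracking, step by step, exactly which rules are enabled along the scheduler's chosen prefix (with the induction hypothesis being used essentially to forbid all competitors), and on exploiting the hypothesis $5\leq k<n$ to guarantee enough robots and enough ring length for all intermediate configurations to be reached inside $\PR$.
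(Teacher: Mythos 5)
Your base case and your exclusion of $\T{\alpha}{x}$ are sound and essentially mirror the paper's argument: the paper also builds ever-larger border towers by repeated {$\T{\beta}{y}$}-activations (it does so symmetrically, obtaining an $S^x$-sequence via Lemma~\ref{lem:sur}, where you work one-sidedly), and it kills $\T{\alpha}{x}$ by moving $x-1$ robots of the $x$.tower back out, recreating the previous configuration $\gamma_{x-2}$ and contradicting Theorem~\ref{th:undist}. Up to that point the two proofs are the same in substance.

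The genuine gap is in your treatment of $\T{\gamma}{x}$ and $\T{\delta}{x}$. You claim you can ``extend the construction by further $\T{\beta}{\cdot}$-moves so as to detach the border $x$-tower'' and reach an isolated $x$.tower with view $0(x)0$, whence $\Tsgl{x}$ is forced into $\PR$ and Lemma~\ref{tour_incompatible-1} does the rest. But every rule {$\T{\beta}{y}$} has guard $0(y)1$ and moves the border tower \emph{inward}, merging it with the adjacent singleton; starting from a single $1$.block, these moves only ever shrink the block and grow the border towers, and the tower on either side always keeps an occupied neighbour on its block side (in the terminal state of this dynamics the two towers are adjacent to each other). No $\T{\beta}{\cdot}$-only schedule ever produces the empty node on the block side that an isolated $0(x)0$ tower requires; the only ways to create that gap are $\T{\alpha}{x}$ (which you have just excluded) or precisely the rules $\T{\gamma}{x}$/$\T{\delta}{x}$ whose membership is in question. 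So the reachable isolated tower does not exist under your hypotheses, $\Tsgl{x}\in\PR$ is not forced, and the exclusion of $\T{\gamma}{x}$ and $\T{\delta}{x}$ collapses. (Even granting the isolated tower, ``only $\Tsgl{x}$ can mobilize it, hence $\Tsgl{x}\in\PR$'' needs the additional facts that every other robot is disabled and that some node is still unvisited, neither of which you establish.) The paper runs this part in the opposite direction: it \emph{assumes} $\T{\gamma}{x}$ or $\T{\delta}{x}$ belongs to $\PR$, lets that rule create the isolated $(x{+}1)$.towers (Figures~\ref{YY} and~\ref{x+1-bis}), invokes Lemma~\ref{tour_incompatible-1} to conclude those towers are then \emph{frozen}, and follows the residual $1$.block through the case analysis on $\kappa = k \bmod (\tau S)$ until it reaches either a deadlock with unexplored nodes or a repeated (undistinguishable) configuration. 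That case analysis on the leftover robots is the real content of the inductive step, and it is missing from your proposal.
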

\begin{proof}
The lemma is proven for $x=1$ by Corollary~\ref{cor:init}.  Assume by contradiction that there exists $x\geq 2$ 
such that $\PR$ does not include Rule {$\T{\beta}{x}$} only.  From Lemma~\ref{lem:sur}, there 
exists some executions of $\PR$ leading to a configuration containing an $S^2$-sequence.
Let us consider the smallest $x$ satisfying the above condition.  So, starting from an $S^2$-sequence, 
 for every $y \in [1,x-1]$, by activating Rule {$\T{\beta}{y}$} on every robot at the extremity of the $S^y$-sequence, 
in $x-1$ steps, the adversary can lead the system into a configuration $\gamma_{x-1}$ containing an $S^x$-sequence.  
Since $\PR$ does not include only Rule {$\T{\beta}{x}$}, there are two cases:
\begin{enumerate}
\item $\PR$ includes Rule {$\T{\alpha}{x}$}.  In that case, by activating $x-1$ robots on each $x$.tower 
of the $S^x$-sequence, the system reaches an $S^{x-1}$-sequence.  This configuration is undistinguishable with 
the configuration $\gamma_{x-2}$, the previous configuration of $\gamma_{x-1}$ in the execution. 

\item $\PR$ does not include Rule {$\T{\alpha}{x}$}.  Then, the neighbors of the towers are the only robots that
can be activated with either {$\T{\gamma}{x}$} or {$\T{\delta}{x}$}. 
In the first case ({$\T{\gamma}{x}$} is enabled), the system reaches a configuration similar 
to the one shown in Figure~\ref{YY}, {\em i.e.}, a $1$.block surrounded by two isolated $x+1$.towers. The size of the $1$.block is equal 
to $k-2(x+1)$. By Lemma~\ref{tour_incompatible-1}, $\PR$ does not include
  {$\Tsgl{x+1}$}. So, none of the robots belonging to a tower is enabled.  Furthermore, 
the local view of both extremities of the $1$-block being similar to the initial configuration $\gamma_0$, 
  from Corollary~\ref{cor:init}, both can be activated with Rule~{\Rin} by the adversary.  Then, the system reaches 
  a configuration similar to the one shown in Figure~\ref{x+1-bis}, \ie an $S^2$-sequence, surrounded with $2$ towers 
  and separated by one single empty node. In the second case ({$\T{\delta}{x}$} is enabled), the system reaches a configuration that is also similar to the one shown in Figure~\ref{x+1-bis}. Following the same reasoning as above, 
  the system reaches a configuration similar to the one shown in Figures~(\ref{kequal1}, \ref{kbig3}) or in Figures~(\ref{kequal0}, \ref{kbigeq2})  
depending on whether $k$ is odd or even.

\begin{figure}[H]
 \begin{minipage}[b]{.46\linewidth}
 \centering\epsfig{figure=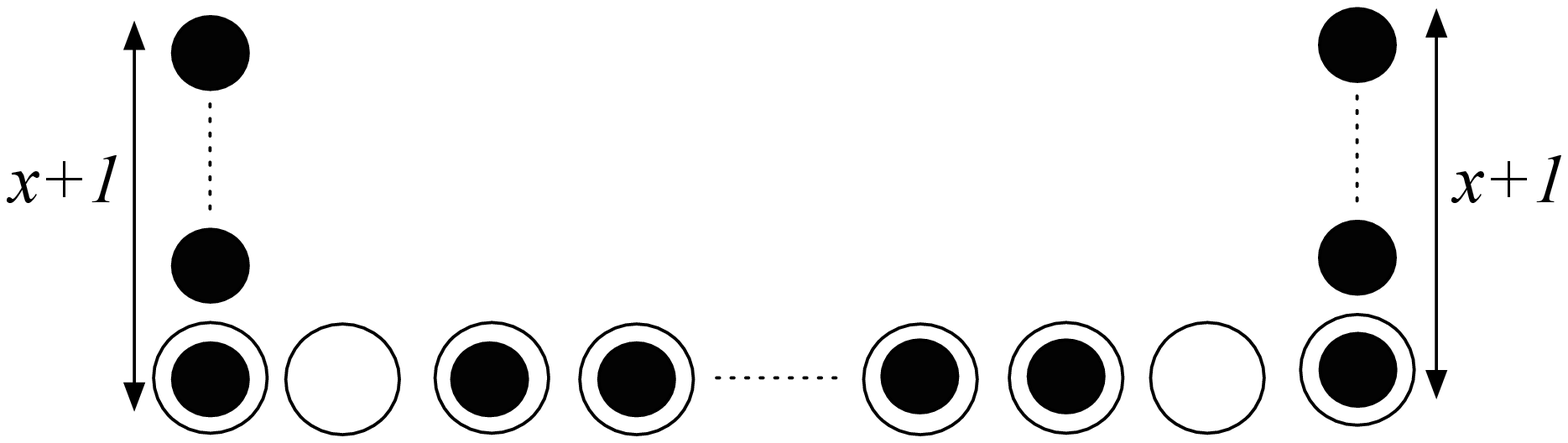,width=5cm}
 \caption{{$\T{\gamma}{x}$} is executed. \label{YY}}
\end{minipage} \hfill
 \begin{minipage}[b]{.46\linewidth}
  \centering\epsfig{figure=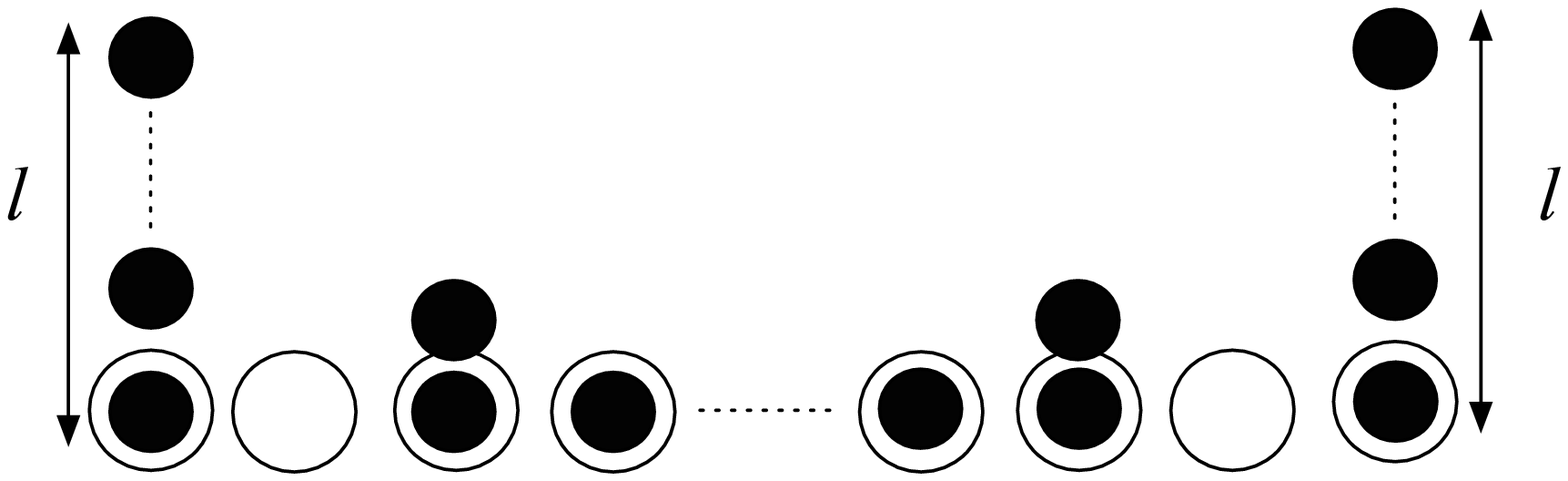,width=5cm}
 \caption{{$\T{\delta}{x}$} is executed \label{x+1-bis}}
 \end{minipage}\hfill
 \begin{minipage}[b]{.46\linewidth}
  \centering\epsfig{figure=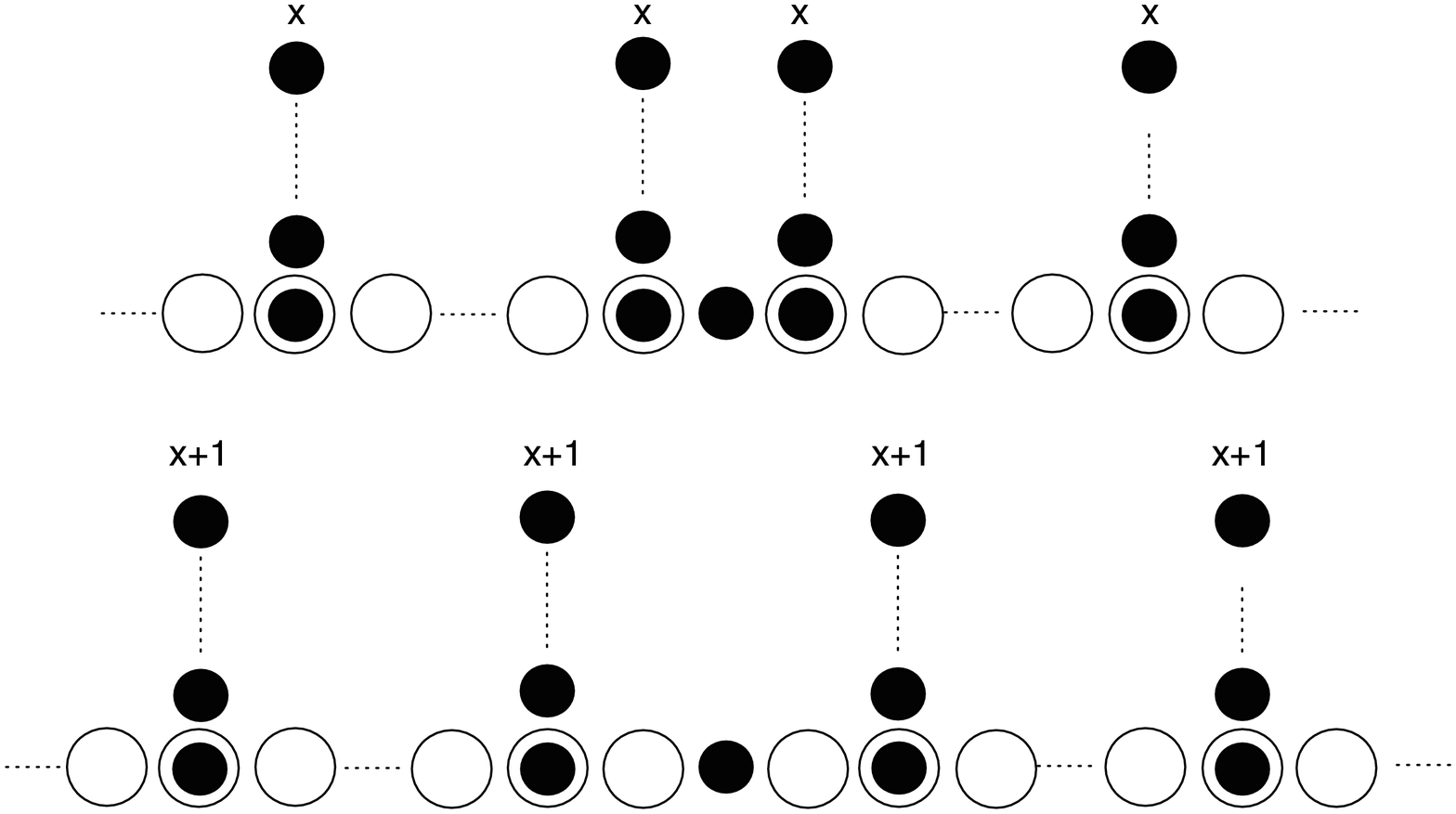,width=6cm}
 \caption{The case where $\kappa=1$  \label{kequal1}}
 \end{minipage}\hfill
 \begin{minipage}[b]{.46\linewidth}
  \centering\epsfig{figure=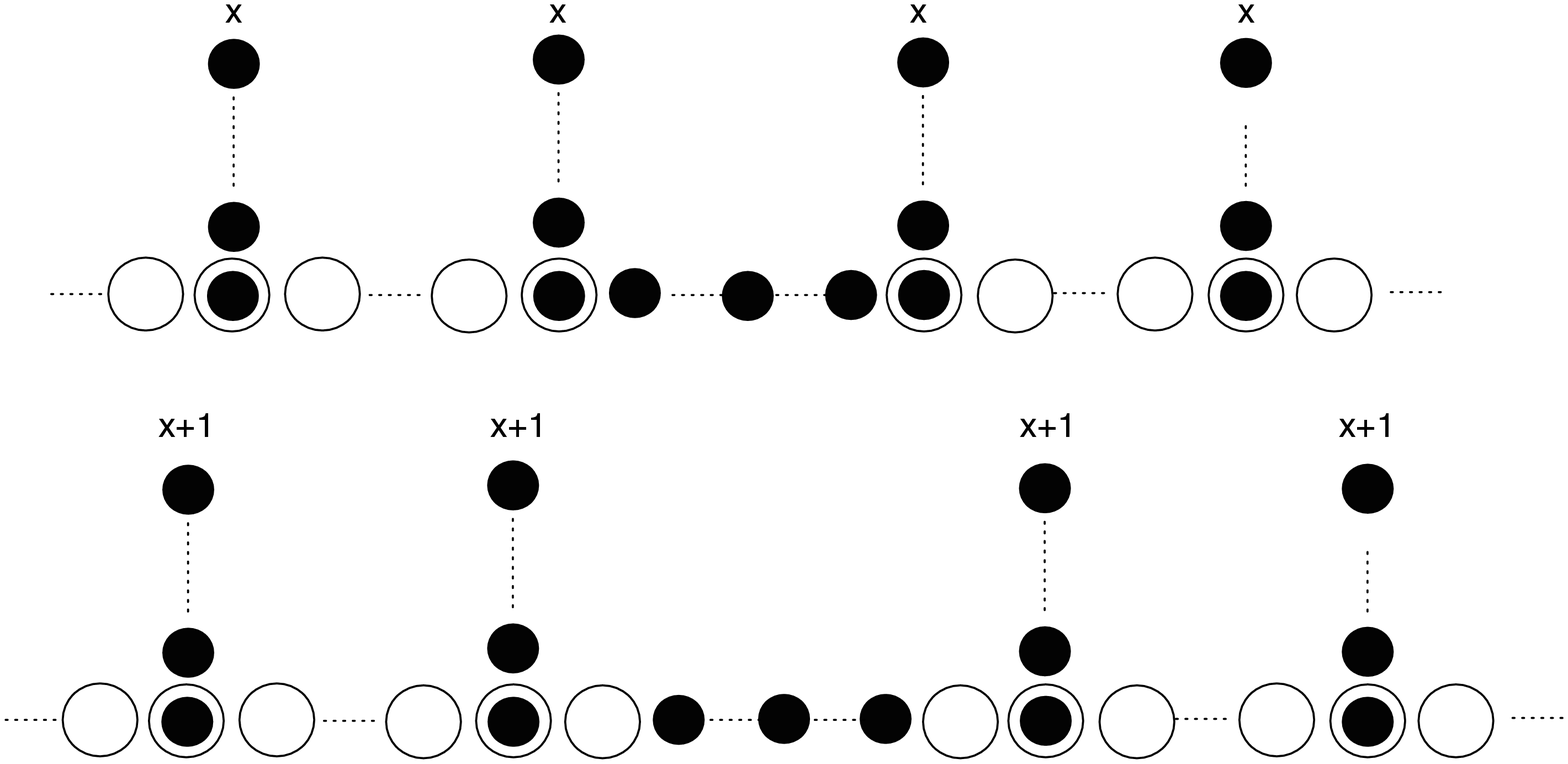,width=6cm}
 \caption{The case where $\kappa$ is odd and $\kappa\geq 3$   \label{kbig3}}
 \end{minipage}\hfill
 \begin{minipage}[b]{.46\linewidth}
  \centering\epsfig{figure=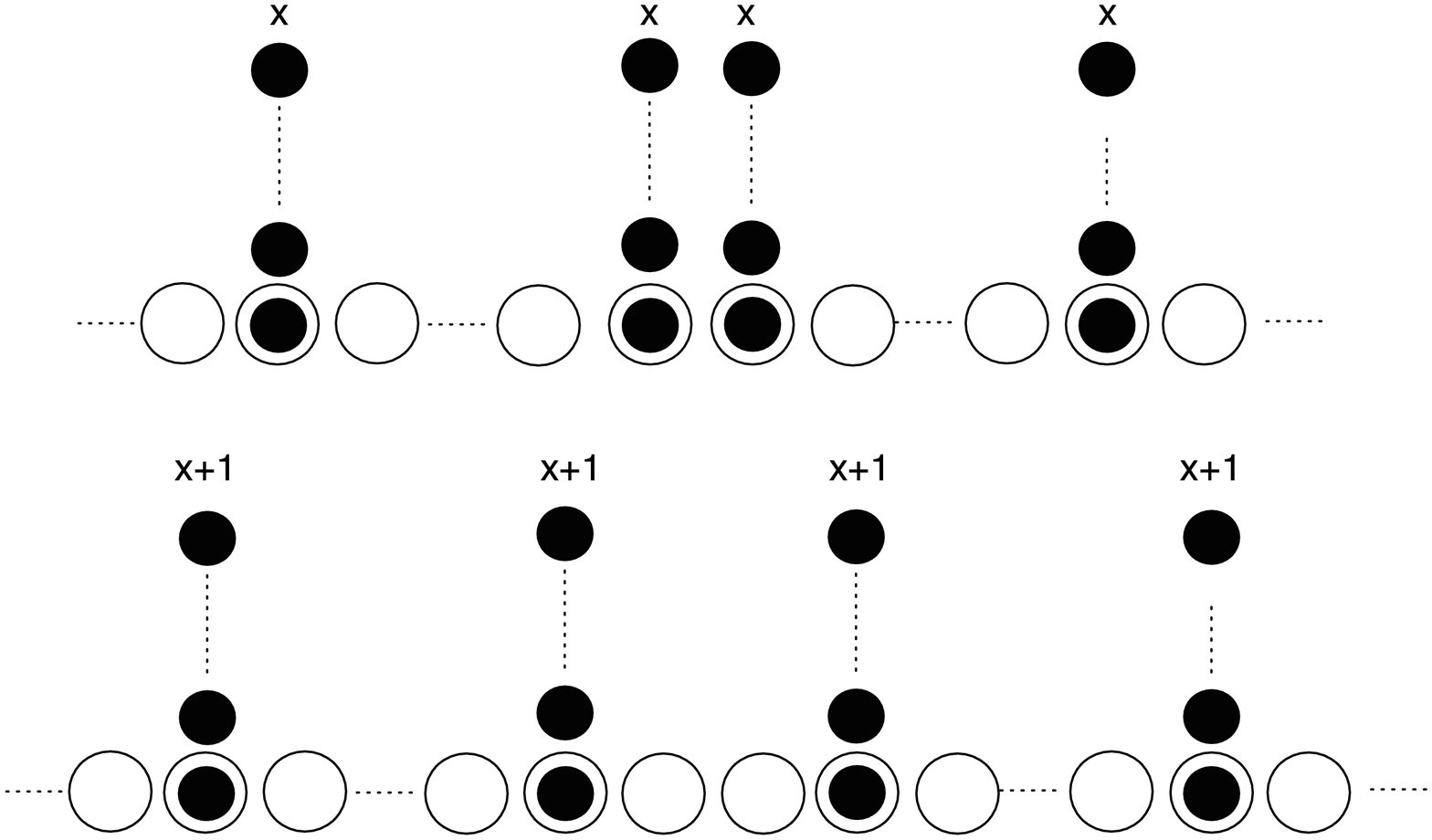,width=6cm}
 \caption{The case where $\kappa=0$  \label{kequal0}}
 \end{minipage}\hfill
 \begin{minipage}[b]{.46\linewidth}
  \centering\epsfig{figure=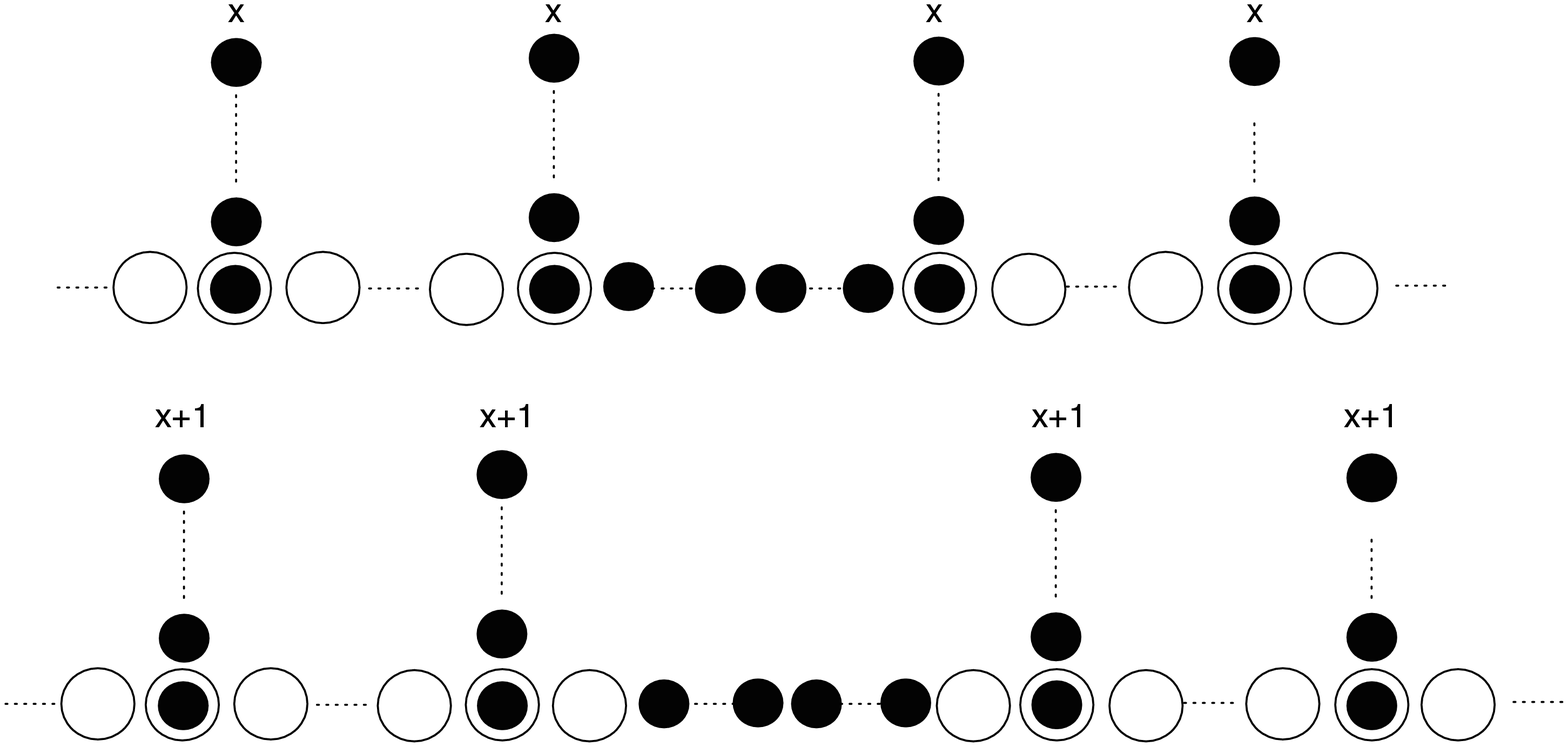,width=6cm}
 \caption{The case where $\kappa$ is even and $\kappa\geq 2$   \label{kbigeq2}}
 \end{minipage}\hfill
\end{figure}

  Let $\tau$, $S$, and $\kappa $ be the number of isolated towers, their size, and the number of robots surrounded 
  by $\tau$ towers, respectively. 
  By construction, $\tau = \lfloor \frac{k}{S} \rfloor$, $\kappa = k \mod (\tau S)$, and $\kappa  \leq 2x$.
  There are two cases to consider:

  \begin{enumerate}
  \item $\kappa $ is odd. Again consider two cases:  
     \begin{enumerate}
     \item $\kappa  \geq 3$.  
     Then, $\lfloor \frac{\kappa }{2} \rfloor < x$ and by assumption, for every $y \in [1,\lfloor \frac{\kappa}{2}\rfloor ]$,
     $\PR$ includes the rules {$\T{\beta}{y}$}.  So, the adversary can lead the system into a configuration containing a single 
     $\kappa +1$.tower surrounded with $\tau$ towers.  Since 
     $\frac{\kappa }{2} < x$ and since by assumption, for every $y \in [2, \frac{\kappa }{2}\rfloor ]$,
     $\PR$ includes Rule~{$\T{\beta}{y}$}, by Lemma~\ref{tour_incompatible-2}, $\PR$ does not include 
     Rule~{$\Tsgl{\kappa}$}.  So, no robot of the $\kappa$.tower is enabled.  So, the system is at a deadlock and
     some nodes are not visited by at least one robot.
     This contradicts that $\PR$ is an exploration protocol. 
     \item $\kappa  = 1$. The only applicable rules on the single robot is either ($i$) Rule~{\Rsgl} or ($ii$) Rule~{\SP} defined as follow: $x(1)x :: \leftarrow \vee \rightarrow$ depending on whether $\PR$ includes {$\T{\gamma}{x}$} or {$\T{\delta}{x}$}, respectively. In the first case ($i$), after executing
     the rule, the single robot is located on the neighboring node of one of the towers, \ie a configuration containing 
     $0(y)10$ where $y$ is equal to $x+1$. 
     The only possible action is then to create an $y+1$.tower. If $\PR$ includes {$\Tsgl{y+1}$}, then by selecting a single 
     robot of the $y+1$.tower, the adversary can lead the system into the previous configuration.  
     A contradiction (Theorem~\ref{th:undist}). In the second case ($ii$), an $x+1$.tower is created. As in Case ($i$), if $\PR$ includes {$\Tsgl{x+1}$}, then by selecting a single 
     robot of the $x+1$.tower, the adversary can lead the system into the previous configuration.  
     A contradiction (Theorem~\ref{th:undist}).
     \end{enumerate}
  \item $\kappa$ is even. There are three cases.  
     \begin{enumerate}
     \item 
        $\kappa  =0$. Since none of the robots of $\tau$ towers is enabled, the system is at a deadlock.  
        This contradicts that $\PR$ is an exploration protocol.
     \item $\kappa  =2$. in this case there is a single $1$.block of size $2$ having either two neighboring $x$.towers at distance $1$ or $x+1$.towers at distance $2$ depending on whether $\PR$ includes {$\T{\gamma}{x}$} or {$\T{\delta}{x}$}, respectively. In both cases, the only applicable rule is $\T{\delta}{x}$. Assume that the scheduler activates both robots at the same time. Then, the system remains into an undistinguishable configuration. A contradiction.
     \item 
        $\kappa > 2$.
     Then, $\frac{\kappa }{2} < x$ and by assumption, for every $y \in [2, \frac{\kappa }{2}\rfloor ]$,
     $\PR$ includes the rules {$\T{\beta}{y}$}, the adversary leads the system into a configuration containing two 
     neighboring $\kappa'$.towers, 
     $\kappa'=\frac{\kappa }{2}$---refer to Figure~\ref{kappabig2}.  
     Assume that $\PR$ includes Rule~{$\Tdbl{\kappa '}$} defined as follows: 
     $0(\kappa')\kappa' :: \leftarrow$.  Then by choosing to execute Rule~{$\Tdbl{\kappa'}$} on $\kappa'-1$ robots
     on each tower (since Rule~{\Rout} cannot be a part of $\PR$, $\kappa > 2$), the adversary brings the system 
     into an $S^{\kappa'-1}$-sequence that is undistinguisable from the
     previous configuration.  If Rule~{$\Tdbl{\kappa'}$} is defined as follows: 
     $0(\kappa')\kappa' :: \rightarrow$. Then by choosing to execute Rule~{$\Tdbl{\kappa'}$} 
     on the same number of 
     robots on each $\kappa'$.tower, the system remains into an undistinguishable configuration.  In both cases,
     this contradicts Theorem~\ref{th:undist}.
     \end{enumerate}
  \end{enumerate}
\end{enumerate}
\end{proof}

\begin{figure}
 \centering\epsfig{figure=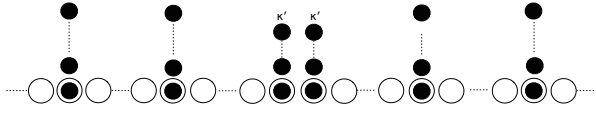,width=8cm}
 \caption{The reached configuration when $\kappa > 2$. }\label{kappabig2}
\end{figure}

Let us now suppose that $\gamma_0$ is any arbitrary starting configuration. The following lemma holds:

\begin{lemma}
Let $\PR$ be a semi-synchronous protocol for $\phi = 1$ and $2\leq k <n$. If $\PR$ solves the exploration problem, 
then Rule {\Rout} and Rule {\Rsgl} are mutually exclusive with respect to $\PR$.
\end{lemma}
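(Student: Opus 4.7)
The plan is to assume, toward a contradiction, that $\PR$ contains both $\Rout$ and $\Rsgl$, and to build, from a conveniently chosen valid $\gamma_0$, an execution in which two distinct but undistinguished configurations appear. Theorem~\ref{th:undist} then gives the contradiction.

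For $\gamma_0$ I would take the (legal, towerless) configuration in which all $k$ robots form a single $1$.block at nodes $u_0, u_1, \dots, u_{k-1}$. This configuration is valid since $2 \le k < n$ and every robot sits at distance $1$ of another. In the degenerate sub-case $n = k+1$, the scheduler activates the border robot $r$ at $u_0$, whose view $0(1)1$ fires $\Rout$, sending $r$ onto the unique empty node $u_{n-1}$. The resulting $\gamma_1$ is the $1$.block $u_1, \dots, u_{n-1}$, obtained from $\gamma_0$ by a rotation of one position: distinct from $\gamma_0$ but undistinguished from it, so Theorem~\ref{th:undist} already closes this case.

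For the main sub-case $n \ge k+2$, I again start by activating the border robot at $u_0$ with $\Rout$, reaching $\gamma_1$ with an isolated robot at $u_{n-1}$ and a remaining $1$.block $u_1,\dots,u_{k-1}$. Since both $u_{n-2}$ and $u_0$ are empty, the isolated robot has the symmetric view $0(1)0$, so $\Rsgl$ is enabled and its direction is picked by the scheduler. I then have the scheduler repeatedly activate only this isolated robot, always choosing $\leftarrow$, which pushes the robot through $u_{n-2}, u_{n-3}, \dots, u_{k+1}$ and finally onto $u_k$. The key invariant, which I verify by induction along the chain, is that whenever the moving robot sits at some $u_j$ with $k+1 \le j \le n-1$, its two neighbors are empty (its left neighbor $u_{j-1}$ lies outside the residual $1$.block $u_1,\dots,u_{k-1}$, and its right neighbor $u_{j+1}$ is a node already vacated in a previous step), so the view stays $0(1)0$ and $\Rsgl$ keeps being the only applicable rule on it with direction still under the scheduler's control. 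After exactly $n-k$ steps from $\gamma_0$, the configuration is the $1$.block $u_1, u_2, \dots, u_k$, which is once more a rotation of $\gamma_0$, hence distinct but undistinguished from it, and Theorem~\ref{th:undist} is again contradicted.

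The main point I expect to spell out carefully is precisely this invariant across the $n-k-1$ intermediate $\Rsgl$ steps, together with the fact that no other rule of $\PR$ can interfere: only one robot is activated per step, and on the remaining $1$.block the only candidate moves would be $\Rin$, $\Rout$ or $\Rswp$ on $r_1,\dots,r_{k-1}$, none of which the scheduler is forced to pick. Fairness of the scheduler is not an obstacle, since the finite prefix $\gamma_0,\dots,\gamma_{n-k}$ can be embedded into any full execution of $\PR$ without altering the forbidden pair $(\gamma_0,\gamma_{n-k})$ already present in it.
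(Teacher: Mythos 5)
Your proof is correct, but it follows a genuinely different construction from the paper's. The paper picks as $\gamma_0$ a configuration with a $1$.block flanked by two isolated robots, each at distance $2$ from the block; it moves both isolated robots inward with {\Rsgl} and then back outward with {\Rout}, returning in two steps to a configuration identical to $\gamma_0$. You instead start from a single $1$.block, peel off a border robot with {\Rout}, and then use the scheduler's freedom on the symmetric view $0(1)0$ to walk that robot all the way around the free arc with {\Rsgl} until it reattaches at the opposite border, producing a \emph{rotation} of $\gamma_0$ after $n-k$ steps. Both arguments correctly invoke Theorem~\ref{th:undist} (which forbids distinct undistinguished configurations, whether identical-looking or merely rotations/mirrors of one another), and your invariant that the walking robot's view stays $0(1)0$ on every node $u_j$ with $k+1\le j\le n-1$ is easily verified. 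Two remarks on what each approach buys. First, your choice of $\gamma_0$ is arguably cleaner: a single $1$.block satisfies the constraint, stated in the introduction, that every robot start at distance at most $\phi$ from another, whereas the paper's isolated robots at distance $2$ do not (they are legal only under the bare ``towerless'' reading of Definition~\ref{def:exploration}). Second, your degenerate sub-case $n=k+1$ derives the contradiction from {\Rout} alone, without using {\Rsgl} at all; this is logically harmless for mutual exclusivity (any contradiction from the conjunction suffices) and in fact just reproduces the first bullet of Lemma~\ref{2}, but it is worth being explicit that only the $n\ge k+2$ branch genuinely exercises the interaction of the two rules. Your closing observations about the distributed scheduler being free to select a single robot per step, and fairness constraining only the infinite suffix, are exactly the points that need saying and are handled at the same level of rigor as the paper's own proofs.
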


\begin{proof}
Assume by contradiction that $\PR$ includes both Rule {\Rout} and Rule {\Rsgl}. Note that Rule {\Rout} (respectively,
Rule {\Rsgl}) is enabled only on robots that are located at the border of an $1$.block (resp., on isolated robots).
Suppose that the initial configuration $\gamma_0$ contains two isolated robots, $r_1$ and $r_2$, that is at distance 
$2$ from a $1$.block, one at each side. 
By executing {\Rsgl} on both $r_1$ and $r_2$, the adversary can bring the system in a configuration $\gamma_1$ where both 
$r_1$ and $r_2$ are located at the border of the $1$.block. Then, Rule {\Rout} becomes enabled on both $r_1$ and $r_2$. 
By executing {\Rout} on both $r_1$ and $r_2$, the adversary brings the system in a configuration $\gamma_2$ that 
is undistinguishable from $\gamma_0$.  A contradiction. 
\end{proof}

\begin{lemma}
Let $\PR$ be a semi-synchronous protocol for $\phi = 1$ and $2\leq k <n$. If $\PR$ solves the exploration problem, then $\PR$ does not include Rule {\Rout}.
\end{lemma}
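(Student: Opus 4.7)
The plan is to assume for contradiction that $\PR$ includes Rule~{\Rout}; by the preceding lemma this forces $\PR$ to exclude Rule~{\Rsgl}, so any isolated robot produced along an execution of $\PR$ is henceforth immobile (its local view $0(1)0$ matches no rule of $\PR$). I would then exhibit an adversarial computation starting from the initial configuration $\gamma_0$ consisting of a single $1$.block of size $k$, which is towerless and respects the distance condition that every robot has a neighbor within $\phi=1$.

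The adversary first activates one border robot with~{\Rout}, reaching a ``trap'' configuration $T$ consisting of an isolated robot separated by an empty node from a $1$.block of size $k-1$. For $k \geq 3$ I would reuse the chain-activation argument of Lemma~\ref{2}: successively activating the left-border robot of the current $1$.block with~{\Rout} shifts the entire isolated-plus-block structure by one position per step, so after $k-2$ such activations the configuration is a rotation of $T$, hence undistinguished from $T$, contradicting Theorem~\ref{th:undist}. Crucially this chain argument uses only~{\Rout} and is therefore unaffected by the absence of~{\Rsgl}.

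For $k = 2$ only a single~{\Rout} activation is possible. If $n = 3$, the wrap-around makes the resulting configuration a rotation of $\gamma_0$, again contradicting Theorem~\ref{th:undist}. If $n \geq 4$, the two robots end up isolated at distance two, each with local view $0(1)0$; since~{\Rsgl} is absent and each of~{\Rin},~{\Rswp},~{\Rout} requires an adjacent occupied node, no rule is enabled, so the configuration is terminal. But only three nodes have been visited, leaving at least $n - 3 \geq 1$ unvisited, contradicting the exploration specification. The main obstacle is precisely this $k=2$, $n \geq 4$ sub-case, where Lemma~\ref{2}'s original argument implicitly relied on~{\Rsgl}; here we must instead rule out further progress by exhaustively verifying that no rule of $\PR$ is enabled on two isolated robots, and then invoke the exploration requirement rather than Theorem~\ref{th:undist}.
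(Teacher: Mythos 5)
Your proof is correct, but it follows a genuinely different route from the paper's. To prove this version of the lemma (stated once the initial configuration is allowed to be arbitrary), the paper starts from a configuration containing two $1$.blocks separated by a single empty node on one side, and has the adversary shuttle border robots one by one from the first block into the second and then back again, returning to a configuration undistinguishable from an earlier one and contradicting Theorem~\ref{th:undist}; that construction presupposes enough robots to form two blocks. You instead exploit the fact that the exploration specification quantifies over every admissible initial configuration, so it is enough to defeat the protocol from the single-$1$.block configuration: for $k\geq 3$ you rerun the chain-activation argument of Lemma~\ref{2}, which uses only {\Rout} (note that {\Rin} is automatically excluded, since it shares the guard $0(1)1$ with {\Rout} and the protocol is deterministic) and after $k-2$ activations produces a rotated mirror of the trap configuration $T$, contradicting Theorem~\ref{th:undist}. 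Your real addition is the $k=2$ case, where you correctly observe that $T$ is distinguishable from $\gamma_0$ (so Theorem~\ref{th:undist} gives nothing) and instead invoke the immediately preceding mutual-exclusivity lemma to rule out {\Rsgl}: the two isolated robots then have view $0(1)0$ with no enabled rule, the configuration is terminal with at most three visited nodes, and clause ($ii$) of Definition~\ref{def:exploration} fails for $n\geq 4$, while $n=3$ wraps around to a rotation of $\gamma_0$. Your route covers all $2\leq k<n$ uniformly from one initial configuration and makes explicit the repair of the one sub-case of Lemma~\ref{2} that touched {\Rsgl}; the paper's route instead exhibits the obstruction directly inside a multi-block initial configuration. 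Both arguments are sound proofs of the stated lemma.
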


\begin{proof}
By contradiction, assume that $\PR$ includes Rule {\Rout}. Note that Rule {\Rout} is only enabled on robots that are at the border of a $1$.block. Suppose that $\gamma_0$ contains two $1$.blocks at distance $2$ from each other. Assume that the scheduler activates only one extremity of one of the two $1$.blocks. Let us refer to the $1$.block from which one extremity is activated by $B1$ and let refer to the other $1$.block by $B2$. Let $u_i u_{i+1} \dots u_{i+m}$ be the 1.block $B1$ in $\gamma_0$ and let $u_{i-2} u_{i-3} \dots u_{i-m'}$ be the $1$.block $B2$ ($m$ and $m'$ are the size of respectively $B1$ and $B2$). Denote $r_j$ the robot located on node $u_{i+j}$ ($0\leq i <k$). 
Without lost of generality, assume that the adversary activates $r_0$ only in $\gamma_0$. 
Once $r_0$ moves, it joins $B2$ in $\gamma_1$. Assume that in $\gamma_1$ the scheduler activates $r_1$ that executes
{\Rout} again.  More generally, assume that $\forall$ $i \in [1, \dots, m-2]$, in $\gamma_{1+i}$, the adversary activates
$r_{i+1}$. Once the configuration $\gamma_{k-1}$ is reached, the scheduler activates the same robots in the reverse order
(note that these robots are part of $B2$). At each time, one robot is activated, it joins back $B1$. When $m-2$ robots
are activated, the configuration reached is undistinguishable from $\gamma_1$. A contradiction.
\end{proof}

Observe that both Lemmas \ref{3-4} and \ref{3} are valid when $\gamma_0$ is any towerless configuration. In the case
where there are isolated robots that are at distance $2$ from a $1$.block, the scheduler activates them, such that
they join a $1$.block by executing {\Rsgl}. The only rule that can be enabled is {\Rin}. Each $1$.block, behaves
independently from the other $1$.blocks. Lemmas \ref{lem:kt5} to \ref{c-general} are also valid. Therefore:

\begin{theorem}\label{ATOM1}
No deterministic exploration protocol exists in the semi-synchronous model (ATOM) for $\phi = 1$, $n>1$, and $1 \leq k < n$. 
\end{theorem}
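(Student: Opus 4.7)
The plan is to synthesize the preceding lemmas to force any hypothetical exploration protocol $\PR$ for $\phi=1$ into an extremely restricted form, and then exhibit an initial configuration from which $\PR$ demonstrably fails to visit every node. I would dispatch the degenerate case $k=1$ separately: a lone robot always has the symmetric view $0(1)0$, so the only rule that could possibly fire at it is {\Rsgl}. If $\PR$ does not contain {\Rsgl}, the robot never moves and, since $n>1$, some node is never visited; if $\PR$ does contain {\Rsgl}, then the robot is forever enabled, so $\PR$ has no terminal configuration and termination fails. Either way, $k=1$ admits no exploration protocol.

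For $2\leq k<n$, the proof is purely synthetic. The two unnumbered lemmas immediately before the theorem extend the exclusion of {\Rout} to arbitrary towerless starting configurations, and the explicit remark that follows confirms that Lemmas~\ref{3-4}, \ref{3}, and \ref{lem:kt5}--\ref{c-general} remain valid in that generality. Combined, they yield that $\PR$ contains neither {\Rout} nor {\Rswp}; that {\Rin} is the only one of the four single-multiplicity rules that $\PR$ may contain; that $k\geq 5$ (by Lemma~\ref{lem:kt5}); and that the entire rule set of $\PR$ consists only of rules of the form $\T{\beta}{x}$, namely $0(x)1 :: \rightarrow$, for $x\geq 1$ (by Lemma~\ref{c-general}). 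In particular, no rule of the form $\T{\alpha}{x}$, $\T{\gamma}{x}$, $\T{\delta}{x}$, or $\Tsgl{y}$ can appear in $\PR$.

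The final contradiction is then almost immediate. I would take as the starting configuration the single $1$.block on consecutive nodes $u_0,u_1,\ldots,u_{k-1}$, which is towerless and in which each robot is within distance $\phi=1$ of another, so it is a legitimate initial configuration of the exploration problem. Every rule $\T{\beta}{x}$ moves an $x$-tower whose left neighbor is empty and whose right neighbor is occupied \emph{toward} the occupied neighbor, landing on a node that is already occupied. Hence the set of occupied nodes in any reachable configuration from this start is contained in $\{u_0,\ldots,u_{k-1}\}$, and so, since $k<n$, the node $u_k$ is never visited, contradicting Definition~\ref{def:exploration}. The one delicate point is checking that this footprint-monotonicity invariant holds for every tower size $x$ that can arise during execution, but since $\T{\beta}{x}$ always targets an already-occupied adjacent node, no robot ever steps onto a node outside the initial block's span, which makes the invariant routine to verify.
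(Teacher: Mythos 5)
Your proposal is correct and follows essentially the same route as the paper, which disposes of the theorem by declaring it ``a direct consequence of Lemma~\ref{lem:kt5} and Lemma~\ref{c-general} and its proof.'' You in fact supply two details the paper leaves implicit: the degenerate case $k=1$ (which the theorem's range $1\leq k<n$ requires but which none of the cited lemmas, all stated for $2\leq k<n$, covers), and the explicit final step --- the footprint-monotonicity observation that a protocol consisting only of the rules $\T{\beta}{x}$ only ever moves robots onto already-occupied nodes, so the initial $1$.block's span is never left and some node is never visited --- which is a cleaner way to close the argument than the paper's appeal to the internal case analysis of Lemma~\ref{c-general}.
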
 
\begin{proof}
The theorem is a direct consequence of Lemma~\ref{lem:kt5} and Lemma~\ref{c-general} and its proof. 
\end{proof}

Since the set of executions that are possible in CORDA is a strict superset of those that are
possible in ATOM, Theorem~\ref{ATOM1} also holds in the CORDA~\cite{P02}.

\begin{corollary}
No deterministic exploration protocol exists in the asynchronous model (CORDA) for $\phi = 1$, $n>1$, and $1 \leq k < n$. 
\end{corollary}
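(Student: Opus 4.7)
The plan is to obtain the corollary as an immediate consequence of Theorem~\ref{ATOM1} via the standard execution-inclusion between ATOM and CORDA. Since CORDA permits strictly more schedules than ATOM (as noted in~\cite{P02}), any behavior realizable in ATOM is also realizable in CORDA; dually, any impossibility proved for ATOM carries over to CORDA. So I do not need to redo any case analysis: the whole machinery of Lemmas~\ref{2} through~\ref{c-general} is reused through the lens of this inclusion.

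Concretely, I would argue by contradiction. Suppose $\PR$ is a deterministic exploration protocol in CORDA with $\phi=1$, $n>1$, and $1 \leq k < n$. Since $\PR$ is nothing more than a set of local rules of the form described in the model section, the same rule set defines a protocol under the ATOM semantics. I then claim that $\PR$ would also solve exploration in ATOM: every ATOM execution of $\PR$ corresponds to a fair CORDA schedule in which each activated robot performs its Look, Compute, and Move atomically before any other event occurs. Fairness and finiteness of the resulting computation in ATOM are inherited from the corresponding CORDA computation, and the sequence of configurations visited is identical. Hence $\PR$ satisfies Definition~\ref{def:exploration} in ATOM, contradicting Theorem~\ref{ATOM1}.

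The only step that requires a small sanity check is that the adversarial patterns invoked in the earlier lemmas (for instance, activating exactly two specified robots simultaneously, or activating the same number of robots on each of two towers) correspond to legal CORDA schedules. This is straightforward since CORDA lets the daemon interleave Look/Compute/Move phases arbitrarily, and in particular lets it synchronize a chosen subset of robots by delaying the others; thus any ATOM schedule embeds into a CORDA schedule without loss. The hard part of the proof is therefore entirely contained in Theorem~\ref{ATOM1}, and nothing new is needed at the level of this corollary beyond citing the model inclusion.
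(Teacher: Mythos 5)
Your argument is correct and is essentially the paper's own: the paper likewise derives the corollary in one line from the fact that the set of CORDA executions is a superset of the set of ATOM executions, so the impossibility of Theorem~\ref{ATOM1} transfers directly. Your contrapositive phrasing (a CORDA-correct protocol would be ATOM-correct) is just a slightly more explicit rendering of the same inclusion.
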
 

 
\paragraph{Synchronous Model.}
\begin{theorem}
Let $\PR$ be a synchronous exploration protocol for $\phi = 1$ and $2\leq k < n$. If $n > 7$, then, 
$k$ must be greater than or equal to $5$.
\end{theorem}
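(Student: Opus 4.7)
The plan is to prove the contrapositive: if $n > 7$ and $k\in\{2,3,4\}$, then no synchronous deterministic exploration protocol $\PR$ exists. We use Theorem~\ref{th:undist} as the main impossibility tool, together with the observations that in every towerless $1$.block configuration the only views in sight are $0(1)0$, $0(1)1$ and $1(1)1$, and that a deterministic protocol may contain at most one rule per view. In the synchronous model, every robot whose view matches a guard of $\PR$ fires simultaneously, but the scheduler still chooses, per firing robot, the direction of each symmetric guard (\Rsgl and \Rswp).

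For $k = 2$ the unique initial shape is a $1$.block of size $2$, so both robots see the asymmetric view $0(1)1$. Three sub-cases arise: no rule for $0(1)1$ (trivial failure of coverage), \Rin (synchronous swap produces a configuration undistinguishable from the initial), or \Rout (the robots move outward and, because $n > 7$, land at distance exactly $3$, so both become isolated with view $0(1)0$; then \Rsgl must belong to $\PR$, and the scheduler can drive the two isolated robots back to adjacent positions, reproducing an undistinguishable configuration). All three sub-cases contradict Theorem~\ref{th:undist} or the exploration requirement.

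For $k = 3$ the initial $1$.block of size $3$ exposes the views $0(1)1$ and $1(1)1$. We split into cases on whether $\PR$ contains \Rin or \Rout (or neither) and on the presence of \Rswp. The representative case is when $\PR$ contains \Rin and no \Rswp: the synchronous firing creates a $3$-tower with the symmetric view $0(3)0$. If $\PR$ has no rule for $0(3)0$, exploration stalls after visiting only $3$ of the $n > 7$ nodes, contradicting coverage. Otherwise $\PR$ contains the symmetric rule $0(3)0::\leftarrow\vee\rightarrow$, and the scheduler can choose the same direction for all three tower robots, translating the tower by one position and yielding a configuration undistinguishable from the previous one, again contradicting Theorem~\ref{th:undist}. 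The remaining sub-cases---\Rout alone, \Rin with \Rswp, \Rout with \Rswp---are treated analogously: either a symmetric tower traps the execution without completing coverage, or the scheduler's freedom on symmetric guards recreates an undistinguishable configuration. For $k = 4$ the analysis is similar: initial configurations are either a single $1$.block of size $4$ or two disjoint $1$.blocks of size $2$, the latter reducing to the $k = 2$ argument block-by-block; in the former, the synchronous co-evolution of the two borders and two middles again produces either unvisited nodes, a stuck symmetric tower, or a scheduler-forced undistinguishable repetition.

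The main obstacle is the combinatorial enumeration of rule sets across all the views reachable during synchronous execution, including the new tower-views $0(2)0$, $0(3)0$, etc.\ that arise after the first step, and checking that in the synchronous setting the scheduler's power over symmetric-guard directions remains sufficient to recreate a prior undistinguishable configuration. The assumption $n > 7$ is used in several sub-cases to guarantee that outward movements or tower translations do not wrap around the ring back into accidental adjacency, so that the distances and views we compute are as claimed.
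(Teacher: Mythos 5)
Your proof follows essentially the same route as the paper's: a case analysis over $k\in\{2,3,4\}$ and over the admissible rule for each view, using Theorem~\ref{th:undist} together with the scheduler's freedom to pick directions on symmetric guards to force either an undistinguishable repetition or a premature terminal configuration with unvisited nodes. The only difference is one of detail: the paper fully works the $k=3$ and $k=4$ sub-cases (in particular the \Rout{} cascade for $k=4$, where the outcome splits on $n=7$, $n=8$, and $n>8$), whereas you assert they are ``analogous''; they do fall under your claimed trichotomy, but they constitute the bulk of the actual work.
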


\begin{proof}

The proof is by contradiction. Assume that there exists a  synchronous exploration protocol $\PR$ for $\phi = 1$, 
$2\leq k < 5$, and $n > 7$. 


Let us start from a possible starting configuration, the idea is to derive all the possible executions and then show that non of them can ensure the completion of the exploration task. The cases bellow are possible:

\begin{enumerate}
\item $k=1$.  \label{C1IMP} It is clear that no exploration is possible by a single robot on the ring.
\item $k=2$ \label{C2IMP}. The robots in this case, either move towards each other exchanging their position or, they move in the opposite direction and become isolated robots. In the latter case, once they are activated again, the scheduler brings them back to their previous position. The configuration  reached is in both cases is {\em undistinguishable} from the starting one. A contradiction.
\item $k=3$. $(i)$ In the case where only Rule {\Rout} is enabled. Once the rule is executed, all robots in the configuration become isolated robots. Thus no exploration is possible (all the configurations that are created after are undistinguishable). $(ii)$ If Rule {\Rin} is enabled alone, then the configuration reached contains a single $3$.tower. All robots part of the $3$.tower behave as a single robot. According to Case~(\ref{C2IMP}), no exploration is possible in this case. $(iii)$ Rule {\Rswp}, is the only one enabled. In this case the configuration reached contains an isolated robot and an isolated $2$.tower (let refer to this configuration by $T$). Note that only robots part of the tower are enabled. Suppose that the scheduler activates them such that they move in the opposite direction of the isolated robots, and then, it activates them again but this time, they move in the opposite direction (towards the isolated robot). The configuration reached is undistinguishable from $T$ configuration. Thus no exploration is possible. $(iv)$ If Rules {\Rswp} and {\Rin} are enabled at the same time, each node of the ring is visited by robots by executing the following two rules: $0(2)1:: \rightarrow$ and $2(1)0:: \rightarrow$, however, robots are not able to detect the end of the exploration. Thus, no exploration is possible in this case. Finally, if $(v)$ Rules {\Rswp} and {\Rout} are enabled at the same time. Once such rules are executed, the configuration reached contains one $1$.block of size $2$ and one isolated robots. By executing Rule {\Rout} (possibly  {\Rsgl}), a configuration with just isolated robots is eventually reached. Thus, no exploration is possible in this case too.
\item $k=4$. Suppose that the initial configuration contains a single $1$.block of size $4$. The cases bellow are possible:

\begin{enumerate}
\item Rule {\Rswp} is executed. The two robots inside the $1$.block exchange their position and the configuration reached is {\em undistinguishable} from the starting one. A contradiction.

\item Rule {\Rin} is executed. Two neighboring towers are then created. If the towers move towards each other, the configuration reached is {\em undistinguishable} from the starting one. If they move in the opposite direction of each other, then they become both isolated towers \ie the robots in the tower cannot see any neighboring robot. When they are activated again, the scheduler brings them back to their previous position. Thus, in this case too, the configuration reached is {\em undistinguishable} from the starting one. A contradiction. 

\item \label{CCX} Rule {\Rout} is executed. The robots that were at the extremity of the $1$.block become isolated robots. if Rule {\Rsgl} is enabled then two isolated towers are created (Note that there is at least one node that has not been explored yet). Robots in the same tower behaves as a single robot. Thus, the configuration is similar to the one containing only two robots. According to Case ~(\ref{C2IMP}), no exploration is possible in this case. Hence, Rule {\Rsgl} cannot be enabled. Rule {\Rout} is the only one that is enabled on robots part of the $1$.block of size $2$. Once they move, two $1$.blocks of size $2$ are created. All the robots have now the same view. Note that Rule {\Rout} remains the only one enabled on all the robots. Once it is executed, a new $1$.block of size $2$ is created and the configuration contains either $(i)$ another $1$.block of size $2$ ($n=8$) or $(ii)$ a tower $(n=7)$ or $(iii)$ two isolated robots $(n>8)$. In the first case $(i)$, {\Rout} remains enabled on all the robots. Once it is executed, the configuration reached is undistinguishable from one of the previous ones. Thus, no exploration is possible. In the second case $(ii)$, {\Rout} remains enabled on robots part of the $1$.block. Observe that the robots in the tower cannot be activated otherwise, an undistinguishable configuration is reached (the one that contains two $1$.blocks). Thus only {\Rout} is enabled. Once it is executed, the configuration contains two isolated robots. Such isolated robots can never detect the end of the exploration task since no rule can be executed anymore. Thus no exploration is possible in this case too. In the third case $(iii)$ on robots part of the $1$.block, Rule {\Rout} is enabled, once they move all the robots become isolated robots. Hence, the scheduler can choose always the same direction for all the robots. The configuration reached at each time is then {\em undistinguishable} from the previous one. A contradiction.
 
\item All the robots are allowed to move. Note that in this case either two towers are created and it is clear that no exploration is possible (since each tower behaves as a single robot and thus, the system behaves as there is two robots on the ring). or we retrieve case \ref{CCX}. Thus, no exploration is possible in this case too.  
\end{enumerate}
\end{enumerate}
From the cases above, we can deduce that no exploration is possible for $n>7$ using four robots even in the fully-synchronous model.
\end{proof}


\subsection{Synchronous Algorithms}
\label{sub:synch}

In the following, we present optimal deterministic algorithms that solve the exploration problem in the fully-synchronous model for any ring $3 \leq n\leq 6$ and $n\geq 7$, using respectively $(n-1)$ robots (except for the case $n=6$ that needs only $4$ robots) and five robots. Note that the starting configuration contains a single $1$.block (refer to Figure \ref{Fully-Synch}).

\paragraph{Fully-Synchronous Exploration for $k=5$ and $n\geq 7$.} The idea of Algorithm~\ref{algo:A1} is as follow: The robots that are
at the border of the 1.block are the only ones that are allowed to move in the initial configuration, $\gamma_0$. 
Their destination is their adjacent occupied node (Rule~$1A1$).  Since the system is synchronous, the next
configuration, $\gamma_1$, contains a single robot surrounded by two $2$.towers.
In the next step, the towers move in the opposite direction of
the single robot (Rule~$1A3$) and the single robot moves towards one of the two towers (Rule~$1A2$). 
Note that the resulting configuration $\gamma_2$ is $21020^{n-4}$, providing an orientation of the ring. 
From there, the single $2$.tower are the landmark allowing to detect termination and the three other robots 
explore the ring by perform Rules~$1A3$ and~$2A4$ synchronously. 
After $n-4$ steps, $n-4$ nodes are visited by the $3$ robots and the system reaches $\gamma_{n-2}$ that 
is equal to $2210^{n-3}$.  Finally, by performing Rule~$2A4$, the single robot create a $3$.tower, marking
the end of the exploration in $\gamma_{n-1}$ in which each robot is awake of the termination.  

\begin{algorithm}[htp]
\caption{Fully-Synchronous Exploration for $n\geq 7$ (visibility $1$)}
\label{algo:A1}
\begin{small}

\begin{tabular}{rlcll}

1A1: & $0(1)1$  & $::$ & $\rightarrow$ & \comment{Move towards my occupied neighboring node }
\\
1A2: & $2(1)2$  & $::$ & $\rightarrow$ $\vee$ $\leftarrow$ & \comment{Move towards one of my neighboring node }
\\
1A3: & $0(2)1$ & $::$ & $\leftarrow$ & \comment{Move towards my neighboring empty node}
\\
2A4: & $2(1)0$ & $::$ & $\leftarrow$ & \comment{Move to the tower}

\end{tabular}
\end{small}

\end{algorithm}

\paragraph{Fully-Synchronous Exploration for $3 \leq n\leq 6$.} The formal description of the algorithm is given in Algorithm ~\ref{algo:A'1}. The robots in this case detect the end of the exploration task if they are either part of a $2$.tower or neighbors of a $2$.tower. The idea of the algorithm is the following: For $3 \leq n\leq 5$, $k=(n-1)$ robots are necessary to perform the exploration task. The robots that are at the extremities of the $1$.block are the ones allowed to move, since $k=n-1$, once they move, a $2$.tower is created. If the reached configuration contains an isolated robot (the case where $n=4$) then, this robot is the only one allowed to move. Its destination is one of its adjacent empty nodes (refer to Rule $1A'2$). Once it moves it becomes neighbor of the $2$.tower. In the case where ($n=6$), $4$ robots are necessary to solve the exploration problem. In the same manner, the robots that are at the border of the $1$.block are the ones allowed to move, their destination is their adjacent empty node (refer to Rule $1A'1$). Once they move, the configuration reached contains two $1$.blocks of size $2$. Observe that on all the robots Rule $1A'1$ is enabled. Once it is executed, two $2$.towers are created. The robots can detect the end of the exploration since they are all part of a $2$.tower.
 

\begin{algorithm}[htp]
\caption{Fully-Synchronous Exploration for $3 \leq n\leq 6$ (visibility $1$)}
\label{algo:A'1}
\begin{small}
\begin{tabular}{rlcll}

1A'1: & $0(1)1$  & $::$ & $\leftarrow$ & \comment{Move towards my neighboring empty node}\\
1A'2: & $0(1)0$  & $::$ & $\leftarrow$ $\vee$ $\rightarrow$ & \comment{Move towards one of my neighboring nodes}

\end{tabular}
\end{small}

\end{algorithm}


%
%
%

\begin{figure}[H]
 \begin{minipage}[b]{.46\linewidth}
 \begin{center}
  \epsfig{figure=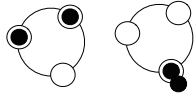,width=2.7cm}
  \end{center}
 \textit{(a) Fully-synchronous exploration $n=3$ and $k=2$} 
 \end{minipage}\hfill
  \begin{minipage}[b]{.46\linewidth}
  \begin{center}
  \epsfig{figure=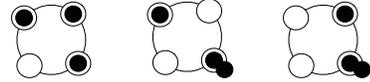,width=5cm}
   \end{center}
 \textit{ (b) Fully-synchronous exploration $n=4$ and $k=3$} 
 \end{minipage}\hfill
 \begin{minipage}[b]{.46\linewidth}
 \begin{center}
 \epsfig{figure=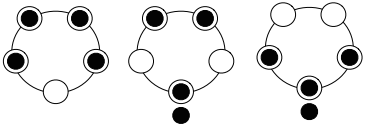,width=5cm}
  \end{center}
 \textit{ (c) Fully-synchronous exploration $n=5$ and $k=4$} 
\end{minipage} \hfill
 \begin{minipage}[b]{.46\linewidth}
 \begin{center}
 \epsfig{figure=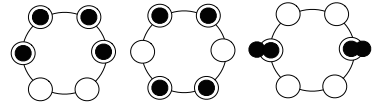,width=5cm}
  \end{center}
\textit{ (d) Fully-synchronous exploration $n=6$ and $k=4$} 
 \end{minipage}\hfill 
 \caption{Fully-Synchronous Deterministic Solutions for $3 \leq n\leq 6$. \label{Fully-Synch}}
\end{figure}


\section{Visibility $\phi=2$}
\label{sec:phi=2}

In this section we present two deterministic algorithms that solve the exploration problem. The first one uses $9$ robots and works for any towerless initial configurations that contains a single $\phi$.group. The second one uses $7$ robots but works only when the starting configuration contains a single 1.block of size $7$. In both algorithms we suppose that $n\geq \phi k+1$.

\subsection{Asynchronous Exploration using $9$ robots}

Let us first define two special configurations: 

\begin{definition}
A configuration is called {\em Middle} (refer to Figure \ref{MIDD}) at instant $t$ if there exists a sequence of consecutive nodes $u_{i}, u_{i+1}, \dots , u_{i+5},u_{i+6}$ such that:
\begin{itemize}
\item $M_j=2$ for $j \in \{i, i+1, i+5, i+6\}$
\item $M_j= 1$ for $j=i+2$
\item $M_j= 0$ for $j \in \{i+3, i+4\}$
\end{itemize}
\end{definition}

\begin{definition}
A configuration is said {\em Terminal} (refer to Figure \ref{TERM}) at instant $t$ if there exists a sequence of nodes $u_{i}, u_{i+1}, \dots , u_{i+4}$ such that: 
\begin{itemize}
\item $M_j=2$ for $j \in \{i, i+3\}$
\item $M_j= 4$ for $j=i+2$
\item $M_j= 0$ for $j=i+1$
\item $M_j= 1$ for $j=i+4$
\end{itemize}
\end{definition}

\begin{figure}
 \begin{minipage}[b]{.46\linewidth}
 \centering\epsfig{figure=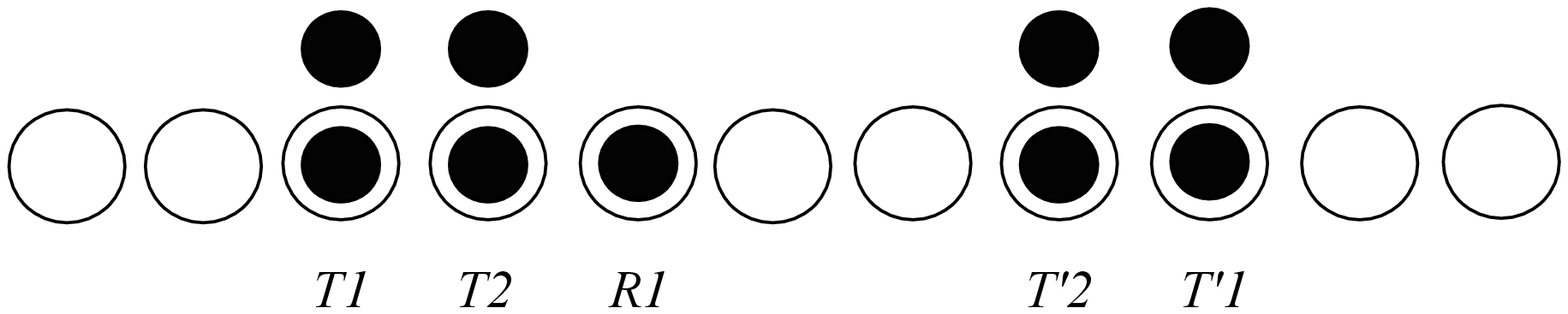,width=5cm}
 \caption{$Middle$ configuration\label{MIDD}}
 \end{minipage} \hfill
\begin{minipage}[b]{.46\linewidth}
  \centering\epsfig{figure=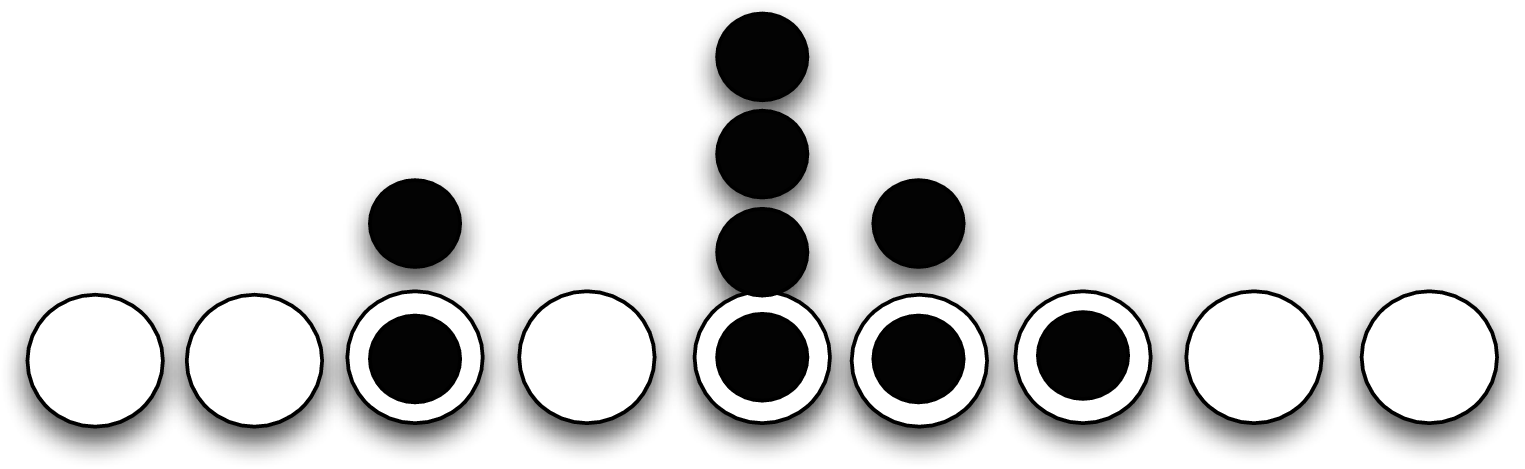,width=5cm}
 \caption{$Terminal$ configuration\label{TERM}}
\end{minipage} \hfill
\end{figure}  

The algorithm comprises two phases as follow:

        \begin{enumerate}
                  \item {\tt Organization Phase}. The aim of this phase is to build the {\tt Middle Configuration}. The initial configuration is any towerless configuration that contains a single $\phi$.group.
                  \item {\tt Exploration Phase}. The starting configuration of this phase is the {\tt Middle Configuration}. A set of robots are elected to perform the exploration while other robots stay still as a land mark. {\tt Terminal} configuration is created at this end of this phase to indicate the end of exploration task.
       \end{enumerate}
       
The formal description of the algorithm is presented in Algorithms \ref{algo:A29}.

\begin{algorithm}[H]
\caption{Asynchronous Exploration using $9$ robots ($\phi=2$)}
\label{algo:A29}
\begin{scriptsize}
\begin{tabular}{rlcll}
\multicolumn{5}{l}{{\tt Organization} Phase}\\
2A'1: & $00(1)01$ & $::$ & $\rightarrow$ & \comment{Move toward the occupied node}
\\
2A'2: & $00(1)1?$ & $::$ & $\rightarrow$ & \comment{Move to my adjacent occupied node}
\\
2A'3: & $00(2)01$ & $::$ & $\rightarrow$ & \comment{Move toward the occupied node}
\\
2A'4: & $02(1)01$ & $::$ & $\rightarrow$ & \comment{Move in the opposite direction  of the tower}
\\
2A'5: & $21(1)01$ & $::$ & $\rightarrow$ & \comment{Move in the opposite direction of the tower}
\\
2A'6: & $21(1)1?$ & $::$ & $\leftarrow$ & \comment{Move toward the tower}
\\
2A'7: & $20(1)02$ & $::$ & $\leftarrow$ $\vee$ $\rightarrow$ & \comment{Move to one of my neighboring nodes}
\medskip\\
\multicolumn{5}{l}{{\tt Exploration} Phase}\\
2A'6: & $21(1)1?$ & $::$ & $\leftarrow$ & \comment{Move toward the tower}
\\
2A'8: & $00(2)21$ & $::$ & $\leftarrow$ & \comment{Move in the opposite direction of my neighboring tower}
\\
2A'9: & $01(1)21$ & $::$ & $\leftarrow$ & \comment{Move in the opposite direction of the tower}
\\
2A'10: & $20(2)10$ & $::$ & $\leftarrow$ & \comment{Move toward the tower at distance $2$}
\\
2A'11: & $20(1)00$ & $::$ & $\leftarrow$ & \comment{Move toward the tower}
\\
2A'12: & $20(2)21$ & $::$ & $\leftarrow$ & \comment{Move to my adjacent free node}
\\
2A'13: & $21(1)21$ & $::$ & $\leftarrow$ & \comment{Move in the opposite direction of my neighboring tower}
\\
2A'14: & $02(2)22$ & $::$ & $\rightarrow$ & \comment{Move toward the tower having another tower as a neighbor}
\\
2A'15: & $02(1)32$ & $::$ & $\rightarrow$ & \comment{Move toward the $3$.tower}
\end{tabular}
\end{scriptsize}
\end{algorithm}

\paragraph{Proof of correctness}

We prove in this section the correctness of our algorithm presented above. In the following, a configuration is called {\tt intermediate} if Rule $2A'7$ is enabled in the configuration.

\begin{lemma}\label{Intermediate}
Starting from any towerless configuration that contains a single $\phi$.group, {\tt intermediate} configuration is eventually reached.
\end{lemma}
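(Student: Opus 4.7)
The plan is to argue by structural analysis of the Organization-phase rules combined with a termination/progress argument. Since the initial configuration $\gamma_0$ is towerless and forms a single $\phi$-group on $k=9$ robots, all eight consecutive inter-robot gaps in $\gamma_0$ have length $1$ or $2$. Because no node has multiplicity $\geq 2$, every Organization rule whose view contains a ``$2$'' ($2A'3$ through $2A'7$) is disabled in $\gamma_0$. Hence, the only rules initially enabled are $2A'1$ and $2A'2$, which both send an isolated (or end-of-block) robot one step toward its nearest occupied neighbor. I would first show that these two rules, together, can only be enabled on robots at the extremities of a gap of length exactly~$2$, and that each application strictly reduces the total length of the $\phi$-group (measured as the distance between its two extreme occupied nodes).

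Next, I would argue that once contractions begin, $2$-towers naturally form at the two borders of the $\phi$-group: as soon as a boundary gap-$2$ is traversed by rule $2A'2$, the moving robot lands on its neighbor and creates a border $2$-tower. After a border $2$-tower appears, rules $2A'3$ (a $2$-tower shifts toward the rest of the group), $2A'4$ and $2A'5$ (a robot adjacent to, or at distance $2$ from, a tower is pushed outward), and $2A'6$ (a robot at distance $2$ from a tower on the tower-side is pulled toward it) become relevant. I would do a careful case analysis, showing that the net effect of these rules is to sweep all robots toward the two ends while leaving exactly one robot isolated in the middle, producing a pattern whose central five consecutive nodes carry multiplicities $2,0,1,0,2$—precisely the view $20(1)02$ that enables rule $2A'7$.

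To formalize termination, I would introduce a lexicographic potential function of the form $\Phi(\gamma) = (L(\gamma),\, I(\gamma),\, G(\gamma))$, where $L(\gamma)$ is the diameter (span) of the $\phi$-group, $I(\gamma)$ is the number of robots not yet inside a $2$-tower at an extremity, and $G(\gamma)$ measures the asymmetry of gaps around the prospective middle robot. I would then verify, rule by rule, that every Organization rule distinct from $2A'7$ strictly decreases $\Phi$ under the lexicographic order, while the intermediate configuration is the unique minimizer. Since $\Phi$ takes values in a well-ordered finite set and rules $2A'1$--$2A'6$ cannot reach a deadlock before $\Phi$ attains its minimum (one of these rules is always enabled on some robot whenever the pattern is not yet intermediate), the system reaches an intermediate configuration in finitely many steps.

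The main obstacle will be the asynchronous scheduler: a robot may compute its destination from a stale snapshot and execute the move after other robots have already changed the configuration. To handle this, I would show that in every reachable configuration along the execution, if a robot $r$ had computed a move from view $v$ at some earlier instant and $v$ no longer matches its present local view, then either (i) the move still coincides with the move prescribed by an Organization rule in the current view, or (ii) the move strictly decreases $\Phi$ regardless. This amounts to verifying that no two Organization rules prescribe opposite destinations on overlapping local patterns, which can be checked by exhaustive inspection of the rule table and is the most technical part of the argument.
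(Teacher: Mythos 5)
Your overall strategy is the same as the paper's: identify that only rules $2A'1$ and $2A'2$ fire in a towerless single $\phi$.group, trace how border $2$.towers form and how rules $2A'3$--$2A'6$ drive each extremity toward the local pattern $2111$ and then $2201$, and conclude that the symmetric pattern $2\,2\,0\,1\,0\,2\,2$ enabling $2A'7$ is reached. The paper does this by direct case analysis of the evolving border patterns (its cases (a)--(f)); you replace that informal progress argument with a lexicographic potential, which is a reasonable way to make termination rigorous. Two small factual slips first: $2A'2$ is enabled at an extremity whose inner neighbour is \emph{adjacent} (gap $1$), not at a gap of length $2$; and rules $2A'4$, $2A'5$ push robots \emph{away} from the border tower toward the interior, so ``sweep all robots toward the two ends'' misdescribes the mechanism (the towers chase the retreating robots inward until the pattern $2111$ forms).

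The genuine gap is in the potential $\Phi=(L,I,G)$. It does not strictly decrease under every Organization rule, so the claim ``every rule distinct from $2A'7$ strictly decreases $\Phi$'' would fail verification. Concretely: if the scheduler activates only \emph{one} robot of a border $2$.tower under $2A'3$, the extreme occupied node keeps a robot, so the span $L$ is unchanged, while the tower is destroyed, so $I$ (robots not inside an extremity $2$.tower) \emph{increases} by two; lexicographically $\Phi$ goes up. (The paper handles exactly this by noting that $2A'2$ then rebuilds the tower one step inward, i.e., progress is only guaranteed over a two-step window, not per rule application.) Similarly, $2A'4$ and $2A'5$ leave both $L$ and $I$ unchanged, so the entire burden falls on $G$, and ``asymmetry of gaps around the prospective middle robot'' is not a quantity that monotonically decreases --- a move on one side only can increase asymmetry. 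You need either a per-extremity measure (e.g., for each border, the distance from the border tower to the position it will occupy in the intermediate configuration, plus a correction term that accounts for a temporarily split tower) or an explicit argument that $\Phi$ decreases over bounded segments of the execution rather than at every step. Without one of these repairs the termination argument does not go through as written, even though the underlying dynamics do converge.
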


\begin{proof}
Two cases are possible as follow:

\begin{itemize}
\item The initial configuration contains a 1.block of size $9$. Rule $2A'2$ is enabled on both robots that
are at the extremity of the 1.block. When the scheduler activates such robots, a $S^2$-sequence is created. Rule $2A'6$ becomes enabled on the robots at distance $2$ from the $2$-towers. When such robots are activated by the scheduler, a configuration of type {\tt intermediate} is created. 
Since we consider in this paper asynchronous robots, one tower can be created only at one extremity of the 1.block,
Rule $2A'6$ becomes enabled on the robots at distance $2$ from such a $2$-tower. However, Rule $2A'2$ keeps being
enabled on the other extremity (let the robot at the other extremity be $r_1$). Since we consider a fair scheduler, $r_1$ will be eventually enabled and a $2$-tower is created. Rule $2A'6$ becomes enabled on the robots at distance $2$ from the new $2$-towers. Once such robot is activated, {\tt intermediate} configuration is created. 

\item Other configurations. Two sub-cases are possible:
             \begin{itemize}
             \item \textbf{Case (a)}: Robots at the extremity of the $\phi.group$ do not have any neighboring robots: Rule $2A'1$ is enabled on such robots. When the scheduler activates them, they become neighbor of an occupied node and we retrieve Case (b).
             \item \textbf{Case (b)}: Robots at the extremity of the $\phi.group$ have one occupied node as a neighbor. Rule $2A'2$ becomes enabled on such robots. When the scheduler activates them, a $2$-tower is created at at least one extremity of the $\phi.group$.
             \end{itemize}
             
When a $2$-tower is created, four sub cases are possible as follow:
             \begin{itemize}
             \item \textbf{Case (c)}: The tower does not have a neighboring occupied node (there is a sequence of nodes with multiplicities equal to $00201$). In this case, Rule $2A'3$ becomes enabled on the robots part of the $2$-tower. If the scheduler activates both robots at the same time then the $2$-tower becomes neighbor to an occupied node. $(i)$ If the scheduler activates only one robot, then the tower is destroyed and Rule $2A'2$ becomes enabled on the robot that did not move. When the rule is executed, the tower is built again and the configuration reached is exactly the same as in $(i)$.  
             
              \item \textbf{Case (d)}: The Tower is neighbor of an occupied node that is part of a 1.block of size $3$ ($002111$). Rule $2A'6$ is enabled in this case on the robot that is in the middle of the 1.block. Once the Rule is executed, another $2$-tower is created.
              
               \item \textbf{Case (e)}: The Tower is neighbor of an occupied node that is part of a 1.block of size $2$ ($002110$). Rule $2A'5$ is enabled in this case on the robot that is in the extremity of the 1.block not having the $2$-tower as a neighbor. When the rule is executed the robot that has moved becomes part of another 1.block and we retrieve Case (f).
             
             \item \textbf{Case (f)}: The Tower is neighbor of an occupied node that has an empty node as a neighbor ($00210$). Rule $2A'4$ becomes enabled on the robot that is at distance $1$ from the tower. Once it moves, it becomes neighbor of another occupied node and we retrieve either Case (c). 
             \end{itemize}
             
 From the cases above we can deduce that a configuration with a sequence of node $2111$ is reached in a finite time (Cases (c), (e) and (f)). In this case, Rule $2A'6$ becomes enabled on the robot that is in the middle of the 1.block of size $3$ having a neighboring $2$-tower (Case (d)). Once the robot moves, a $2$-tower is created ($2201$). Note that the single robot in the sequence is not allowed to move unless it sees a $2$-towers at distance $2$ at each side. Thus, in the case the scheduler activates robots in only one extremity of the $\phi.group$, we are sure that robots in the other extremity will be eventually activated since they are the only one that will be allowed to move (the scheduler is fair). Hence, a symmetric configuration in which there will be one robot on the axes of symmetry and two $2$-tower at each side of it is reached in a finite time. {\tt intermediate} configuration is then reached and the lemma holds.

\end{itemize}

\end{proof}

\begin{lemma}\label{MIDDLE}
Starting from {\tt intermediate} configuration, {\tt Middle} configuration is eventually reached.
\end{lemma}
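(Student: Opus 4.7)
The plan is to observe that the intermediate configuration is essentially rigid, that $2A'7$ is its only enabled rule, and that one application of $2A'7$ yields Middle directly. First, I would pin down the exact shape of the intermediate configuration produced by Lemma~\ref{Intermediate}: its proof shows that the nine robots form a pattern symmetric about an axis, consisting of a single robot at the center, two empty neighbors, and a pair of adjacent $2$-towers on each side. In multiplicity notation, the seven consecutive nodes centered on the single robot read $(2,2,0,1,0,2,2)$, while the remaining $n-7$ nodes are empty. This shape is forced because $2A'7$ requires a $2$-tower at distance exactly $2$ on each side of the central robot, and the Organization phase creates the second tower of each pair via Rule $2A'6$ applied to the middle of a local pattern $2111$, leaving the new tower adjacent to the first.

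Second, I would argue that $2A'7$ is the only rule enabled in this configuration. The central robot has view $(2,0,1,0,2)$, matching $2A'7$. The two inner $2$-towers have view $(0,2,2,0,1)$ and the two outer $2$-towers have view $(0,0,2,2,0)$, each taken up to mirror. All guards of Algorithm~\ref{algo:A29} with focal multiplicity $1$ are automatically excluded on a tower, and the five tower guards $00(2)01$, $00(2)21$, $20(2)10$, $20(2)21$, $02(2)22$ match neither $(0,2,2,0,1)$ nor $(0,0,2,2,0)$. Hence no tower robot is enabled, and asynchrony adds no complication: no outdated Look snapshot taken by a tower robot during the Organization phase could have matched a guard either, so there is no consistent pending move other than $2A'7$ on the central robot.

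Finally, I would apply $2A'7$. Since the central robot's view is symmetric, the scheduler picks left or right; the two choices yield mirror images of each other, so by the reflection symmetry built into the definition of Middle it suffices to treat one. If the robot moves, say, to the left, the seven-node window becomes $(2,2,1,0,0,2,2)$ with all other nodes still empty, which matches the Middle configuration of Figure~\ref{MIDD}. Fairness of the scheduler then guarantees that this move eventually occurs, which proves the lemma. The main obstacle is the exhaustive rule-by-rule verification in the second step; it is routine once the shape $(2,2,0,1,0,2,2)$ of the intermediate configuration has been fixed.
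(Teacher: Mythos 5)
Your proposal is correct and follows essentially the same route as the paper's own (very terse) proof: observe that only Rule $2A'7$ is enabled in the intermediate configuration $(2,2,0,1,0,2,2)$, execute it, and note that the resulting $(2,2,1,0,0,2,2)$ is exactly the Middle configuration. Your explicit guard-by-guard check that no tower robot is enabled, and your remark about pending outdated snapshots, merely supply details the paper asserts without verification.
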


\begin{proof}
When the configuration is of type {\tt intermediate}, only $2A'7$ is enabled. Once the rule is executed, the robot that is not part of any tower becomes neigbor of a tower of size $2$. {\tt Middle} configuration is reached and the lemma holds.
\end{proof}

\begin{lemma}\label{STM}
Starting from any towerless configuration that contains a single $\phi$.group, a configuration of type {\em Middle} is eventually reached.
\end{lemma}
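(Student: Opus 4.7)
The plan is to observe that this lemma is essentially the composition of Lemma~\ref{Intermediate} and Lemma~\ref{MIDDLE}, so the proof is just a chaining argument. First I would invoke Lemma~\ref{Intermediate} to conclude that, starting from any towerless configuration containing a single $\phi$.group, the system eventually reaches an \emph{intermediate} configuration (i.e.\ one in which Rule~$2A'7$ is enabled). Then I would invoke Lemma~\ref{MIDDLE} to conclude that from any intermediate configuration the system eventually reaches the \emph{Middle} configuration. Concatenating these two guarantees yields the lemma.

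The main thing to verify carefully is that the concatenation is actually valid under the asynchronous, fair scheduler: in particular, no rule from the \emph{Exploration} phase may become enabled between the two invocations, otherwise the second lemma could not be applied to the reached configuration. For this I would argue that every rule of the \emph{Exploration} phase has a guard containing either a $2$.tower adjacent to a $1$-robot flanked by another tower, a $3$.tower, or a $4$.tower; none of these local patterns can appear during the execution leading to the intermediate configuration, because the Organization rules only create $2$.towers, and these towers are always separated by at least the robot that will later execute Rule~$2A'7$. Hence throughout the Organization phase only Organization rules fire.

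The one subtle point is asynchrony during the symmetric build-up: as noted in the proof of Lemma~\ref{Intermediate}, the two sides of the $\phi$.group may progress at different speeds, but fairness ensures that the lagging side eventually catches up, and the single robot on the axis cannot move until it sees $2$.towers at distance~$2$ on both sides. I would make this observation explicit, thereby guaranteeing that the reached configuration is indeed intermediate (not some partial configuration that has already leaked into Exploration-phase behaviour). Once this is established, Lemma~\ref{MIDDLE} applies and the \emph{Middle} configuration is reached in a finite number of steps.

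In summary, the proof is a direct consequence of Lemmas~\ref{Intermediate} and~\ref{MIDDLE}, plus a short verification that the two phases do not interfere. The hardest part is the last verification, but it follows from a straightforward case analysis of the guards of the Exploration-phase rules against the local patterns produced by the Organization-phase rules.
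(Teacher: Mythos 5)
Your proof takes essentially the same route as the paper: the paper's own justification is a one-line composition of Lemma~\ref{Intermediate} and Lemma~\ref{MIDDLE} (its citation of ``Lemmas~\ref{MIDDLE} and~\ref{STM}'' is an evident typo for \ref{Intermediate} and \ref{MIDDLE}). Your additional check that no Exploration-phase rule can fire during the Organization phase is a reasonable strengthening that the paper omits, but it does not change the argument.
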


\begin{proof}
Directly follows from Lemmas \ref{MIDDLE} and \ref{STM}. 
\end{proof}

\begin{lemma}\label{MTT}
Starting from a configuration of type {\tt Middle}, {\tt Terminal} configuration is eventually reached.
\end{lemma}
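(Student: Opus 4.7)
The plan is to exhibit a trajectory, forced up to partial tower moves, from {\tt Middle} to {\tt Terminal}. It consists of a long \emph{sweep} phase in which a mobile cluster of shape $(2,2,1)$ travels around the ring while the landmark $(2,2)$ stays fixed, followed by a short \emph{merge} phase in which the mobile cluster collides with the landmark on the opposite side and assembles the $4$.tower of {\tt Terminal}. To launch the sweep I would check view-by-view that in the pattern $(2,2,1,0,0,2,2)$ on $u_i,\dots,u_{i+6}$, only the $2$.tower at $u_i$ has a view matching any exploration-phase guard, namely $00(2)21$; the other three towers have views $02(2)10$, $00(2)20$ and $02(2)00$, and the single robot at $u_{i+2}$ has view $22(1)00$, none of which is the left-hand side of any rule. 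Hence Rule $2A'8$ fires uniquely and the outer $2$.tower takes one step away from the landmark.

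Next I would establish a loop invariant for the sweep: every ``macro-step'' consists of Rule $2A'8$ at the outer $2$.tower of the mobile cluster (view $00(2)21$), then Rule $2A'10$ at the inner $2$.tower (view $20(2)10$), then Rule $2A'11$ at the single robot (view $20(1)00$), and produces a configuration with the same mobile shape $(2,2,1)$ and the same landmark $(2,2)$, shifted by one position along the ring, so that the arc of empty nodes separating mobile and landmark on one side has grown by one while the arc on the other side has shrunk by one. A straightforward induction on the length of the long arc then shows that after $O(n)$ macro-steps the mobile cluster has visited every node and sits adjacent to the landmark on the opposite side.

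Asynchrony is absorbed by a single recovery lemma that I would prove once and reuse: whenever only one robot of a $2$.tower is activated during a $\leftarrow$-move by Rule $2A'8$ or Rule $2A'10$, the lone robot left behind acquires a view of the form $01(1)21$ or $21(1)1?$ matching Rule $2A'9$ or Rule $2A'6$ respectively, and by fairness it catches up before any other cluster is activated, restoring the invariant.

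The main obstacle, which I would treat last and most carefully, is the merge phase. I must verify, for each of the several configurations reached as the mobile cluster is absorbed by the landmark, that exactly one of Rules $2A'12$--$2A'15$ is enabled, that these rules indeed build the multiplicities $2 \to 3 \to 4$ on the node carrying the final $4$.tower, and that no asynchronous split of an approaching tower produces a configuration undistinguishable from an earlier one, which would contradict Theorem~\ref{th:undist}. Once every intermediate configuration has been checked, the resulting pattern matches the multiplicities $(2,0,4,2,1)$ of {\tt Terminal} and the lemma follows.
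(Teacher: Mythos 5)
Your proposal follows essentially the same route as the paper's proof: your mobile $(2,2,1)$ cluster is the paper's \emph{Explorer-Sequence} ($T1$, $T2$, $R1$), your macro-step ($2A'8$, then $2A'10$, then $2A'11$) with recovery via $2A'9$/$2A'6$ for partially-activated towers is exactly the paper's induction, and the merge via $2A'12$--$2A'15$ building the $4$.tower matches its concluding paragraph. One small correction: the recovery works not ``by fairness'' but because the lagging robot of a split tower is the \emph{only} enabled robot, so no other robot can move before it catches up.
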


\begin{proof}
Let $T'1$ and $T'2$ be the two neigboring $2$-towers that cannot see any other robot, and let $T1$ and $T2$ be the other $2$-towers such as $T2$ has an neighboring occupied node other then $T1$ (let refer to the robot that is neighbor to $T1$ and not part of a $2$-tower by $R1$) (refer to Figure \ref{MIDD}). The  sequence of robots containing $T1$, $T2$ and $R1$ is called {\em Explorer-Sequence}.
On such a configuration, Rule $2A'8$ is enabled on $T1$. When the rule is executed, either ($i$) $T1$ becomes at distance $2$ from $T2$ (the scheduler activates both robots part of $T1$ at the same time) or ($ii$) $T1$ is destroyed and in this case Rule $2A'9$ becomes the only rule enabled. When $2A'9$  is executed, $T1$ is restored and becomes at distance $2$ from $T2$ (we retrieve Case ($i$)). On the new configuration Rule $2A'10$ is enabled on $T2$. When the rule is executed, either $(a)$ $T2$ becomes neighbor to $T1$ (the scheduler activates both robots in $T1$) or $(b)$ $T2$ is destroyed and Rule $2A'6$ becomes the only one enabled (the scheduler activates only one robot in $T2$). When the rule is executed, $T2$ is built again and becomes neighbor of $T1$ (we retrieve Case $(a)$). Rule $2A'11$ becomes then enabled on $R1$. When $R1$ is activated, {\em Explorer-Sequence} is built again, however, the distance between $T1$ and $T'1$ has decreased. Since the view of robots in the {\em Explorer-Sequence} is the same as in the {\tt Middle} configuration, they will have the same behavior. Thus, at each time the distance between $T1$ and $T'1$ decreases. Hence $T1$ becomes eventually at distance $2$ from $T'1$ in a finite time. Rule $2A'12$ becomes enabled on $T1$. If the scheduler activates only one robot in $T1$ then $T1$ is destroyed and  Rule $2A'13$ becomes enabled, when the rule is executed, $T1$ becomes neighbor of $T'1$ (it is like the scheduler activates both robots at the same time). Since robots in $T2$ cannot see the $2$-tower $T'1$, they will move towards $T1$, the same holds for $R1$. Rule $2A'14$ is then enabled on $T2$, once it is executed, either a $4$-tower is created and in this case the configuration is of type {\tt Terminal} or $3$-tower is created and in this case Rule $2A'15$ becomes enabled, when it is executed a $4$-tower is created and the configuration becomes $Terminal$ and the lemma holds.
\end{proof}

\begin{lemma}
Starting from any towerless configuration that contains a single $\phi$.group, {\tt Terminal} configuration is eventually reached such that all nodes of the ring have been explored. 
\end{lemma}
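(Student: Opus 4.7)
The plan is to directly combine the two preceding lemmas and then add the ``every node visited'' component. First, Lemma~\ref{STM} gives that any towerless configuration containing a single $\phi$.group reaches a {\tt Middle} configuration in finite time; Lemma~\ref{MTT} then gives that from {\tt Middle} a {\tt Terminal} configuration is reached in finite time. Concatenating these two statements yields termination immediately.

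For the exploration part I would argue by splitting along the two phases. During the {\tt Organization} phase only rules in the local neighborhood of the initial $\phi$.group fire, and since $n \geq \phi k + 1$ some nodes of the ring are necessarily still unvisited at the moment {\tt Middle} is produced. These remaining nodes are visited during the {\tt Exploration} phase: inside the proof of Lemma~\ref{MTT} it is shown that the Explorer-Sequence $(T1, T2, R1)$ shifts by exactly one node per macro-cycle in the direction opposite to the landmark pair $(T'1, T'2)$, which remains stationary. A simple induction on the number of shifts then shows that $R1$ (or one of the robots of $T1$) successively occupies every node lying on the arc between its starting position in {\tt Middle} and $T'1$ taken the ``long way around'' the ring. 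Together with the nodes already occupied in {\tt Middle}, this accounts for all $n$ nodes of the ring.

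The step I expect to be the main obstacle is taming the asynchronous scheduler during a macro-cycle: the scheduler may activate only one of the two robots composing $T1$ or $T2$, temporarily destroying the tower and producing a local view different from the one that was supposed to enable the next shift. My plan here is to reuse the recovery arguments already embedded in the proof of Lemma~\ref{MTT}, namely that in every such transient configuration exactly one recovery rule (\eg $2A'9$ or $2A'6$) is enabled and its execution provably restores the intended layout shifted by one node. Once this invariant is in place, the counting of visited nodes is routine, and the concluding fusion via Rules $2A'12$, $2A'13$, $2A'14$ (and, if a transient $3$-tower appears, $2A'15$) creates the $4$-tower characterising a {\tt Terminal} configuration, which simultaneously witnesses termination and the end of the exploration.
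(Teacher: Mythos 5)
Your overall route is the one the paper takes: termination follows by chaining Lemma~\ref{STM} (any towerless configuration with a single $\phi$.group reaches {\tt Middle}) with Lemma~\ref{MTT} ({\tt Middle} reaches {\tt Terminal}), and coverage is obtained by tracking the sweep of the {\em Explorer-Sequence} $(T1,T2,R1)$ around the ring until it merges with the stationary pair $T'1,T'2$. The scheduler-recovery concern you flag is indeed already discharged inside the proof of Lemma~\ref{MTT}, so reusing it is fine.

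There is, however, one concrete hole in your node accounting. In the {\tt Middle} configuration (Figure~\ref{MIDD}) the occupied nodes are $u_i,u_{i+1},u_{i+2}$ (the Explorer-Sequence) and $u_{i+5},u_{i+6}$ (the landmark towers), while $u_{i+3}$ and $u_{i+4}$ are \emph{free}. The Explorer-Sequence leaves via $u_{i-1}$ and sweeps the long arc until $T1$ meets $T'1$ from the far side, so it never touches $u_{i+3}$ and $u_{i+4}$. Your concluding sentence --- that the swept arc together with ``the nodes already occupied in {\tt Middle}'' accounts for all $n$ nodes --- is therefore false as stated: those two nodes are neither occupied in {\tt Middle} nor on the swept arc. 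The paper closes this gap with an extra argument about the {\tt Organization} phase: the landmark towers can only be created by Rule~$2A'6$, \ie when a $2$.tower is adjacent to three consecutive occupied nodes, which forces $u_{i+3}$ and $u_{i+4}$ to have been occupied (hence visited) before {\tt Middle} is reached. You need some such argument about the pre-{\tt Middle} history of these two specific nodes; without it the ``every node is visited'' half of the lemma is not established.
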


\begin{proof}
From both lemma \ref{STM} and \ref{MTT} we can deduce that starting from any initial configuration, {\tt Terminal} is reached in a finite time. In another hand, from Lemma \ref{STM} we are sure that {\tt Middle} configuration is reached in a finite time. Note that the only way to build the second towers ($T2$ and $T2'$) is to have a tower neighbor of a consecutive sequence of three robot. Since the distance between $R1$ and $T'2$ is equal to $3$, we are sure that these nodes have been occupied and thus they have been explored. In another hand $T1$, $T2$ and $R1$ keep moving towards $T1'$ until they will become neighbor. Hence, all the nodes that were between $T1$ and $T1'$ have been all visited. We can deduce then that all the nodes of the ring have been explored and the lemma holds.   
\end{proof}

\subsection{Asynchronous Exploration using $7$ robots}

Recall that the initial configuration in this case contains a $1$.block of size $7$.  The idea is similar to the one used in Algorithm \ref{algo:A29}. Let us first define two spacial configurations:

\begin{definition}
A configuration is called {\tt Inter} if there exists a sequence of seven nodes $u_i, u_{i+1}, \dots $\\
$, u_{i+5}, u_{i+6}$ (refer to Figure \ref{INTER}) such that: 

\begin{itemize}
\item $M_j=2$ for $j \in \{i, i+1, i+6\}$
\item $M_j=0$ for $j \in \{i+3, i+4, i+5\}$
\item $M_j=1$ for $j =i+2$
\end{itemize}
\end{definition}

\begin{definition}
A configuration is called {\tt Final} if there exists a sequence of four nodes $u_i, u_{i+1}, \dots $\\
$, u_{i+3}$  (refer to Figure \ref{FNL}) such that: 

\begin{itemize}
\item $M_j=2$ for $j =i$
\item $M_j=4$ for $j =i+2$
\item $M_j=0$ for $j=i+1$
\item $M_j=1$ for $j=i+3$
\end{itemize}
\end{definition}

\begin{figure}
\begin{minipage}[b]{.46\linewidth}
  \centering\epsfig{figure=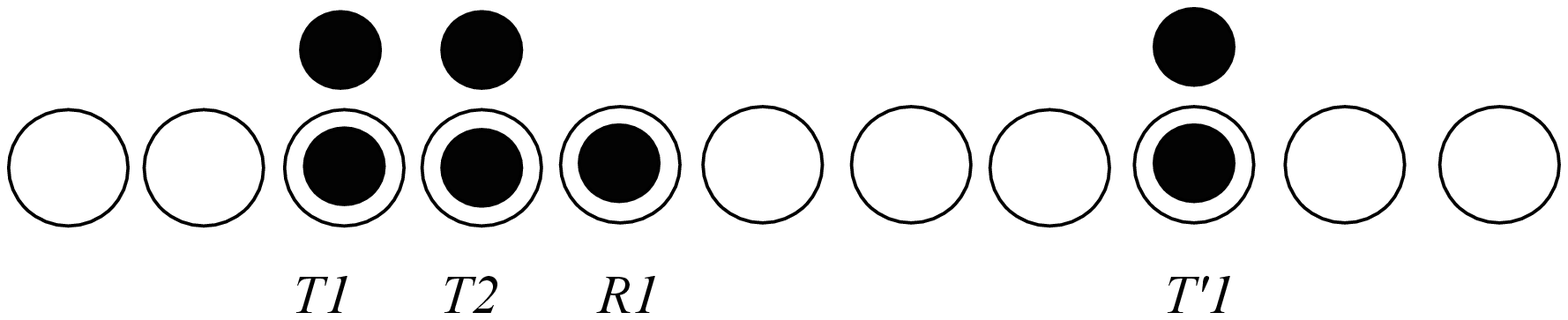,width=4cm}
 \caption{$INTER$ \label{INTER}}
 \end{minipage} \hfill
\begin{minipage}[b]{.46\linewidth}
  \centering\epsfig{figure=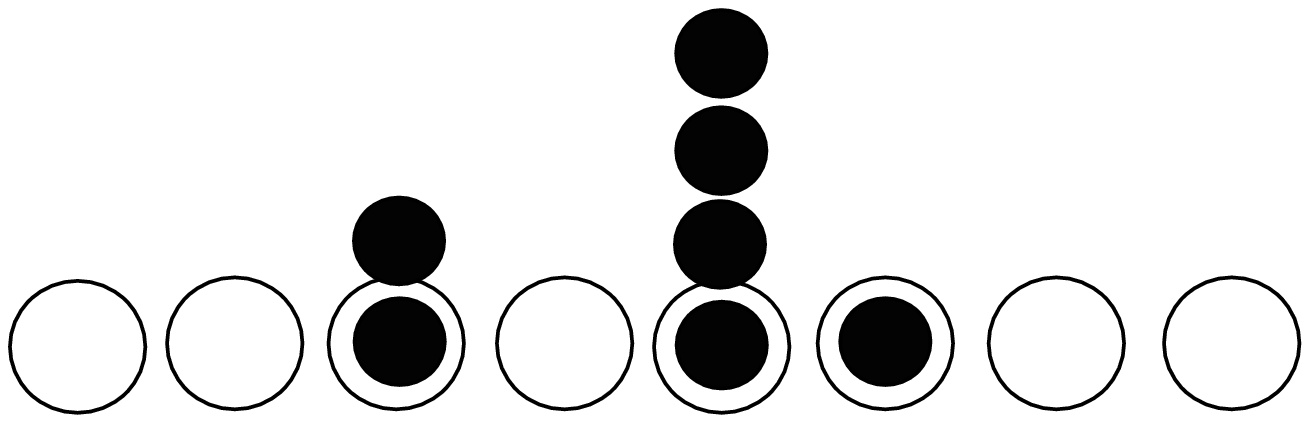,width=4cm}
 \caption{$Final$ configuration\label{FNL}}
\end{minipage}\hfill
\end{figure}

The algorithm comprises two phases as follow:

\begin{enumerate}
\item {\tt Preparation Phase:} The aim of this phase is to build an {\tt Inter} Configuration. The starting configuration contains a single 1.block of size $7$.
\item {\tt Exploration Phase}: In this phase a set of robots are elected to perform the exploration task. At the end of the exploration {\tt Final} Configuration is built to indicate the end of the exploration task.
\end{enumerate}

The formal description of the algorithm is given in Algorithm \ref{algo:A2bis}.

\begin{algorithm}[H]
\caption{Asynchronous Exploration using $7$ robots ($\phi=2$)}
\label{algo:A2bis}
\begin{scriptsize}

\begin{tabular}{rlcll}
\multicolumn{5}{l}{{\tt Preparation} Phase}\\
2A1: & $01(1)11$ & $::$ & $\leftarrow$ & \comment{Move to my adjacent node having a free node as a neighbor}
\\
2A2: & $20(1)1?$ & $::$ & $\leftarrow$ & \comment{Move to my adjacent free node}
\\
2A3: & $10(1)01$ & $::$ & $\leftarrow$ $\vee$ $\rightarrow$ & \comment{Move to one of my neighboring node}  
\\
2A4: & $00(1)20$ & $::$ & $\leftarrow$ & \comment{Move to my adjacent free node}
\\
2A5: & $10(1)02$ & $::$ & $\leftarrow$ & \comment{Move in the opposite direction of the tower}
\\
2A6: & $21(1)10$ & $::$ & $\leftarrow$ & \comment{Move toward the $2$.tower}
\\
2A7: & $20(1)00$ & $::$ & $\leftarrow$ & \comment{Move towards the $2$.tower}
\medskip\\
\multicolumn{5}{l}{{\tt Exploration} Phase}\\
2A6: & $21(1)10$ & $::$ & $\leftarrow$ & \comment{Move toward the $2$.tower}
\\
2A7: & $20(1)00$ & $::$ & $\leftarrow$ & \comment{Move toward the $2$.tower}
\\
2A8: & $00(2)21$ & $::$ & $\leftarrow$ & \comment{Move to my adjacent free node}
\\
2A9: & $01(1)21$ & $::$ & $\leftarrow$ & \comment{Move to my adjacent node having one free node as a neighbor}
\\
2A10: & $20(2)10$ & $::$ & $\leftarrow$ & \comment{Move to my adjacent free node}
\\
2A11: & $20(2)21$ & $::$ & $\leftarrow$ & \comment{Move to my adjacent free node}
\\
2A12: & $21(1)21$ & $::$ & $\leftarrow$ & \comment{Move in the opposite direction of the neighboring $2$.tower}
\\
2A13: & $02(2)21$ & $::$ & $\rightarrow$ & \comment{Move to my adjacent node having one robot as a neighbor}
\\
2A14: & $02(1)32$ & $::$ & $\rightarrow$ & \comment{Move toward the $3$.tower}
\end{tabular}
\end{scriptsize}
\end{algorithm}

\paragraph{Proof of correctness}
We prove in the following the correctness of our algorithm.

\begin{lemma}\label{1BIN}
Starting from a configuration that contains a single 1.block, a configuration of type {\em Inter} is eventually reached.
\end{lemma}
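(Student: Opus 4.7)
The plan is to prove Lemma~\ref{1BIN} by a forward, exhaustive case analysis on the configurations reachable from the initial $1$.block of size $7$ under the Preparation Phase rules of Algorithm~\ref{algo:A2bis}, using the fairness of the asynchronous scheduler to conclude that an Inter configuration is reached in finite time. Throughout, I would exploit the fact that $n \geq \phi k + 1 = 15$, so the $1$.block is surrounded by at least one empty node in every direction during every intermediate configuration considered.

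First, I would observe that in the initial configuration of seven consecutive occupied nodes, the only Preparation Phase rule that is enabled is Rule 2A1: its local view $01(1)11$ matches exactly the two robots lying second from each extremity of the block (one directly and the other via mirroring), while every other Preparation rule requires a tower or an isolated robot, neither of which is present in the pure $1$.block. Thus any execution must begin with one or both of these robots moving toward its respective extremity, creating a $2$.tower there.

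Next, I would case-split on the order in which the adversary activates these two 2A1-enabled robots. In the first case, both extremities fire 2A1 before any other rule becomes applicable, yielding the configuration $2\,0\,1\,1\,1\,0\,2$ (padded with empty nodes). In the second case, only one side fires first, say the left one, yielding $2\,0\,1\,1\,1\,1\,1$; here Rule 2A1 remains enabled at the right inner robot while Rule 2A2 is freshly enabled at the exposed inner robot on the left (view $20(1)11$). In the latter case I would enumerate the interleavings of the two rules and argue, via fairness, that Rule 2A1 eventually fires, so the system passes through transient configurations such as $2\,1\,0\,1\,1\,1\,1$ and ultimately reaches $2\,0\,1\,1\,1\,0\,2$ or $2\,1\,0\,1\,1\,0\,2$, both subsumed by the first case.

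From such intermediate configurations I would continue tracking rule firings. Rules 2A3--2A7 are engineered to herd the inner robots toward one of the two extremity $2$.towers, producing successive configurations ever closer to the Inter shape $2\,2\,1\,0\,0\,0\,2$. At each step I would exhibit a potential function, for instance the sum $\Phi=\sum_r d(r,T_L)$ of distances from the non-tower robots to a fixed extremity $2$.tower $T_L$, and verify that $\Phi$ strictly decreases (or attains its target value) whenever any Preparation rule fires. Combined with fairness, this forces the system into a configuration in which no Preparation Phase rule is enabled, and by the careful design of the rule set this terminal configuration is exactly Inter.

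The main obstacle, and the bulk of the technical work, is handling the interleavings in which the adversary activates only one of the two robots forming a $2$.tower (for instance via a rule such as 2A2 whose guard contains a multiplicity-$2$ symbol only in the view, not at the firing robot), thereby temporarily splitting the tower into a pair of single robots whose local view might, in principle, match an unintended rule from either phase. For each such transient, I would check that the only enabled rule is one that either restores the tower or smoothly transforms it into the next desired tower, so that no spurious third tower is produced and the already-formed extremity structure at the opposite end is never disturbed. I would also verify that no Exploration Phase rule becomes enabled during these transients, so the Preparation Phase alone suffices. Once these transients are certified benign, the forward analysis closes and the lemma follows.
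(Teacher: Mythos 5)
Your overall strategy --- a forward simulation of the Preparation-Phase rules combined with a termination argument --- is in the same spirit as the paper's proof, and your first two steps (only Rule $2A1$ is enabled initially; the interleavings of the two extremities converge to $2\,0\,1\,1\,1\,0\,2$) are correct. However, the termination argument has a genuine gap. The potential $\Phi=\sum_r d(r,T_L)$ measured from a \emph{fixed} extremity tower $T_L$ does not strictly decrease: Rule $2A3$ has the symmetric guard $10(1)01$, so the adversary chooses which tower the middle robot approaches, and Rules $2A4$ and $2A5$ then deliberately move the robot adjacent to the \emph{other} tower away from it (from $2\,1\,0\,0\,1\,1\,2$ that robot travels via $2\,0\,1\,0\,1\,1\,2$ to $2\,0\,0\,1\,1\,1\,2$, increasing its distance to the left tower from $1$ to $3$). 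Any correct measure must be defined relative to the tower selected at the $2A3$ step, and even then you still owe the check that the configuration in which no Preparation rule is enabled is exactly Inter and that no unintended rule (from either phase) matches along the way; the paper discharges this by exhibiting the unique enabled rule at each of the intermediate configurations $2\,0\,1\,1\,1\,0\,2$, $2\,1\,0\,1\,0\,1\,2$, $2\,1\,0\,0\,1\,1\,2$, $2\,0\,0\,1\,1\,1\,2$, $2\,0\,0\,1\,0\,2\,2$, and $2\,0\,0\,0\,1\,2\,2$.

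You also devote the bulk of your technical discussion to a non-issue: every Preparation-Phase guard of Algorithm~\ref{algo:A2bis} has central multiplicity $1$ (including $2A2$, whose symbol $2$ sits at distance two from the firing robot, which is itself a single robot), so no robot belonging to a $2$.tower is ever activated during this phase and towers are never split. The genuine asynchrony difficulty is elsewhere: before Rule $2A3$ breaks the symmetry, the two ends of the size-$7$ block evolve independently and can be interleaved arbitrarily. The paper's key observation is that with $\phi=2$ the robots at one extremity cannot see the other extremity of a block of length $7$, so every interleaving leads to the same configuration; your plan to enumerate interleavings would work, but it must be carried out explicitly rather than delegated to the flawed potential function.
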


\begin{proof}
When the configuration contains a single 1.block of size $7$, Rule $2A1$ is enabled on the robots that are neighbor of the extremity of the $1$.block. Once such robots move, a $2$-tower is created at each extremity of the 1.block. Rule $2A2$ becomes enabled on the robots that are at distance $2$ from the $2$-towers. Once they move, they become neighbor of a $2$-tower. Note that in the case the scheduler activates only robots at one extremity, we are sure that the configuration that will be reached in this case is exactly the same as when the scheduler activates both robots at each extremity since no robot will be able to move besides the ones at the other extremity. In another hand, since robots that did not move in the other extremity are not aware of the changes in the current configuration (recall that they are only able to see at distance $2$), they will have the same behavior as when they are activated at the same time with the robots at the other extremity.
Rule $2A3$ becomes then enabled. Since the view of the robot on which Rule $2A3$ is enabled, is symmetric the scheduler is the one that chooses the direction to take. Once such robot moves, it becomes neighbor of a single robot and it will be able to see one $2$-tower. In the resulting configuration, Rule $2A4$ becomes enabled on the robot that is neighbor of a $2$-tower having an empty node as a neighbor. Once this robot moves, it becomes able to see the other part that it could not see before moving. This robot continue to move towards the single robot until they become neighbor (refer to Rule $2A5$). Observe that one tower remains unseen by the other robots. The robot that is in the middle of the $1$.block of size $3$ that is neighbor of the tower of a $2$-tower is the only one that can move (refer to Rule $2A6$). When this robot moves, a new $2$.tower is created. The robot that is at distance $2$ from the new $2$.tower (note that there is only one such robot) is the only robot allowed to move, its destination is its adjacent empty node towards the $2$.tower (refer to Rule $2A7$). When it moves it becomes neighbor of the tower of size $2$. {\tt Inter} configuration is then reached and the Lemma holds.
\end{proof}

\begin{lemma}\label{INTFL}
Starting from a configuration of type {\tt INTER}, a configuration of type {\tt Final} is eventually reached.
\end{lemma}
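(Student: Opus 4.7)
The plan is to adapt the argument of Lemma~\ref{MTT} to the 7-robot setting, exploiting the fact that INTER is structurally MIDDLE with one of the two adjacent landmark $2$-towers removed. I would label the seven robots by role: the two adjacent $2$-towers at $u_i$ and $u_{i+1}$, together with the single robot at $u_{i+2}$, form an \emph{Explorer-Sequence}, while the $2$-tower at $u_{i+6}$ is a stationary \emph{landmark}. A routine check of the $\phi=2$ views shows that while the Explorer-Sequence is far from the landmark, the landmark sees $00(2)00$ and no Exploration-phase rule fires on it; the unique initially enabled rule is $2A8$ on the leftmost $2$-tower, whose view is $00(2)21$.

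I would then establish a \emph{migration invariant}: starting from INTER, a finite sequence of moves returns the system to INTER shifted by one node in the direction opposite to the landmark. The triggered rules, in order, are $2A8$ on the leftmost tower, then $2A10$ on the second tower (whose view becomes $20(2)10$ once the leftmost has moved), and finally $2A7$ on the single robot (view $20(1)00$). By induction on the number of migration cycles, the Explorer-Sequence eventually visits every node outside the initial $1$-block. The three nodes $u_{i+3}, u_{i+4}, u_{i+5}$ that lie between the Explorer-Sequence and the landmark never host a robot during the Exploration phase, but they were already visited during the Preparation phase as part of the initial $1$-block of size $7$, so full ring coverage is achieved before termination.

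When the leftmost tower finally reaches distance $2$ from the landmark coming around the other side of the ring, its view becomes $20(2)21$, and rule $2A11$ fires instead of $2A8$. Successive applications of $2A10$ and $2A7$ slide the whole Explorer-Sequence flush against the landmark, producing the local pattern $2\,2\,2\,1$ over four consecutive nodes. At that point the leading tower sees $02(2)21$, and rule $2A13$ moves it onto its right-hand neighbour, yielding the $2\,0\,4\,1$ local pattern required by Final.

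The main obstacle is the bookkeeping for asynchronous partial moves, in which the scheduler activates only one robot of a $2$-tower at a given instant. For each of the ``forward'' rules $2A8$, $2A10$, $2A11$, $2A13$ one must verify that the split-tower view enables exactly one complementary ``repair'' rule ($2A9$, $2A6$, $2A12$, $2A14$ respectively), that this repair rule reassembles the tower (or completes the intended merge) before any further drift can occur, and that throughout the repair window the landmark's view remains inert. These are finitely many short view calculations; once they are in place, the lemma follows by combining the migration induction with the final-merge analysis above.
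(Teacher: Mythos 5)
Your proposal follows essentially the same route as the paper's proof: identify the two adjacent $2$-towers plus the single robot as the Explorer-Sequence, treat the remaining $2$-tower as a stationary landmark, argue that the sequence migrates step by step around the ring until it reaches the landmark, and finish with rule $2A13$ (repaired by $2A14$ under partial activation) to form the $4$-tower of the {\tt Final} configuration. The only difference is presentational — the paper imports the migration argument wholesale from Lemma~\ref{MTT} for the $9$-robot algorithm, whereas you re-derive it with the $7$-robot rules $2A8$, $2A10$, $2A7$ and their repair counterparts, which makes your version more self-contained but substantively identical.
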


\begin{proof}
When {\tt Inter} configuration is reached, a subset of robots are elected to perform the exploration task. These robots are $T1$, $T2$ and $R1$  (refer to Figure \ref{INTER}). Observe that these robots form exactly the same sequence ({\em Explorer-Sequence}) as in  {\tt Middle} configuration (refer to Algorithm \ref{algo:A29}). It has been shown in Lemma \ref{MTT} that this subset of robots keep moving towards $T'1$ until they become neighbor of $T'1$. Rule $2A13$ becomes then enabled on the robots part of $T1$. If the scheduler activates both robots at the same time, a $4$-tower is created and the configuration reached is {\tt Final}. In the case the scheduler activates only one robot from $T1$ then a $3$-tower is created. There will be one robot not part of a tower that is neighbor to the $3$-tower. This robot is the one allowed to move (refer to Rule $2A14$). Once the rule is executed a $4$-tower is created and the configuration is {\em Final}. Thus, the lemma holds.
\end{proof}

\begin{lemma}
Starting from a configuration that contains a single 1.block of size $7$, a configuration of type $Final$ is eventually reached such that all the nodes of the ring have been explored.
\end{lemma}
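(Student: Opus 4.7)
The plan is to combine Lemmas \ref{1BIN} and \ref{INTFL} to obtain termination in a Final configuration, and then to argue separately that every node of the ring has been visited along the way.

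Termination is immediate. Lemma \ref{1BIN} shows that every execution from the initial 1.block of size $7$ eventually reaches an \emph{Inter} configuration, and Lemma \ref{INTFL} then shows that every execution from \emph{Inter} reaches \emph{Final}. Since no rule of Algorithm \ref{algo:A2bis} is enabled on a \emph{Final} configuration (the $4$.tower together with the flanking $2$.tower and isolated robot yields views that match no left-hand side of the algorithm), \emph{Final} is terminal and we have termination in finite time.

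For full visitation, the seven nodes initially occupied by the 1.block are visited from the outset. During the Preparation Phase, inspection of rules $2A1$--$2A7$ shows that the executing robots stay within a contiguous window of bounded size around the original 1.block, so the landmark $2$.towers of the \emph{Inter} configuration sit on nodes that are already visited. The real work is in the Exploration Phase, where the Explorer-Sequence identified in the proof of Lemma \ref{INTFL}, namely $T_1, T_2, R_1$, traverses the empty arc of the ring toward the stationary landmark $T'_1$. Following the argument of Lemma \ref{MTT} adapted to the seven-robot case, each conceptual round consisting of rules $2A8, 2A10, 2A7$ (with possible intermediate repairs $2A9, 2A6, 2A12$) shifts the whole Explorer-Sequence one node closer to $T'_1$, and the node freshly entered by $T_1$ at that round was previously empty. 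Iterating until $T_1$ becomes adjacent to $T'_1$ therefore exhausts the empty arc, so every node of the ring is occupied by some robot at some instant of the computation.

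The main technical obstacle is that asynchrony may fragment a single conceptual round into several atomic moves (for instance, rule $2A8$ may destroy $T_1$ and require $2A9$ to rebuild it). The invariant to maintain is that after each complete round, the window of seven consecutive nodes around $T_1, T_2, R_1$ exhibits the same local multiplicity pattern as \emph{Inter}, merely translated by one position toward $T'_1$. Once this invariant is established the visitation property follows immediately, and combined with termination at \emph{Final} the lemma holds.
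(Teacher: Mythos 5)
Your proposal is correct and follows essentially the same route as the paper's own proof: termination by combining Lemmas~\ref{1BIN} and~\ref{INTFL}, and visitation by noting that the initial $1$.block covers one arc while the Explorer-Sequence $T1,T2,R1$ sweeps the remaining empty arc until $T1$ reaches $T'1$. Your additional observations (that \emph{Final} is terminal and that asynchronous fragmentation of a round is repaired so the \emph{Inter} pattern is merely translated) are refinements already implicit in the paper's appeal to Lemma~\ref{MTT}.
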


\begin{proof}
From Lemma \ref{1BIN} and \ref{INTFL} we can deduce that starting from a configuration that contains a single 1.block, {\tt Final} configuration is reached in a finite time. In another hand, when {\tt Inter } configuration is reached, we are sure that the nodes between $R1$ and $T'1$ have been explored since these nodes have been occupied by robots at the beginning by the 1.block (refer to Figure \ref{INTER}). The sub set of robots that have performed the exploration task ($T1$, $T2$ and $R1$) keep moving towards $T'1$ until it becomes neighbor of $T'1$. Thus we are sure that the nodes that were between $T1$ and $T'1$ in {\tt Inter} configuration have been visited. Thus all the nodes of the ring have been explored and the lemma holds. 
\end{proof}

\section{Visibility $\phi=3$}
\label{sec:phi=3}

In this section we first prove that no exploration is possible with $4$ robots when $n>13$. We then present two deterministic algorithms that solves the exploration. The first one uses $7$ robots and works starting from any towerless configuration that contains a single $\phi$.group. The second one uses $5$ robots and works only when the starting configuration contains a single 1.block of size $5$. In both solutions, we suppose that $n\geq k\phi+1$.

\subsection{Negative Results}

%
%

In the following, let us consider a configuration $\gamma_t$ in which there is a $3$.group of size $4$ (let call this configuration {\em Locked}). Let $u_i, u_{i+1}, \dots, u_{i+9}$ be the nodes part of  the $3$.block in $\gamma_t$ such as $u_i, u_{i+3}, u_{i+6},u_{i+9}$ are occupied. Let $r_1$ and $r_4$ be the robots that are at the extremity of the $3$.block such tha $r_1$ is on $u_i$ and let $r_2$ and $r_3$ be the robots that are inside the $3$.block such that $r_2$ is on $u_{i+3}$. The following three rules are the only ones can be enabled:

\begin{enumerate}
\item \label{3R1} 000(1)001 :: $\leftarrow$
\item \label{3R2} 000(1)001 :: $\rightarrow$
\item \label{3R3} 100(1)001 :: $\leftarrow$ $\vee$ $\rightarrow$
\end{enumerate}

\begin{lemma}\label{L3R1}
Let $\PR$ be a semi-synchronous exploration protocol for $\phi = 3$, $n>13$, and $k<n$.  Then, $\PR$ does not include Rule~\ref{3R1}.
\end{lemma}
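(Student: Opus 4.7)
I will argue by contradiction: suppose $\PR$ includes Rule~\ref{3R1}, namely $000(1)001 :: \leftarrow$. In the Locked configuration $\gamma_t$, both extremal robots $r_1$ at $u_i$ and $r_4$ at $u_{i+9}$ see their nearest visible neighbor at distance $3$ while the opposite side is empty within visibility, so each matches view $000(1)001$ and Rule~\ref{3R1} is enabled on both. My plan is to build an adversarial schedule that produces a two-configuration cycle, contradicting the fact that an exploration protocol must terminate in finite time.

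The cycle lies between two configurations obtained as follows. Starting from $\gamma_t$, I activate $\{r_1, r_4\}$; Rule~\ref{3R1} carries $r_1$ to $u_{i-1}$ and $r_4$ to $u_{i+10}$, yielding $\gamma' = (u_{i-1}, u_{i+3}, u_{i+6}, u_{i+10})$. In $\gamma'$, the robots $r_1$ and $r_4$ are isolated (their nearest neighbors are now at distance $4 > \phi$), while $r_2$ and $r_3$ each have a single visible neighbor at distance $3$ and hence view $000(1)001$. Activating $\{r_2, r_3\}$ then moves them outward, giving $\gamma'' = (u_{i-1}, u_{i+2}, u_{i+7}, u_{i+10})$, where $\{r_1, r_2\}$ and $\{r_3, r_4\}$ each form a $3$.group of size $2$. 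Crucially, in $\gamma''$ all four robots still match view $000(1)001$ in one of the two allowed orientations, so Rule~\ref{3R1} is again enabled on each; activating only the inner pair $\{r_2, r_3\}$ now causes them to move back inward (away from their nearby outer partners $r_1$ and $r_4$), restoring $\gamma'$.

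The schedule $\gamma' \to \gamma'' \to \gamma' \to \dots$ yields an infinite execution provided fairness can be respected. The only concern is $r_1$ and $r_4$, which must be activated infinitely often. If $\PR$ contains no rule for the isolated view $000(1)000$, I simply include $r_1$ and $r_4$ in the activation set whenever we are in $\gamma'$, a no-op that maintains the cycle while satisfying fairness. If $\PR$ does contain a rule for $000(1)000$ (necessarily of the form $\leftarrow \vee \rightarrow$ by symmetry of the view), I instead use the cycle $\gamma_t \leftrightarrow \gamma'$, where the adversary chooses directions so that $r_1$ and $r_4$ shuttle back to $u_i$ and $u_{i+9}$. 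Potential disruption from other rules in $\PR$ (for instance, a rule of the form $100(1)001 :: \leftarrow \vee \rightarrow$ enabling $r_2$ and $r_3$ in $\gamma_t$) is handled by exploiting the adversary's freedom to activate only carefully chosen subsets and, for every symmetric-view rule, to pick the direction that restores the cycle.

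The main obstacle is the case analysis over the possible auxiliary rules of $\PR$: each combination must be reconciled with the cycle construction. The argument ultimately rests on two observations: every rule whose guard is a symmetric view must carry the action $\leftarrow \vee \rightarrow$, giving the adversary directional freedom; and fairness only demands infinite activation, not infinite movement, so isolated robots with no enabled rule can absorb the fairness requirement without perturbing the cycle. In every case, I exhibit a fair infinite execution, contradicting the termination requirement of $\PR$.
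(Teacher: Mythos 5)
Your central two-step cycle is sound and is essentially the paper's idea in symmetric form: the paper activates $r_1$, then $r_2$, then $r_3$ one at a time under Rule~\ref{3R1} and obtains, after three steps, a configuration undistinguishable (as a rotated mirror) from the one reached after the first step, whereas you activate the pairs $\{r_1,r_4\}$ and then $\{r_2,r_3\}$ and obtain an exact recurrence of $\gamma'$. However, as written your argument has two concrete problems. First, your view claims fail at the boundary $n=14$: after $r_1$ and $r_4$ move outward, the occupied span covers $12$ nodes, so on a $14$-node ring only two free nodes separate $r_1$ from $r_4$ around the back. They are then at distance $3$ from each other, so in $\gamma'$ they are \emph{not} isolated (their view is $000(1)001$ and Rule~\ref{3R1} is enabled on them), and in $\gamma''$ their view is the symmetric $100(1)001$ rather than $000(1)001$. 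The cycle $\gamma'\leftrightarrow\gamma''$ itself survives this (only $r_2$ and $r_3$ are activated, and their views are as you state for every $n>13$), but the justification of your fairness patch --- that activating $r_1,r_4$ in $\gamma'$ is a no-op --- collapses for $n=14$.

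Second, the fairness discussion is where the genuine gap lies: your fallback cycle $\gamma_t\leftrightarrow\gamma'$ never activates $r_2$ and $r_3$, which are enabled by Rule~\ref{3R1} in $\gamma'$, so that schedule is itself unfair; and you explicitly defer ``the case analysis over the possible auxiliary rules of $\PR$,'' which is exactly the part that would have to be carried out to close the proof along your chosen route. The good news is that none of it is needed. The finite prefix $\gamma_t,\gamma',\gamma'',\gamma'$ already exhibits the configuration $\gamma'$ at two non-consecutive instants of an execution of $\PR$, i.e., a pair of undistinguished configurations; Theorem~\ref{th:undist} then gives the contradiction directly, with no appeal to fairness and no need to control $r_1$ and $r_4$ after the first step. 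This is precisely how the paper closes its own one-robot-at-a-time version of the construction. Replacing your last two paragraphs by that single invocation of Theorem~\ref{th:undist} makes the proof complete for all $n>13$.
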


\begin{proof}
By contradiction, assume that $\PR$ includes Rule~\ref{3R1}. Note that  Rule~\ref{3R1} is only enabled on $r_1$ and $r_4$. Assume that the adversary activates only $r_1$ in$\gamma_t$. Let us refer to the resulting configuration by $ST$ (standing for "Second Trap" configuration). Once $r_1$ moves, it becomes an isolated robot on $u_{i-1}$ in $ST$. $u_{i+3}$ becomes the border of the $3$.group. 
Assume that in $ST$, the adversary activates $r_2$ (on $u_{i+1}$) that executes Rule~\ref{3R1}.  
Then, Configuration~$ST+1$ includes two $3$.groups disjoint by three nodes.  The former includes $r_1$ and $r_2$ 
(on $u_{i-1}$ and $u_i$, respectively).  The latter forms the sequence $r_3\ldots r_4$, located on $u_{i+6} \ldots
u_{i+9}$. Suppose that the scheduler activates $r_3$ that is on node $u_{i+6}$ that executes Rule ~\ref{3R1}. Once it moves, the resulting 
configuration $ST+2$ is undistinguishable from $ST$.  A contradiction (Theorem~\ref{th:undist}). 
\end{proof}

\begin{lemma}\label{L3R2}
Let $\PR$ be a semi-synchronous exploration protocol for $\phi = 3$, $n>13$, and $k<n$.  Then, $\PR$ does not include Rule ~\ref{3R2}.
\end{lemma}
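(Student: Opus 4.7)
The plan is to mimic the structure of the proof of Lemma~\ref{L3R1}: assume by contradiction that $\PR$ contains Rule~\ref{3R2} and, starting from the Locked configuration $\gamma_t$, exhibit an execution that either contradicts Theorem~\ref{th:undist} or fails to visit every node of the ring. The starting observation is that in $\gamma_t$ the view of $r_1$ and $r_4$ is $000(1)001$; by Lemma~\ref{L3R1} Rule~\ref{3R1} is excluded, and Rule~\ref{3R3} does not match this view, so Rule~\ref{3R2} is the unique action $\PR$ can apply to the extremities. First, I would have the adversary activate $r_1$ alone, sliding it inward onto $u_{i+1}$; call this configuration~$T$.

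The second move exploits the narrow visibility. Because $n>13$, the node $u_{i+1}$ lies at distance at least $8>\phi=3$ from $u_{i+9}$, so $r_4$'s view in $T$ is unchanged and Rule~\ref{3R2} is still enabled on it. Fairness forces $r_4$ to be activated eventually; the adversary picks it alone, sending it inward to $u_{i+8}$ and producing a configuration $T+1$ whose robots sit at $u_{i+1},u_{i+3},u_{i+6},u_{i+8}$. The configuration $T+1$ is mirror-symmetric about the midpoint of the robot arc, and the views that occur there are $000(1)010$, $010(1)001$ and their mirrors, none of which matches the guard of Rule~\ref{3R1}, \ref{3R2}, or \ref{3R3}.

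To close the argument I would case-split on what $\PR$ prescribes for the extremity view $000(1)010$. If $\PR$ contains no rule at all for the views occurring in $T+1$, then $T+1$ is terminal while only six of the $n>13$ nodes of the ring have been occupied, contradicting Definition~\ref{def:exploration}. If the action assigned to $000(1)010$ is $\leftarrow$, the adversary activates $r_1$ and $r_4$ simultaneously: $r_1$ slides back to $u_i$, and, by the mirror orientation of $r_4$'s view, $r_4$ slides back to $u_{i+9}$, so the system returns to the exact Locked configuration $\gamma_t$; thus $\gamma_t$ and $\gamma_{t+3}$ are identical and hence undistinguished, contradicting Theorem~\ref{th:undist}. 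If the action is $\rightarrow$, the same simultaneous activation pulls $r_1$ onto $u_{i+2}$ and $r_4$ onto $u_{i+7}$, creating the adjacent pairs $r_1r_2$ and $r_3r_4$; here the plan is to reuse the tower-creation and dissolution arguments in the style of Lemma~\ref{c-general} and drive the scheduler into an undistinguished repetition.

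The main obstacle is this last sub-case, where two close pairs bring several previously unseen guards into play simultaneously. I expect to control it by always activating symmetric pairs together, so that the determinism of $\PR$ forces its choices to respect the mirror symmetry of $T+1$; this collapses the branching back to essentially the two directions of action already handled, and hence to one of the two contradictions obtained above.
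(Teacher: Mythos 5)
Your proposal follows the paper's route up to the key configuration: the paper activates $r_1$ and $r_4$ simultaneously to reach the configuration it calls $ST'$ (robots on $u_{i+1},u_{i+3},u_{i+6},u_{i+8}$), which is exactly your $T+1$, and your detour through two single activations is harmless since, as you correctly check, $n>13$ keeps $r_4$'s view unchanged after $r_1$ moves. Your handling of the sub-case where the extremity rule $000(1)010$ prescribes $\leftarrow$ (return to $\gamma_t$, contradicting Theorem~\ref{th:undist}) is also sound. However, there is a genuine gap in your case split: you branch only on what $\PR$ prescribes for the \emph{extremity} view $000(1)010$, treating ``no rule for any view occurring in $T+1$'' as the only remaining alternative. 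This misses the case where $\PR$ has no rule for $000(1)010$ but does have a rule for the \emph{interior} view $010(1)001$ (enabled on $r_2$ and $r_3$). Then $T+1$ is not terminal, yet neither of your two contradictions applies. The paper devotes its Cases~(a) and~(b) precisely to the rules $010(1)001::\leftarrow$ and $010(1)001::\rightarrow$, and Case~(b) in particular requires a further cascade of sub-cases (guards $000(1)001$, $100(1)100$, $000(1)011$, $101(1)010$, tower creation, and an appeal to the two-robot impossibility of~\cite{DPT09c}) before a contradiction is reached.

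The second weakness is that your hardest sub-case --- the extremity rule being $\rightarrow$, producing the two adjacent pairs on $u_{i+2},u_{i+3}$ and $u_{i+6},u_{i+7}$ --- is left as a plan (``reuse the tower-creation and dissolution arguments \dots drive the scheduler into an undistinguished repetition''). In the paper this is Case~(c), and it is not closed by a generic symmetry argument: one must separately examine the rules $001(1)001::\leftarrow$ (towers, then no exploration with effectively two agents), $001(1)001::\rightarrow$ (a new $1$.block of size $2$, after which $r_2,r_3$ are stuck), and the two possible actions for $000(1)011$, one of which feeds back into Case~(b) and the other of which leads to isolated robots oscillating under $000(1)000::\leftarrow\vee\rightarrow$. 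Your intuition that mirror-symmetric simultaneous activations collapse the branching to ``two directions'' is the right organizing principle, but it does not by itself dispose of these configurations: several of them are not undistinguishable from earlier ones and must instead be shown to deadlock or to reduce to a previously settled case. As written, the proof is incomplete on both counts.
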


\begin{proof}
By contradiction, assume that $\PR$ includes Rule~\ref{3R2}. Note that  Rule~\ref{3R2} is only enabled on $r_1$ and $r_4$. Suppose that the scheduler activates both $r_1$ and $r_4$ at the same time. Let $ST'$ be the reached configuration once such robots move. The following rules are then possible:

             \begin{enumerate}[a.]
              \item \label{3R'1} 010(1)001 :: $\leftarrow$
              \item \label{3R'2} 010(1)001 :: $\rightarrow$
              \item \label{3R'3} 000(1)010 :: $\rightarrow$ 
             \end{enumerate} 
             
\begin{itemize}
\item \textbf{Case (a)}: Rule(\ref{3R'1}) is enabled. Assume that the scheduler activates both $r_2$ and $r_3$ at the same time. Once they move, all robots on the ring have the same view. $(i)$ If they execute $000(1)100$ :: $\rightarrow$ then suppose that the scheduler activates them all at the same time, then the configuration reached is undistinguishable from the previous one that contradicts Theorem~\ref{th:undist}. $(ii)$ if they execute $000(1)100$ :: $\leftarrow$. Assume that the scheduler activates only $r_2$ and $r_3$. Once the robots move the configuration reached is indistinguishable from $ST'$. A contradiction (Theorem~\ref{th:undist}).
\item \textbf{Case (b)}: Rule (\ref{3R'2}) is enabled. Assume that the scheduler activates both $r_2$ and $r_3$ at the same time. $r_2$ and $r_3$ become then neighbors. The following Rules are possible:
                      \begin{enumerate}[{b}1.]
                           \item \label{3R'11} 000(1)001 :: $\leftarrow$
                           \item \label{3R'22} 000(1)001 :: $\rightarrow$
                           \item \label{3R'33} 100(1)100 :: $\rightarrow$
                       \end{enumerate} 
                             
                         \begin{itemize}
                         \item \textbf{Case(b1)}. Rule (b\ref{3R'11}) is enabled. Suppose that the scheduler activates both $r_1$ and $r_4$ at the same time. Once the robots move they cannot see any occupied node (recall that $n>13$). Thus if they execute rule $000(1)000$ :: $\leftarrow \vee \rightarrow$, the robot move back to the previous location. A contradiction (Theorem~\ref{th:undist}).
                          \item \textbf{Case(b2)}.  Rule (b\ref{3R'22}) is enabled. Suppose that the scheduler activates both $r_1$ and $r_4$ at the same time. $(i)$ If $r_1$ and $r_4$ execute rule $000(1)011$ :: $\rightarrow$, a $1$.block is created. If the robots inside the $1$.block are the ones allowed to move, they either exchange their position (A contradiction, Theorem~\ref{th:undist}). Or two $2$-towers are created. Suppose that the scheduler always activate robots part of the $2$-tower at the same time. Thus, they will have the same behavior and act as a single robots. A contradiction (From \cite{DPT09c}: no exploration is possible with only $2$ robots even if the view is infinite).
$(ii)$ If rule $101(1)010$ :: $\rightarrow$ enabled then, if the scheduler activates both robots at the same time no exploration is possible since whatever the robots that move an indistinguishable configuration is reached. A contradiction (Theorem~\ref{th:undist}).    
                            \item \textbf{Case(b3)}. Rule (b\ref{3R'33}) is enabled. Suppose that the scheduler activates $r_2$ and $r_3$ at the same time. They will simply exchange their position. The configuration that is reached is undistinguishable from the previous one. A contradiction (Theorem~\ref{th:undist}).
                         \end{itemize}
                           \item \textbf{Case (c)}: Rule (\ref{3R'3}) is enabled. Once $r_1$ and $r_4$ move, two 1.blocks are created. If $r_2$ and $r_3$ are the ones allowed to move then if they execute $001(1)001$:: $\leftarrow$, two towers are created and no exploration is possible (refer to \cite{DPT09c}). If they execute $001(1)001$:: $\rightarrow$, a new $1$.block of size $2$ is created. Note that in the reached configuration $r_2$ and $r_3$ cannot move anymore. If $r_1$ and $r_4$ execute $000(1)011$ :: $\rightarrow$, a $1$.block of size $4$ is created and no exploration is possible (refer to Case (b)). If they execute $000(1)011$ :: $\leftarrow$, they become at distance $3$ from the $1$.block. Let refer to the reached configuration by $T'$. If $r_1$ and $r4$ keep being enabled on $T'$, they can only execute $000(1)001$ :: $\leftarrow$. Suppose that the scheduler activates both robots at the same time. Once they move they cannot see any other robot in $T'+1$. $(i)$ If  $000(1)000$ :: $\leftarrow$ $\vee$ $\rightarrow$ is enabled. Then suppose that the scheduler activates them at the same time such that they move back to their previous position. Thus, $T'+2$ is indistinguishable from $T'+1$. Contradiction (Theorem~\ref{th:undist}). $(ii)$ if $r_2$ and $r_3$ are the enabled robots then if they execute $100(1)100$ :: $\leftarrow$, then the configuration reached is indistinguishable from the configuration reached when Rule (\ref{3R'2}) is executed. Thus no exploration is possible in this case too.
\end{itemize}
\end{proof}

\begin{lemma}\label{L3R3}
Let $\PR$ be a semi-synchronous exploration protocol for $\phi = 3$, $n>13$, and $k<n$.  Then, $\PR$ does not include Rule~\ref{3R3}.
\end{lemma}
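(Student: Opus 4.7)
The plan is a contradiction that mirrors Lemmas~\ref{L3R1} and~\ref{L3R2}. Assume $\PR$ includes Rule~\ref{3R3}. Because its guard $100(1)001$ is symmetric, in the Locked configuration $\gamma_t$ Rule~\ref{3R3} is enabled only on the two interior robots $r_2$ and $r_3$ of the $3$.block, and when activated the scheduler freely chooses each robot's direction. I take $\gamma_t$ itself as the initial configuration of the execution; it is towerless and consists of a single $\phi$.group, hence is a legitimate starting configuration, and only the four nodes $u_i, u_{i+3}, u_{i+6}, u_{i+9}$ have been visited so far.

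First I would have the scheduler fire $r_2$ and $r_3$ simultaneously with \emph{outward} directions, producing the configuration $C'$ with robots at $u_i, u_{i+2}, u_{i+7}, u_{i+9}$. In $C'$ each of $r_2$ and $r_3$ has view $000(1)001$ up to mirroring, so by Lemmas~\ref{L3R1} and~\ref{L3R2} no rule of $\PR$ is enabled on them; only $r_1$ and $r_4$, whose common view is $000(1)010$, can potentially move. I would then split on which rule (if any) $\PR$ guards by $000(1)010$. If no such rule exists, $C'$ is terminal, yet the nodes $u_{i+1}, u_{i+4}, u_{i+5}, u_{i+8}$ have never been occupied, violating the exploration property. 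If the rule drives $r_1$ and $r_4$ toward their visible partners, simultaneous activation creates two $2$.towers at $u_{i+1}$ and $u_{i+8}$; they are separated by more than $\phi$, see nothing, and by scheduling both members of each tower in lock-step thereafter they behave as two single robots, reducing to a $k=2$ instance that is impossible by~\cite{DPT09c}. If the rule drives $r_1$ and $r_4$ away from their partners, mirror-symmetric activation yields positions $u_{i-1}, u_{i+2}, u_{i+7}, u_{i+10}$; combining any subsequent rule of $\PR$ with a further application of Rule~\ref{3R3} lets the adversary restore the spacing of $\gamma_t$ up to rotation, yielding an undistinguishable configuration and contradicting Theorem~\ref{th:undist}.

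The remaining scheduler choices for the initial firing of Rule~\ref{3R3}---activating only one of $r_2, r_3$, or activating both in the same direction---can be handled by an analogous but shorter case analysis: after one step the view of the quiescent or lagging interior robot still matches Rule~\ref{3R3}, so one further application either restores the Locked pattern up to rotation (contradicting Theorem~\ref{th:undist}) or reaches a terminal configuration in which some node has never been occupied. The main obstacle will be the ``moves outward'' subcase above: closing the cycle back to a rotation of $\gamma_t$ requires controlling \emph{any} additional rules $\PR$ might contain for the newly-arising isolated-robot views $000(1)000$ and $000(1)010$, and the argument hinges on the fact that Lemmas~\ref{L3R1} and~\ref{L3R2} already permanently disable the symmetric views $000(1)001$, severely restricting which further motions $\PR$ can perform on the evolving configuration.
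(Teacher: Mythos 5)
Your overall strategy (exploit the symmetric guard $100(1)001$ adversarially and derive either a deadlock with unexplored nodes or a repeated configuration contradicting Theorem~\ref{th:undist}) is sound, but the execution contains a concrete error that undermines the main branch. In your configuration $C'$, with robots on $u_i, u_{i+2}, u_{i+7}, u_{i+9}$, the robot $r_2$ on $u_{i+2}$ sees $r_1$ at distance $2$, so its view is $000(1)010$ up to mirroring --- exactly the same view as $r_1$ --- not $000(1)001$. Hence Lemmas~\ref{L3R1} and~\ref{L3R2} say nothing about $r_2$ and $r_3$ in $C'$, and your claim that ``only $r_1$ and $r_4$ can potentially move'' is unfounded: whatever rule $\PR$ has for the guard $000(1)010$ is enabled on all four robots at once, which changes the entire case split. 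Two further steps are also broken. In the outward-then-outward branch (positions $u_{i-1},u_{i+2},u_{i+7},u_{i+10}$), the interior robots see only one neighbor, at distance $3$, so their views are $000(1)001$ and Rule~\ref{3R3} is not enabled on them; your plan to ``apply Rule~\ref{3R3} again'' to restore the spacing of $\gamma_t$ therefore does not go through. And in your final paragraph, the assertion that after activating only one of $r_2,r_3$ ``the view of the quiescent or lagging interior robot still matches Rule~\ref{3R3}'' is false: once $r_2$ leaves $u_{i+3}$, $r_3$ no longer has an occupied node at distance $3$ on that side (its view becomes $000(1)001$ or $010(1)001$), so those asymmetric cases are left essentially unproved.

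The paper avoids all of this by having the adversary choose the \emph{inward} directions for $r_2$ and $r_3$: they land on $u_{i+4}$ and $u_{i+5}$, forming a $1$.block of size $2$ that is at distance $4>\phi$ from both $r_1$ and $r_4$ and therefore sees nothing else. From there every continuation loops: if $r_2,r_3$ move toward each other they merely swap (an indistinguishable configuration); if they move apart the Locked configuration is restored; and $r_1,r_4$, whose views are $000(1)000$, can only be oscillated back and forth by the scheduler --- each case contradicting Theorem~\ref{th:undist}. If you wish to keep the outward first move, you must redo the analysis of $C'$ with the correct views and treat the single-activation and same-direction activations explicitly; as written, the proof has a genuine gap.
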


\begin{proof}

By contradiction, suppose that the scheduler activates both $r_2$ and $r_3$ at the same time. Once they execute Rule~\ref{3R3}, a $1$.block is created. Such that robots in the $1$.block cannot see any other robots. If $r_2$ and $r_3$ keep being enabled they can only move towards each other (otherwise the previous configuration is restored). Suppose that the scheduler activates them at the same time, the configuration reached is indistinguishable with the previous one. Contradiction (Theorem~\ref{th:undist}). If $r1$ and $r_4$ are the ones allowed to move then they can only execute $000(1)000$ :: $\leftarrow$ $\vee$ $\rightarrow$ (recall that $n>13$). Suppose that the scheduler activates them both at the same time such that they move in the opposite direction of the 1.block (let refer to the reached configuration by {\em IMP}). Since $000(1)000$ :: $\leftarrow$ $\vee$ $\rightarrow$ keeps being enabled on both $r1$ and $r_4$. Suppose that the scheduler activates them at the same time such that they move back to their previous position. The configuration reached is indistinguishable from {\em IMP}. Contradiction  (Theorem~\ref{th:undist}).   
\end{proof}

\begin{lemma}\label{IMPATOM}
No deterministic exploration is possible in the ATOM model for $\phi=3$, $n>13$ and $k<n$.
\end{lemma}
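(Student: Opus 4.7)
The plan is to combine Lemmas~\ref{L3R1}, \ref{L3R2}, and~\ref{L3R3} by exhibiting a concrete initial configuration in which every rule that could possibly be enabled has already been forbidden, so that the protocol is stuck with unexplored nodes remaining.

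First, I would verify that the \emph{Locked} configuration introduced above --- four robots placed on $u_i, u_{i+3}, u_{i+6}, u_{i+9}$ --- is a legitimate initial configuration for $k = 4$: it is towerless, and every robot sits at distance exactly $\phi = 3$ from its nearest neighbor, so the standing assumption that each robot be within distance $\phi$ of another is satisfied. Since $n > 13$, the $3$.block spans only $10$ consecutive nodes, so at least $n - 10 \geq 4$ nodes of the ring lie strictly outside the block and are initially unoccupied.

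Next, I would enumerate the possible local views in this configuration. With visibility $\phi = 3$, each extremity robot $r_1$ and $r_4$ sees (up to mirror) the sequence $000(1)001$, while each interior robot $r_2$ and $r_3$ sees $100(1)001$. Thus the only guards that can match in the Locked configuration are those of Rules~\ref{3R1}, \ref{3R2} and~\ref{3R3}. By Lemma~\ref{L3R1}, $\PR$ cannot include Rule~\ref{3R1}; by Lemma~\ref{L3R2}, $\PR$ cannot include Rule~\ref{3R2}; and by Lemma~\ref{L3R3}, $\PR$ cannot include Rule~\ref{3R3}. Hence no robot is enabled, and the Locked configuration is terminal under any candidate protocol $\PR$.

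Finally, since only the four nodes $u_i, u_{i+3}, u_{i+6}, u_{i+9}$ are ever occupied while at least $n - 10 \geq 4$ other nodes are never visited, condition~($ii$) of Definition~\ref{def:exploration} fails, contradicting the assumption that $\PR$ is an exploration protocol. This settles $k = 4$; the cases $k \leq 3$ are handled by the known impossibility of exploration with fewer than four robots on a ring of size $n > 8$~\cite{DPT09c}. The only subtle point is to make the view enumeration airtight: one must check that, in the Locked configuration, no other rule guard can fire, which is guaranteed because the symmetric pair $\{r_1,r_4\}$ and the symmetric pair $\{r_2,r_3\}$ exhaust the equivalence classes of robot views, and at $\phi = 3$ each view is completely determined by the distance pattern within the $3$.block.
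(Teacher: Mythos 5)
Your proof is correct and follows the paper's route exactly: the paper's own proof of this lemma is simply ``Follows from Lemmas~\ref{L3R1}, \ref{L3R2} and \ref{L3R3},'' and what you have written is precisely the missing glue --- the check that \emph{Locked} is a legitimate towerless initial configuration, that in it only the guards $000(1)001$ and $100(1)001$ can fire, and that the three lemmas therefore leave a terminal configuration in which at least $n-10\geq 4$ nodes are never visited. Your remark that the argument covers only $k=4$ (with $k\leq 3$ delegated to~\cite{DPT09c}) is apt, since the bound $k<n$ in the lemma statement must be read as $k\leq 4$: the paper itself gives $\phi=3$ algorithms for $k=5$ and $k=7$ later in the same section.
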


\begin{proof}
%
%

Follows from Lemmas \ref{L3R1}, \ref{L3R2} and \ref{L3R3}. 

\end{proof}   

\begin{lemma}
No deterministic exploration is possible in the CORDA model for $\phi=3$, $n>13$ and $k<n$.
\end{lemma}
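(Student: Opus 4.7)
The plan is to derive this lemma as an immediate consequence of Lemma~\ref{IMPATOM} by invoking the well-known inclusion of executions between the two computational models, exactly in the spirit of the analogous step used earlier in the paper to transfer Theorem~\ref{ATOM1} from ATOM to CORDA.

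First I would recall that in the ATOM (semi-synchronous) model, each activated robot performs the entire Look--Compute--Move cycle atomically between $t$ and $t+1$, whereas in CORDA the three operations of each robot's cycle can be arbitrarily interleaved by the adversary, with unbounded (but finite) delays between them. In particular, a CORDA adversary is always free to schedule Look, Compute and Move of each activated robot back-to-back, with no other robot acting in between; doing so for every robot exactly reproduces an ATOM execution. Consequently, the set of executions realizable by any deterministic protocol $\PR$ under the CORDA scheduler is a strict superset of the set of executions realizable by $\PR$ under the ATOM scheduler.

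Now suppose, for contradiction, that there exists a deterministic exploration protocol $\PR$ for $\phi=3$, $n>13$, and $k<n$ that works in the CORDA model. Then every CORDA execution of $\PR$ starting from any towerless configuration must terminate and collectively visit every node. By the inclusion just recalled, every ATOM execution of $\PR$ is also a CORDA execution, so in particular every ATOM execution of $\PR$ satisfies both termination and full coverage. Hence $\PR$ would also be a deterministic exploration protocol in the ATOM model for the same parameters, contradicting Lemma~\ref{IMPATOM}.

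The argument contains no real mathematical obstacle; it is a model-theoretic reduction, and the only point that deserves a brief justification is precisely the execution-inclusion claim, which is standard and already relied upon in the paper (see the remark preceding the $\phi=1$ corollary and the reference~\cite{P02}). I would therefore keep the proof to two or three sentences, essentially stating: ``Since any ATOM execution can be simulated by a CORDA scheduler that activates each selected robot's Look, Compute and Move contiguously, the conclusion follows directly from Lemma~\ref{IMPATOM}.''
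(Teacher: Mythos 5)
Your proposal is correct and matches the paper's own argument: the paper's proof of this lemma is simply ``Follows directly from Lemma~\ref{IMPATOM},'' relying on the same ATOM-to-CORDA execution-inclusion reduction it invoked earlier for the $\phi=1$ case. Your more detailed justification of the inclusion is a fine (and slightly more explicit) rendering of the same step.
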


\begin{proof}
Follows directly from Lemma \ref{IMPATOM}.
\end{proof}

\subsection{Two Asynchronous Algorithms} 
In the following, we present two deterministic algorithms that solve the exploration problem. The first one uses $7$ robots and works for any towerless initial configuration that contains a single $\phi$.group such that $n\geq k\phi+1$. The second one uses $5$ robots but works only when the starting configuration contains a $1$.block of size $5$. 

\subsubsection{Exploration using $k=7$.}

Before detailing our solution, let us first define some special configurations:

\begin{definition}
A configuration is called {\tt Set} at instant $t$ (refer to Figure \ref{fig:Set}) if there exists a sequence of $5$ nodes $u_i,u_{i+1}, \dots , u_{i+4}$ such that:
\begin{itemize}
\item $M_j=2$ for $j\in \{i+1, i+4\}$ 
\item $M_j=3$ for $j=i+1$ 
\item $M_j=0$ for $j\in \{i+2, i+3\}$ 
\end{itemize}
\end{definition}

\begin{definition}
A configuration is called {\tt Final} at instant $t$ (refer to Figure \ref{fig:Final}) if there exists a sequence of $4$ nodes $u_i,u_{i+1}, \dots , u_{i+3}$ such that:
\begin{itemize}
\item $M_j=2$ for $j\in \{i, i+1\}$ 
\item $M_j=0$ for $j=i+2$ 
\item $M_j=3$ for $j=i+3$ 
\end{itemize}
\end{definition}

\begin{figure}[H]
 \begin{minipage}[b]{.46\linewidth}
 \centering\epsfig{figure=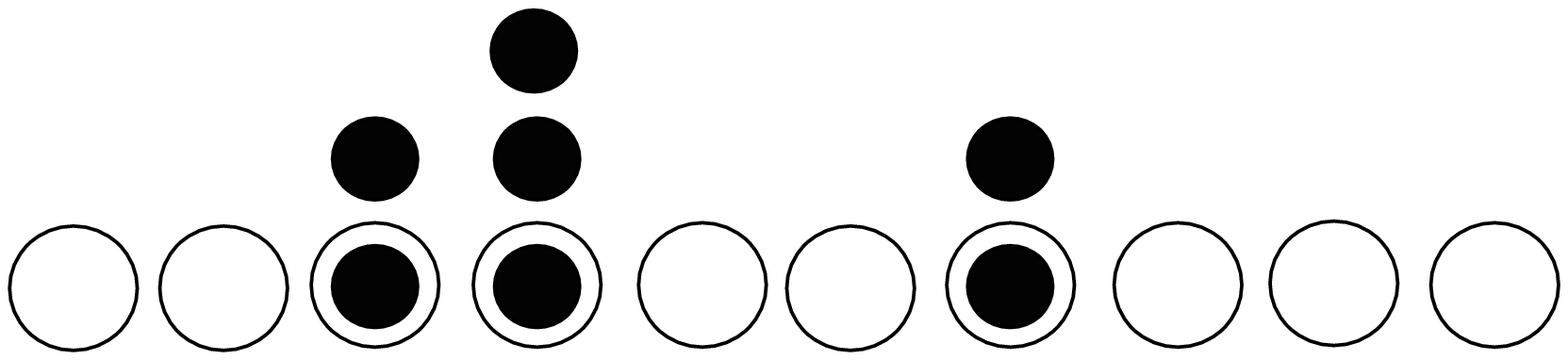,width=5cm}
 \caption{$Set$ configuration\label{fig:Set}}
 \end{minipage} \hfill
\begin{minipage}[b]{.46\linewidth}
  \centering\epsfig{figure=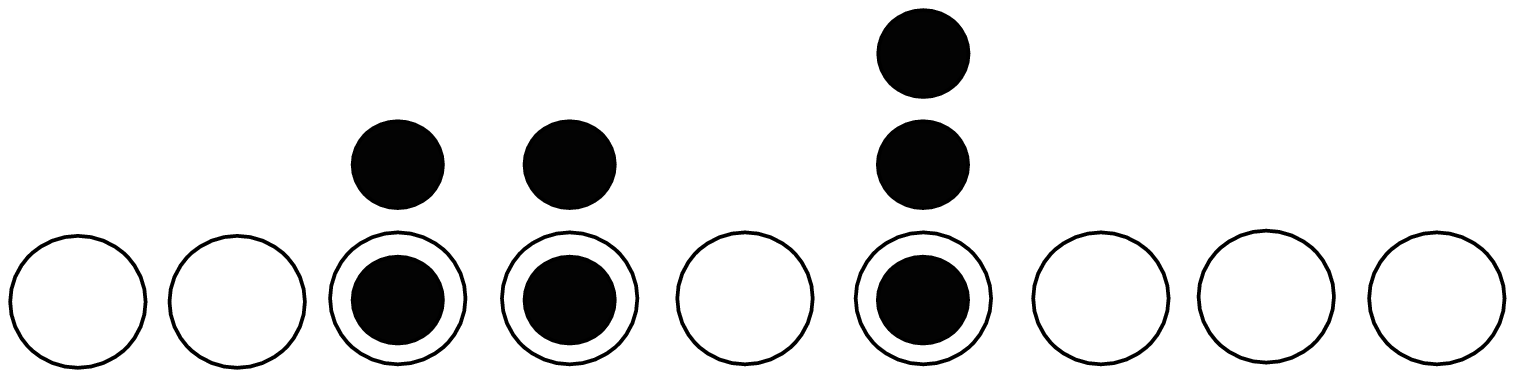,width=5cm}
 \caption{$Final$ configuration\label{fig:Final}}
\end{minipage} \hfill
\end{figure}

The algorithm comprises two phases as follow:

\begin{enumerate}
\item {\tt Set-Up Phase}. The aim of this phase is to create a {\tt Set Configuration}. The starting configuration contains a $\phi$.group, such that $n \geq k(\phi)+ 1$.
\item {\tt Exploration Phase}. The starting configuration of this phase is the {\tt Set Configuration}. A set of robots are elected to visit the ring's nodes. {\tt Final Configuration} is created at the end of this phase. 
\end{enumerate}

The formal Description of the algorithm is given in Algorithm \ref{algo:A31}.

\begin{algorithm}[htb] 
\caption{Asynchronous Exploration using 7 robots ($\phi=3$)}
\label{algo:A31}
\begin{scriptsize}
\begin{tabular}{rlcll}
\multicolumn{5}{l}{{\tt Set-Up} Phase}\\
3A1: & $000(1)001$ & $::$ & $\rightarrow$ & \comment{Move toward the robot at distance $3$}
\\
3A2: & $000(1)01?$ & $::$ & $\rightarrow$ & \comment{Move toward the robot at distance $2$} 
\\
3A3: & $000(1)1??$ & $::$ & $\rightarrow$ & \comment{Move toward my neighboring occupied node} 
\\
3A4: & $000(2)001$ & $::$ & $\rightarrow$ & \comment{Move toward the robot at distance $3$}
\\
3A5: & $000(2)01?$ & $::$ & $\rightarrow$ & \comment{Move toward the robot at distance $2$} 
\\
3A6: & $002(1)001$ & $::$ & $\rightarrow$ & \comment{Move in the opposite direction of the tower} 
\\
3A7: & $002(1)01?$ & $::$ & $\rightarrow$ & \comment{Move in the opposite direction of the tower} 
\\
3A8: & $021(1)001$ & $::$ & $\rightarrow$ & \comment{Move toward the robot at distance $3$}
\\
3A9: & $021(1)01?$ & $::$ & $\rightarrow$ & \comment{Move toward the robot at distance $2$}
\\
3A10: & $021(1)120$ & $::$ & $\leftarrow$ $\vee$ $\rightarrow$ & \comment{Move toward one of my neighboring node}
\\
3A11: & $220(1)200$ & $::$ & $\leftarrow$  & \comment{Move toward the tower at distance $2$}
\\
3A12: & $022(1)020$ & $::$ & $\leftarrow$  & \comment{Move toward the neighboring tower}
\medskip\\
\multicolumn{5}{l}{{\tt Exploration} Phase}\\
3A13: & $000(2)300$ & $::$ & $\leftarrow$ & \comment{Move in the opposite direction of the $3$.tower}
\\
3A14: & $001(1)300$ & $::$ & $\leftarrow$ & \comment{Move in the opposite direction of the $3$.tower}
\\
3A15: & $000(2)030$ & $::$ & $\leftarrow$ & \comment{Move in the opposite direction of the $3$.tower}
\\
3A16: & $001(1)030$ & $::$ & $\leftarrow$ & \comment{Move in the opposite direction of the $3$.tower}
\\
3A17: & $200(3)000$ & $::$ & $\leftarrow$ & \comment{Move toward the $2$.tower at distance $3$}
\\
3A18: & $201(2)00?$ & $::$ & $\leftarrow$ & \comment{Move to my neighboring occupied node}
\\
3A19: & $202(1)00?$ & $::$ & $\leftarrow$ & \comment{Move to my neighboring occupied node}
\\
3A20:& $200(2)030$ & $::$ & $\leftarrow$ & \comment{Move in the opposite direction of the $3$.tower}
\\
3A21:& $201(1)030$ & $::$ & $\leftarrow$ & \comment{Move towards the $2$.tower at distance $2$}
\\
3A22:& $000(2)020$ & $::$ & $\rightarrow$ & \comment{Move towards the $2$.tower at distance $2$}
\\
3A23:& $000(1)120$ & $::$ & $\rightarrow$ & \comment{Move towards my neighboring occupied node}
\end{tabular}
\end{scriptsize}
\end{algorithm}


\paragraph{\textbf{Proof of correctness.}} 
We prove in the following the correctness of Algorithm \ref{algo:A31}.

     
\begin{lemma}\label{Mtower}
During the first phase, if there is a rule that is enabled on robot part of a $2$.tower such that the scheduler activates only one robot in the $2$.tower (the tower is destroyed), then the tower is built again and the configuration reached is exactly the same as when both robots in the $2$.tower were activated at the same time by the scheduler. 
\end{lemma}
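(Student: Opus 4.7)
The plan is to perform a case analysis over those rules of the Set-Up phase of Algorithm~\ref{algo:A31} whose guard has a $2$ as the central coordinate, since these are precisely the rules that can be enabled on a robot belonging to a $2$.tower. Inspection of the algorithm shows that only two such rules exist: 3A4 ($000(2)001 :: \rightarrow$) and 3A5 ($000(2)01? :: \rightarrow$); both send the activated robot to the adjacent empty node on its right.

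For each of these two rules, I would compute the local view of the robot that remains at the former $2$.tower site after its partner has moved. In the case of 3A4, the stranded robot now sees $000(1)101$ (its former partner lies at distance~$1$ to the right, distance~$2$ is empty, and the original isolated robot at distance~$3$ is unchanged); in the case of 3A5, the view becomes $000(1)11?$. In both cases the view matches the guard of rule 3A3 ($000(1)1?? :: \rightarrow$), which prescribes a move toward the partner that just left. By fairness of the daemon, the stranded robot must eventually fire 3A3; doing so reconstructs a $2$.tower at the exact node where the first robot went, which is also the $2$.tower that would have been formed had both robots been activated simultaneously under 3A4 or 3A5.

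The main obstacle is to ensure \emph{non-interference} during the transient interval in which the tower is temporarily broken. One needs to verify two things: first, that the guard of 3A3 remains continuously enabled at the stranded robot until it actually fires, i.e.\ no other robot's action can alter its $\phi$-neighborhood in a way that disables 3A3; and second, that no alternative rule of the Set-Up phase matches the stranded robot's view, so that the rule executed is deterministically 3A3 and not something that would send the robot elsewhere. The first point follows because the only robots whose moves could affect the stranded robot's view are those within distance $\phi = 3$, and by inspection of the Set-Up rules their guards either cannot fire under the local pattern produced by the broken tower, or their moves preserve the suffix $1??$ on the right side of the stranded robot. The second point is checked by direct enumeration against the guard list of the Set-Up phase.

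Putting these observations together yields the lemma: regardless of when the scheduler eventually reactivates the stranded robot, the system transits through a single-robot intermediate configuration and then settles in exactly the same configuration as if both tower members had been activated in one synchronous step. This reduction will be used in the subsequent lemmas to treat a $2$.tower as if it always moved atomically during the Set-Up phase.
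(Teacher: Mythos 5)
Your proof follows essentially the same route as the paper's: identify 3A4 and 3A5 as the only Set-Up rules firable by a $2$.tower member, observe that the stranded robot's resulting view ($000(1)101$ or $000(1)11?$) matches the guard of 3A3, and conclude that firing 3A3 re-forms the tower exactly where the synchronous activation would have placed it. Your additional attention to non-interference during the transient (checking that no other guard matches the stranded robot's view and that nearby robots cannot perturb it before 3A3 fires) is a point the paper's own proof silently glosses over, so your version is if anything slightly more careful, but the underlying argument is the same.
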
     

\begin{proof}
The rules that can be executed by the robots in a $2$.tower in the first phase are Rules $3A4$ and $3A5$. In both Rules, robots in the tower can see only one robot at one side. Their destination is their adjacent empty node towards the robot that can be seen. Suppose that the scheduler activates only one robot in the $2$.tower. In the configuration reached Rule $3A3$ is enabled. Note that the robot on which one of this rule is enabled is the one that was in the $2$.tower and did not move. Once it execute $3A3$, it moves to its adjacent occupied node, thus a $2$.tower is created again and the configuration reached is exactly the same as when both robots were activated by the scheduler. Hence the lemma holds.
\end{proof}

\begin{lemma}\label{CUnited}
Starting from any towerless configuration that contains a single $\phi$.group, a configuration containing an
$S^2$-sequence is eventually reached.
\end{lemma}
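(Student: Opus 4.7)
The plan is to show that the Set-Up rules of Algorithm~\ref{algo:A31} force the $\phi$.group to compact from both ends until each extremity carries a $2$.tower, yielding the required $S^2$-sequence. I would organize the argument around the local view of each extremity robot and proceed by a measure-decreasing argument on the ``gap'' between an extremity robot (or $2$.tower) and the next occupied node toward the interior of the group.

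First I would identify the enabled rules at an extremity. A robot $r$ at an extremity of the unique $\phi$.group sees $000$ on one side (by maximality of the group, within visibility $\phi=3$ on that side there is no robot) and an occupied node at distance $1$, $2$, or $3$ on the other side. These three local views match exactly Rules~$3A3$, $3A2$, and $3A1$, respectively, and in every case the statement moves $r$ one step toward its inner neighbor. So whenever $r$ is activated in isolation, its distance to the neighbor strictly decreases; when the distance drops to $1$, Rule~$3A3$ fuses $r$ with that neighbor and produces a $2$.tower at the extremity. A symmetric statement holds once $r$ has become a $2$.tower: Rules~$3A4$ and~$3A5$ then play the same role (their pre-conditions $000(2)001$ and $000(2)01?$ are the $2$.tower analogues of $3A1$ and $3A2$), and by Lemma~\ref{Mtower} the asynchronous activation of only one of the two tower robots is harmless because the tower is rebuilt by Rule~$3A3$ on the very next activation.

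Next I would handle the interaction of the two extremities under the asynchronous scheduler. Because the two extremities are more than $\phi$ apart as long as the group has not fully collapsed into a single cluster, the behaviour at one extremity is independent of the (unknown) moves currently pending at the other. Using fairness, each extremity robot (or tower) is activated infinitely often, so the compacting process described above proceeds at both ends, in an interleaved but monotone manner, until at each end a $2$.tower appears and no other rule among $3A1$--$3A5$ is enabled at that extremity. I would also argue that no rule from the Exploration phase can fire during this process, since those rules require a $3$.tower or multiple $2$.towers arranged in particular patterns that are not yet present.

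The main obstacle, and the part that would require the most care, is verifying that the compacting process does not get sidetracked by intermediate configurations in which one of the extremity-specific rules is not enabled but some other Set-Up rule (for instance $3A6$--$3A12$, which react to neighbouring $2$.towers) becomes applicable inside the group and deforms the $\phi$.group before both extremities have produced their $2$.tower. To handle this, I would do a careful case analysis of all configurations reachable from a towerless single $\phi$.group under Rules~$3A1$--$3A12$, showing that any such transient configuration either preserves the $\phi$.group structure or already contains a partial $S^2$-sequence from which the remaining Set-Up rules deterministically complete the second $2$.tower. Once both extremities are $2$.towers with the intermediate robots forming a connected sequence, the resulting configuration is by definition an $S^2$-sequence, which concludes the proof.
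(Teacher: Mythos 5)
Your overall strategy coincides with the paper's: compact each extremity of the $\phi$.group inward via Rules~$3A1$--$3A3$ until a $2$.tower forms, let the $2$.tower continue inward via Rules~$3A4$--$3A5$ with Lemma~\ref{Mtower} absorbing the schedules that split a tower, and use fairness plus the fact that the two extremities cannot see each other to run the two ends independently. The paper's Case~1 (a single $1$.block of size $7$, where Rule~$3A3$ alone immediately yields the $S^2$-sequence) and its Case~2 (all other single $\phi$.groups) are both covered by this outline.

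The gap is in the part you label ``the main obstacle'' and then defer to an unexecuted case analysis. An $S^2$-sequence is a \emph{contiguous} block of occupied nodes with a $2$.tower at each extremity --- this contiguity is exactly what Rule~$3A10$ (guard $021(1)120$) needs in Lemma~\ref{CSet} --- so producing a $2$.tower at each end is not enough if the interior robots are still at mutual distance $2$ or $3$. Rules~$3A6$--$3A9$ are not a side effect that might ``deform'' the group before the towers appear; their guards all contain an adjacent $2$.tower, so they only fire after a tower has formed, and they are the mechanism that closes the interior gaps: the robot abutting the tower (guards $002(1)001$, $002(1)01?$) and then the next one (guards $021(1)001$, $021(1)01?$) step \emph{away} from the tower toward the next occupied node, and iterating this from both ends is what turns the group into the contiguous pattern $21112$. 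Your final sentence simply asserts that ``the intermediate robots form a connected sequence''; the paper's Case~2 (sub-cases $(i)$ and $(ii)$, ending with ``the tower becomes neighbor of a $1$.block of size $3$'') is precisely the argument you are missing. Until that analysis is carried out, the proof does not establish the conclusion of the lemma.
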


\begin{proof}
Two cases are possible as follow:
\begin{enumerate}
\item The starting configuration contains a single 1.block of size $7$. Rule $3A3$ is then enabled on the two robots that are at the extremity of the 1.block. If the scheduler activates both robots that the same time, two $2$.towers are created and the lemma holds. If the scheduler activates only one robot, then Rule $3A3$ is the only rule enabled in the system. When the rule is executed, the robot that was supposed to move moves to its neighboring occupied node and a $2$.tower is created. Thus in this case too the lemma holds.

\item Other cases. Let consider only one extremity of the $\phi$.group. If the robot at this extremity (let this robot be $r_1$) does not have a neighboring occupied node, then either Rule $3A1$ or $3A2$ is enabled. Once one of these rules is executed on $r_1$, $r_1$ becomes closer to an occupied node. Thus it becomes neighbor of an occupied node in a finite time. Rule $3A3$ becomes then enabled on $r_1$. Once $r_1$ moves, a $2$.tower is created.
If the $2$.tower does not have any neighboring occupied node, either Rule $3A4$ or $3A5$ is enabled on robots in the $2$.tower. According to Lemma \ref{Mtower}, robots in the same $2$.tower move eventually as when they are activated at the same time. By moving they become closer to an occupied node. Hence they become eventually neighbors of an occupied node. $(i)$ If the robot that is neighbor of the $2$.tower has an empty node as a neighbor then it moves to its adjacent empty node (refer to Rule $3A6$). $(ii)$ in the case it has a neighboring occupied node besides the $2$.tower, then if the robot on its neighboring occupied node move in the opposite direction of the $2$.tower if it has a neighboring empty node (refer to Rules $3A7$ and $3A8$). By doing so the tower becomes neighbor of a 1.block of size $3$. Robots at the other extremity have the same behavior since the only rules that can be executed are the ones that are enabled at the extremity. Thus, a configuration containing an
$S^2$-sequence is reached in a finite time and the lemma holds.
\end{enumerate}
\end{proof}

\begin{lemma}\label{CSet}
Starting from a configuration containing an $S^2$-sequence, {\tt Set} configuration is eventually reached.
\end{lemma}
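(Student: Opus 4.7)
The plan is to trace an execution of Algorithm~\ref{algo:A31} starting from an $S^2$-sequence and show that a configuration matching {\tt Set} is reached in finitely many L-C-M cycles. Since $k=7$ and the extremities of the sequence already host two $2$.towers (accounting for four robots), the three remaining robots lie in the interior; by the construction in the proof of Lemma~\ref{CUnited}---in particular the repeated use of Rule $3A3$---these three interior robots form a $1$.block in the reached $S^2$-sequence, so the canonical local pattern is $21112$. Any slightly different $S^2$-sequence produced by a different interleaving of Lemma~\ref{CUnited} can first be brought to $21112$ by additional applications of rules $3A6$--$3A9$, so I would dispatch those variants before focusing on the canonical case.

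Working with the canonical pattern $21112$, the first step is to check each robot's view at distance $\phi=3$ against the Set-Up rules. The extremity $2$.towers see $000(2)111$ toward the interior, which matches neither $3A4$ nor $3A5$ because the position $+1$ entry is $1$ rather than $0$. The two singles adjacent to the towers have view $002(1)112$ (and its mirror), which does not match any of $3A6$--$3A10$. The only enabled rule is therefore $3A10$ on the middle robot, whose view $021(1)120$ is symmetric; the scheduler picks a direction, but by mirror symmetry of $21112$ both choices yield indistinguishable configurations, so WLOG the robot moves left, giving the pattern $22012$. Analogous view checks on $22012$ and $22102$ isolate $3A11$ and then $3A12$ as the next enabled rules, applied to the lone single robot between the clustered towers; firing them in succession produces $22102$ and then $23002$, the latter being the {\tt Set} configuration.

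The main obstacle will be ruling out, at each intermediate pattern, that any rule other than the intended one is enabled---in particular on a $2$.tower whose outward-facing view might accidentally match $3A4$ or $3A5$. I would need to check both orientations of each tower's view at each of the three intermediate states, and where a tower-enabled rule nonetheless fires I would appeal to Lemma~\ref{Mtower} to argue that after a possibly partial activation the tower is reconstituted in a configuration consistent with the intended cascade. Fairness of the scheduler then guarantees that the single robot responsible for progress is eventually selected; combined with the observation that its guard is preserved across non-disruptive tower moves, this yields termination of the cascade $21112 \to 22012 \to 22102 \to 23002$ and hence produces the {\tt Set} configuration in finitely many cycles.
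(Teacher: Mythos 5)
Your proposal matches the paper's own argument: the paper likewise takes the $S^2$-sequence to be the pattern $21112$ and traces the same cascade, Rule $3A10$ creating the third $2$.tower, then Rule $3A11$ and Rule $3A12$ moving the remaining single robot onto a tower to form the $3$.tower of the {\tt Set} configuration. Your additional checks---that no other Set-Up rule is enabled at the intermediate patterns and that partial tower activations are handled via Lemma~\ref{Mtower}---only make explicit what the paper leaves implicit, so the approach is essentially identical and correct.
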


\begin{proof}
In a configuration containing an $S^2$-sequence, Rule $3A10$ is enabled. When the rule is executed, a new $2$.tower is created. Rule $3A11$ becomes then enabled on the robot not part of a $2$.tower (let this robot be $r_1$). Once it moves, only Rule $3A12$ becomes enabled on $r_1$. When the robot is activated a $3$.tower is created. {\tt Set} configuration is then reached and the lemma holds.  
\end{proof}

\begin{lemma}\label{CFinal}
Starting from a {\tt Set} configuration, {\tt Final} configuration is eventually reached.
\end{lemma}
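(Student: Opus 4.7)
My plan is to trace the evolution of the configuration rule by rule, starting from {\tt Set} and showing that a (nearly) deterministic sequence of rule applications drives the system to {\tt Final}. In {\tt Set} the only robot whose view matches a rule guard is the 2-tower $T$ adjacent to the 3-tower, and the matching guard is that of 3A13, which sends $T$ one step away from the 3-tower. After that step, the view of $T$ becomes $000(2)030$ and 3A15 fires, propelling $T$ one step further along the empty side of the ring; in this way $T$ plays the role of an ``explorer'' exactly analogous to the triple $T1$--$T2$--$R1$ of Lemma~\ref{MTT} in the $\phi=2$ algorithm, while the other 2-tower of {\tt Set} serves as a stationary landmark used to detect termination.

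\textbf{Main trace.} Next I would carry out a routine case analysis on the local views to show that, once $T$ has crossed the visibility horizon of the 3-tower, the guard of 3A17 becomes satisfied at the 3-tower, which therefore follows $T$ leftward. The pair (3-tower, $T$) then sweeps the empty portion of the ring, with 3A13, 3A15, and 3A17 alternately firing; every node between the 3-tower and the landmark is visited along the way. When $T$ arrives within visibility of the landmark from the opposite side, rules 3A20 and 3A22 dock $T$ adjacent to the landmark, forming the 2-2 pair required by {\tt Final}, and the 3-tower then settles at distance 2 from that pair on the appropriate side via 3A17 together with the cleanup rules 3A18, 3A19, 3A21 and 3A23. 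A final check that no guard in Algorithm~\ref{algo:A31} matches the pattern $2\,2\,0\,3$ then shows the computation has terminated.

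\textbf{Asynchrony and the main obstacle.} The main technical difficulty is that the scheduler may activate only one robot of a 2- or 3-tower at a time, transiently destroying it. Following exactly the idea behind Lemma~\ref{Mtower}, each tower-moving rule is paired with a singleton-companion rule (3A13 with 3A14, 3A15 with 3A16, and similarly 3A18--3A21, 3A23 for the docking phase) whose guard matches precisely on the robot left behind and which merges it with the one that has already moved, so the tower is restored at its intended destination. The hard part of the proof is therefore not the high-level strategy but the combinatorial verification, for every intermediate split configuration along the trace, that the \emph{only} enabled rule is the intended restoration rule---no parasitic rule fires that would push the system off the planned trajectory. Once this per-step check is done, Theorem~\ref{th:undist} rules out any configuration repeating, so the planned finite trace is actually followed and {\tt Final} is reached in finitely many steps.
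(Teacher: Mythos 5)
Your high-level strategy is the same as the paper's: the 2-tower adjacent to the 3-tower together with the 3-tower itself form the moving ``explorer'' (the paper calls them $T1$ and $T3$), the remaining 2-tower $T2$ is the stationary landmark, the sweep is driven by the alternation of rules $3A13$/$3A15$/$3A17$, the docking uses $3A20$ and $3A22$, and asynchronous splitting of towers is repaired by the companion rules $3A14$, $3A16$, $3A18$, $3A19$, $3A21$, $3A23$ exactly in the spirit of Lemma~\ref{Mtower}. Two small slips: a tower at distance $3$ has \emph{not} crossed the visibility horizon (with $\phi=3$ it is exactly at the edge of visibility, which is what makes the guard $200(3)000$ of $3A17$ match --- if it had crossed the horizon the 3-tower would see nothing and deadlock); and in the docking step $3A22$ moves the landmark $T2$ toward $T1$, not $T$ toward the landmark.

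The one genuine flaw is your final termination argument. Theorem~\ref{th:undist} states a property that any \emph{correct} exploration protocol must satisfy; it is used in this paper only to derive contradictions in impossibility proofs. You cannot invoke it to conclude that \emph{this} algorithm never revisits a configuration --- that presupposes the very correctness you are trying to establish, so the argument is circular. The paper instead terminates the induction with a variant function: each round of the sweep strictly decreases the (long-way-around) distance between $T1$ and $T2$ while increasing the distance between $T3$ and $T2$, so after finitely many rounds the guard of $3A20$ is reached, and one then checks (as you do) that no rule is enabled in the {\tt Final} pattern $2\,2\,0\,3$. Replace the appeal to Theorem~\ref{th:undist} by this monotone-progress argument, which is already implicit in your own description of the sweep, and your proof coincides with the paper's.
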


\begin{proof}
Let $T1$ and $T2$ be the $2$.towers in the Set configuration such that $T1$ has a neighboring $3$.tower that we call $T3$. Rule $3A13$ is enabled on the robots that are part of $T1$. If the scheduler activates only one robot, Rule $3A14$ becomes enabled on the robot that was supposed to move, thus the configuration reached is exactly the same as when both robots in $T1$ were activated at the same time. Rule $3A15$ becomes enabled on the $T1$. Once the robots on $T1$ are activated, the tower becomes at distance $3$ from $T3$ (in the case the scheduler activates only one robot in $T1$, only Rule $3A16$ becomes enabled on the robot that was supposed to move, once the scheduler activates the robot, the configuration reached is exactly the same as when both were activated at the same time). Robots on $T3$ are the only ones allowed to move (refer to Rule $3A17$). Once they move, $T3$ becomes at distance $2$ from $T1$ (in the same manner, if the scheduler activates only some robots in $T3$ then either Rule $3A18$ or Rule $3A19$ is the only one enabled. Thus, $T3$ is built again (the configuration reached is similar to the one that was reached when all robots in the $T3$ have moved at the same time). Robots in $T1$ are now the only one allowed to move and so on. Observe that the distance between $T3$ and $T2$ increases while the distance between $T1$ and $T2$ decreases. When $T2$ and $T1$ becomes at distance $3$. Rule $3A20$ is then enabled on $T1$ (if the scheduler activates only one robot then Rule $3A21$ is enabled and hence $T1$ is built again). Robots in $T2$ are now the only ones enabled (refer to Rule $3A22$), when they move, {\tt Final} configuration is reached and the lemma holds (In the case the scheduler activates only one robot in $T2$, Rule $3A23$ becomes enabled on the robot that did not move. Once it is activated, {\tt Final} configuration is reached and the lemma holds).
\end{proof}

\begin{lemma}
Starting from a any towerless configuration that contains a single $\phi$.group, {\tt Final} configuration is eventually reached and all the nodes of the ring have been explored.
\end{lemma}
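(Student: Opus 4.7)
The plan is to chain Lemmas~\ref{CUnited}, \ref{CSet}, and \ref{CFinal} to establish termination in a \texttt{Final} configuration, and then separately argue node coverage. For termination: starting from any towerless configuration containing a single $\phi$.group, Lemma~\ref{CUnited} produces a reachable configuration containing an $S^2$-sequence; Lemma~\ref{CSet} extends this to a \texttt{Set} configuration; and Lemma~\ref{CFinal} then yields a \texttt{Final} configuration. Since \texttt{Final} is a terminal configuration (no rule is enabled on it), the execution halts.

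For coverage, I would split the argument according to the two phases. In the \texttt{Set-Up} phase, every rule (3A1--3A12) directs a robot to an adjacent node along the sole $\phi$.group, so each robot traces a contiguous path. Tracking these trajectories shows that the entire arc spanning the original $\phi$.group is swept: the gaps inside the initial $\phi$.group (nodes that were empty but lay between two robots at distance at most $\phi$) are successively occupied as robots converge and form towers. In particular, when the \texttt{Set} configuration is reached, the five consecutive nodes supporting it, together with the original positions of the robots, account for every node of the ring that was ever part of the initial $\phi$.group region.

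In the \texttt{Exploration} phase, the \texttt{Set} configuration occupies exactly five consecutive nodes, so the remaining $n-5$ nodes form a single empty arc. Following the dynamics described in the proof of Lemma~\ref{CFinal}, the tower $T1$ moves one step at a time away from $T3$ into this empty arc, with $T3$ chasing it one step at a time, while $T2$ stays put until the very end. Since both $T1$ and $T3$ translate by exactly one node per activation (and the possible partial activations of a tower are immediately repaired as in Lemma~\ref{Mtower}), every node of the empty arc between the initial position of $T1$ and the final meeting point with $T2$ is occupied at some instant by one of these towers. Together with the nodes already visited in the \texttt{Set-Up} phase, this covers the whole ring.

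The main obstacle is making the coverage argument rigorous under asynchrony: the adversary may activate only one robot of a $2$.tower or $3$.tower, temporarily breaking it, and one must verify that such partial moves never cause a node to be skipped. The remedy is the already-established repair property (Lemma~\ref{Mtower} for Set-Up, and the analogous case analysis embedded in the proof of Lemma~\ref{CFinal} for Exploration), which shows that any broken tower is restored before further progress is possible, so the net displacement of each tower between consecutive ``stable'' configurations is always one node. Combining termination and the two-phase coverage argument yields the statement.
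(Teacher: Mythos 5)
Your proof follows the same route as the paper's: chain Lemmas~\ref{CUnited}, \ref{CSet}, and \ref{CFinal} to reach a \texttt{Final} configuration, then argue coverage by noting that the nodes between $T1$ and $T2$ were already occupied in the $S^2$-sequence/\texttt{Set} configuration, while the remaining arc is swept one node at a time by $T1$ until it becomes adjacent to $T2$. Your additional care about partial tower activations and Set-Up-phase coverage only makes explicit what the paper leaves implicit.
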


\begin{proof}
From Lemmas \ref{CUnited}, \ref{CSet} and \ref{CFinal} we can deduce that {\tt Final} configuration is eventually reached starting from a towerless configuration that contains a single $\phi$.group. In another hand, the node between $T1$ and $T2$ in the {\tt Set} configuration have been explored since
they were occupied when the configuration contained an $S^2$-sequence. When the set of robots in charge of performing the exploration task move on the ring, the distance between $T1$ and $T2$ decreases such that when {\tt Final} configuration is reached $T1$ and $T2$ become neighbor. Thus, we are sure that all the nodes of the ring has been explored and the lemma holds. 
\end{proof}

\subsubsection{Exploration using $k=5$.}

In the following, we provide two definitions that are needed in the description of our algorithm:

\begin{definition}
A configuration is called {\tt Set2} at instant $t$ (refer to Figure \ref{fig:Set2}) if there exists a sequence of $6$ nodes $u_i,u_{i+1}, \dots , u_{i+5}$ such that:
\begin{itemize}
\item $M_j=1$ for $j\in \{i, i+5\}$ 
\item $M_j=3$ for $j=i+1$ 
\item $M_j=0$ for $j\in \{i+2, i+3, i+4\}$ 
\end{itemize}
\end{definition}

\begin{definition}
A configuration is called {\tt Done} at instant $t$ (refer to Figure \ref{fig:Done}) if there exists a sequence of $4$ nodes $u_i,u_{i+1}, \dots , u_{i+3}$ such that:
\begin{itemize}
\item $M_j=1$ for $j\in \{i, i+1\}$ 
\item $M_j=0$ for $j=i+2$ 
\item $M_j=3$ for $j= i+3$ 
\end{itemize}
\end{definition}

\begin{figure}[H]
\begin{minipage}[b]{.46\linewidth}
  \centering\epsfig{figure=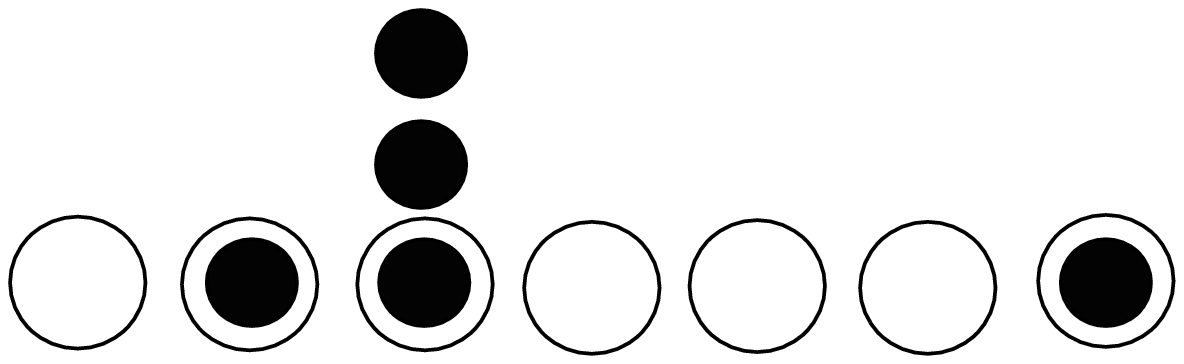,width=4cm}
 \caption{$Set2$ \label{fig:Set2}}
 \end{minipage} \hfill
\begin{minipage}[b]{.46\linewidth}
  \centering\epsfig{figure=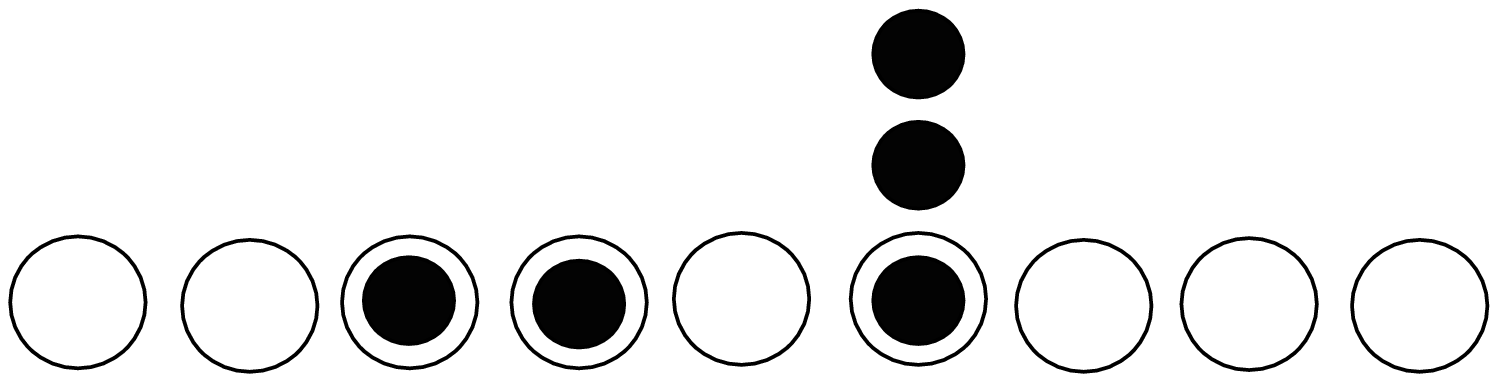,width=4cm}
 \caption{$Done$ configuration\label{fig:Done}}
\end{minipage}\hfill
\end{figure}  

The algorithm comprises also two phases as follow:

\begin{enumerate}
\item {\tt Set-Up Phase}. The initial configuration of this phase contains a $\phi$.group such that $n\geq k\phi+1$. The aim of this phase is to create a {\tt Set2} configuration.
\item {\tt Exploration Phase}. The starting configuration of this phase is {\tt Set2} Configuration. A set of robots are elected to perform the exploration task. At the end of this phase, {\tt Done} configuration is created.
\end{enumerate}

The formal description of our algorithm is given in Algorithm \ref{algo:53A}.

\begin{algorithm}[htp]
\caption{Asynchronous Exploration using $5$ robots ($\phi=3$)}
\label{algo:53A}
\begin{scriptsize}
\begin{tabular}{rlcll}
\multicolumn{5}{l}{{\tt Set-Up} Phase}\\
%
3A'1: & $011(1)110$ & $::$ & $\rightarrow$ $\vee$ $\leftarrow$ & \comment{Move to one of my neighboring node}
\\
3A'2: & $120(1)100$ & $::$ & $\leftarrow$ & \comment{Move to towards the tower at distance $2$}
\\
3A'3: & $012(1)010$ & $::$ & $\leftarrow$ & \comment{Move to the tower} 
\\
3A'4: & $300(1)000$ & $::$ & $\rightarrow$ &\comment{Move in the opposite direction of the $3$.tower}
\\
3A'5: & $120(1)101$ & $::$ & $\leftarrow$ &\comment{Move toward the $2$.tower (case $n=6$)}
\\
3A'6: & $300(1)001$ & $::$ & $\rightarrow$ &\comment{Move toward in the opposite direction of the $3$.tower (case $n=7$)}
\medskip\\
\multicolumn{5}{l}{{\tt Exploration} Phase}\\
3A'7: & $000(1)300$ & $::$ & $\leftarrow$ & \comment{Move in the opposite direction of the $3$.tower}
\\
3A'8: & $000(1)030$ & $::$ & $\leftarrow$ & \comment{Move in the opposite direction of the $3$.tower}
\\
3A'9: & $100(3)000$ & $::$ & $\leftarrow$ & \comment{Move toward the isolated robot}
\\
3A'10: & $101(2)000$ & $::$ & $\leftarrow$ & \comment{Move to my neighboring occupied node}
\\
3A'11: & $102(1)000$ & $::$ & $\leftarrow$ & \comment{Move to the $2$.tower}
\\
3A'12: & $100(1)030$ & $::$ & $\leftarrow$ & \comment{Move in the opposite direction of the $3$.tower}
\\
3A'13: & $010(1)030$ & $::$ & $\leftarrow$ & \comment{Move in the opposite direction of the $3$.tower}
\\
3A'14: & $010(1)300$ & $::$ & $\leftarrow$ & \comment{Move in the opposite direction of the $3$.tower (case $n=6$ $\wedge$ $n=7$ )}
\end{tabular}
\end{scriptsize}
\end{algorithm}

\paragraph{Proof of Correctness}
We prove in the following the correctness of our solution.

\begin{lemma}\label{lem:Set2}
Starting from a configuration that contains a $1$.block of size $5$ such that $n>k+2$, {\tt Set2} configuration is eventually reached.
\end{lemma}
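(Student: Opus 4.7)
The plan is to trace the Set-Up phase step by step, showing that from the unique initial configuration (a $1$.block of size $5$) the only applicable rule sequence drives the system, under any fair asynchronous scheduler, to the {\tt Set2} configuration. Let $u_i,u_{i+1},\dots,u_{i+4}$ be the occupied nodes initially, and denote by $r_0,\dots,r_4$ the corresponding robots.

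\emph{Step 1: creating the first $2$.tower.} In the initial configuration only the middle robot $r_2$ (located at $u_{i+2}$) has a view matching a Set-Up rule, namely $011(1)110$, so only rule $3A'1$ is enabled. When the scheduler activates $r_2$, it moves (w.l.o.g.\ to the right), producing the local pattern $1,1,0,2,1$ at $u_i,\ldots,u_{i+4}$. I would first observe that no other rule could possibly have fired earlier (the outer and neighbouring robots have views $000(1)111$, $001(1)110$, $0111(1)00$, $1110(1)00$, etc., none of which appears in the algorithm), which forces this first step to be unique.

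\emph{Step 2: creating the $3$.tower.} In the resulting configuration, $r_1$ sees a view that is the mirror of $120(1)100$, so rule $3A'2$ enables $r_1$ to move toward the $2$.tower. For $n=6$ the view contains an additional occupied node on the far side and the matching rule is $3A'5$ instead of $3A'2$; either way the effect is that $r_1$ steps into $u_{i+2}$, yielding the pattern $1,0,1,2,1$. Then $r_1$'s new view is the mirror of $012(1)010$, so rule $3A'3$ is the only enabled rule on $r_1$, and when it is activated the $2$.tower becomes a $3$.tower at $u_{i+3}$. After this, the configuration is $1,0,0,3,1$ with $r_0$ isolated at $u_i$ and $r_4$ still adjacent to the $3$.tower at $u_{i+4}$.

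\emph{Step 3: separating the outer robot.} The isolated robot $r_0$ at distance $3$ from the $3$.tower has view $000(1)003$, matching (up to mirror) rule $3A'4$ (or $3A'6$ in the case $n=7$). Executing it moves $r_0$ one step away from the $3$.tower, leaving exactly the pattern $1,3,0,0,0,1$ required by the definition of {\tt Set2}. I would verify that $r_4$ is never enabled by any Set-Up rule in these intermediate configurations (its view always contains the $3$.tower as an immediate neighbour, which is not the pattern of any rule), so its position is preserved throughout.

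\emph{Handling asynchrony and fairness.} The main obstacle is ruling out harmful interleavings. Because at each step there is essentially only one robot whose view matches a Set-Up rule --- in Steps 1 and 2 no other robot is enabled, and in Step 3 only $r_0$ is enabled --- the scheduler has no freedom to produce a different sequence of configurations. The only real choice is the initial direction selected by the scheduler in rule $3A'1$ (which is symmetric and hence chooses the orientation of the resulting {\tt Set2}), but both choices lead to configurations that are mirror images of each other, and both satisfy the definition of {\tt Set2}. Fairness of the scheduler guarantees that the unique enabled robot is eventually activated at each step, so {\tt Set2} is reached in finitely many steps. The sub-case analysis for $n=6$ and $n=7$ reduces to checking that the additional rules $3A'5$ and $3A'6$ correctly substitute for $3A'2$ and $3A'4$ when the small ring forces occupied nodes to appear within the viewing radius on the far side.
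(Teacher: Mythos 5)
Your proof follows exactly the same route as the paper's own proof: both trace the forced chain $3A'1 \rightarrow 3A'2 \rightarrow 3A'3 \rightarrow 3A'4$ through the configurations $11111$, $11021$, $10121$, $10031$, and then {\tt Set2}; yours is in fact the more careful version (explicit mirror reading of the guards, explicit check of the non-moving robots). However, there is a gap, and it is the same gap as in the paper's proof, except that your phrasing makes it visible: you verify that no other robot matches a \emph{Set-Up} rule and conclude that ``the scheduler has no freedom.'' Since the robots are oblivious they cannot know which phase the system is in, so a robot is enabled as soon as its view matches \emph{any} guard of Algorithm~\ref{algo:53A}, including an Exploration-phase guard. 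This is not vacuous here: in the configuration $1,1,0,2,1$ reached after your Step~1, robot $r_4$ has view $102(1)000$, which is precisely the guard of Rule~$3A'11$, so the scheduler may activate $r_4$ and merge it into the $2$.tower, reaching the pattern $1,1,0,3$ from which no rule at all is enabled --- a deadlock with $n-5$ nodes unvisited; likewise, in the configuration $1,0,0,3,1$ of your Step~3, $r_4$'s view is the mirror of $000(1)300$ (Rule~$3A'7$), so the scheduler can start the exploration walk before $r_0$ detaches, and the literal {\tt Set2} pattern is then never attained. Closing the lemma requires checking every intermediate view against the entire rule set, not only against the Set-Up rules; the first of the two collisions above appears to be a genuine problem with the algorithm rather than merely with the write-up, and neither your argument nor the paper's addresses it.
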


\begin{proof}
When the configuration contains a 1.block of size $5$, Rule $3A'1$ is enabled. Once the rule is executed a $2$.tower is created. On the robot that is at distance $2$ from the $2$.tower (let this robot be $r_1$) Rule $3A'2$ is enabled. When the rule is executed $r_1$ becomes neighbor of the $2$.tower. Rule $3A'3$ becomes the only one enabled then enabled. When $r_1$ is activated, it executes the rule and it joins the $2$.tower and hence a $3$.tower is created. Rule $3A'4$ becomes the only rule enabled on the robot that is at distance $3$ from the $3$.block, when the robot is activated, it becomes at distance $4$ from the $3$.tower. The {\tt Set2} configuration is then created and the lemma holds.

\end{proof}

\begin{lemma}\label{lem:Done}
Starting from a configuration of type {\tt Set2}, {\tt Done} configuration is eventually reached.
\end{lemma}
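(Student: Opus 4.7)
The plan is to track the ``chase dynamics'' of the three robot entities that appear in $Set2$ --- the isolated robot $r_A$ adjacent to the $3$-tower $T_3$, the tower $T_3$ itself, and the distant isolated robot $r_B$ --- and show that they translate along the ring until $r_A$ meets $r_B$ from the opposite side, at which point the local pattern is exactly $Done$.

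First I would verify which rules are initially enabled in $Set2$ by reading off the three relevant views. Only $r_A$ matches a guard (namely $000(1)300$, Rule $3A'7$), while $T_3$ sees $001(3)000$ and $r_B$ sees $000(1)000$, neither of which appears in Algorithm~\ref{algo:53A}. So $r_A$ takes one step leftward, then another via Rule $3A'8$ with view $000(1)030$. After these two steps $r_A$ is at distance $3$ from $T_3$ with view $000(1)003$ and no rule applies to it, whereas $T_3$'s view has become $100(3)000$ so Rule $3A'9$ enables it to advance. The induction is then natural: whenever the $r_A$-to-$T_3$ distance is $3$ only $T_3$ is enabled (Rule $3A'9$), whenever it is $2$ only $r_A$ is enabled (Rule $3A'8$). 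The $r_A/T_3$ pair thus translates leftward at a common amortized speed with pairwise distance oscillating in $\{2,3\}$, while $r_B$, at distance $>\phi$ from both throughout, keeps seeing $000(1)000$ and stays still.

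The asynchronous aspect is the second concern: when $T_3$ is activated under Rule $3A'9$, the scheduler may move one, two, or three of its stacked robots. Partial activations leave an intermediate configuration in which $T_3$ has broken into a $2$-tower plus a trailing singleton. The algorithm anticipates this: from the ``one moved'' state the $2$-tower sees $101(2)000$, triggering Rule $3A'10$, and from the ``two moved'' state the lagging singleton sees $102(1)000$, triggering Rule $3A'11$. In each case the unique enabled rule reunites $T_3$ one node to the left, and a direct view computation shows that in those intermediate states $r_A$ and $r_B$ see nothing actionable (views $000(1)012$, $000(1)021$, and $000(1)000$ respectively), so the invariant is restored independently of the interleaving.

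Finally I would trace the endgame. Since the pair moves leftward every round and $n \geq k\phi + 1$, after a linear number of activations $r_A$ has circled the ring and arrived at the position two hops right of $r_B$, with $T_3$ two further behind. At that moment $r_A$'s view becomes $010(1)030$ and Rule $3A'13$ moves it adjacent to $r_B$; a final activation of $T_3$ by Rule $3A'9$ produces the local pattern $1,1,0,3$, which is exactly $Done$. A last check that in $Done$ the views $000(1)103$ (for $r_B$), $001(1)030$ (for $r_A$), and $110(3)000$ (for $T_3$) match no guard of Algorithm~\ref{algo:53A} shows that the execution terminates there. The hard part of this argument is the asynchronous case analysis for partial $T_3$ activations; Rules $3A'10$ and $3A'11$ are precisely the recovery mechanism, and the key lemma is that their firing always precedes any further move of $r_A$ simply because $r_A$'s view during these intermediate states is inert.
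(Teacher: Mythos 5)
Your proof is correct and follows essentially the same route as the paper's: the leader--follower chase in which $r_A$ and the $3$-tower alternate moves at distance oscillating between $2$ and $3$, with Rules $3A'10$/$3A'11$ reassembling the tower after partial activations, until $r_A$ meets the far robot and the $1,1,0,3$ pattern of {\tt Done} is formed. Your version is in fact somewhat more careful than the paper's, since you explicitly compute the views in the intermediate states and verify that {\tt Done} is terminal.
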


\begin{proof}

Let $r_1$ be the robot that is neighbor of the $3$.tower and let $r_2$ be the robot that cannot see any other robot. $r_1$ is the only robot allowed to move and this until it becomes at distance $3$ from the tower (see Rules $3A'7$ and $3A'8$). Rule $3A'9$ becomes enabled on the robots part of the $3$.tower, their destination is their adjacent empty node towards $r_1$. In the case the scheduler activates some robots in the tower, the remaining robots (the one that were supposed to move) are the only ones allowed to move, their destination is their adjacent occupied node (see Rule $3A'10$ and $3A'11$). Thus the configuration reached is exactly the same as when all the robots that are in the $3$.tower are activated by the scheduler at the same time. Hence, the tower becomes eventually at distance $2$ from $r_1$. $r_1$ is now allowed to move, its destination is its adjacent empty node in the opposite direction of the $3$.tower. Note that once it moves it becomes at distance $3$ from this tower. Thus, robots in the $3$.tower are the ones allowed to move and so on. Both the $3$.tower and $r_1$ keep moving in the same direction such that at each time they move they become closer to $r_2$ (the distance between $r_1$ and $r_2$ decreases). Thus, $r_1$ and $r_2$ becomes eventually neighbors. Note that the robots in the $3$.tower cannot see $r_2$ yet, so it will continue to move towards $r_1$. When all of them move, $Done$ configuration is reached and the lemma holds.
\end{proof}

\begin{lemma}
Starting from a configuration that contains a 1.block, when {\tt Done} configuration is reached, and all the nodes of the ring have been explored.
\end{lemma}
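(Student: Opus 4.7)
The plan is to combine the two preceding lemmas to obtain termination, and then argue separately that the trajectory of the moving robots sweeps every node. First I would invoke Lemma~\ref{lem:Set2} to deduce that, starting from any configuration containing a $1$.block of size $5$, the system reaches a \texttt{Set2} configuration in finite time; then Lemma~\ref{lem:Done} gives that \texttt{Done} is reached from \texttt{Set2} in finite time. Composing these two results immediately yields the finite-time termination part: from any initial $1$.block, \texttt{Done} is eventually produced.

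Next I would address visitation. For the \texttt{Set-Up} phase, observe that the nodes appearing in the initial $1$.block of size $5$ are, by definition, occupied, hence visited. Moreover, inspection of Rules $3A'1$--$3A'6$ shows that the robots only redistribute within the span of the $\phi$.group, so no additional ``virgin'' node lies between the extremities of the original block. Thus at the moment \texttt{Set2} is reached, the only nodes that have not yet been visited are the empty nodes lying between the isolated robot $r_2$ and the $3$.tower (the segment not covered by the original block).

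For the \texttt{Exploration} phase, I would reuse the trajectory analysis already carried out in the proof of Lemma~\ref{lem:Done}: the robot $r_1$ and the $3$.tower alternate moves, each time advancing one step in the same direction away from their starting positions and toward $r_2$, while $r_2$ itself remains stationary. Since each application of Rules $3A'7$--$3A'11$ moves one of them onto a previously unoccupied node adjacent to its current position, the union of nodes visited by $r_1$ and by the $3$.tower covers every node on the unique arc from the \texttt{Set2} starting position of the tower to the arrival at $r_2$. Combined with the nodes already visited in the \texttt{Set-Up} phase, this exhausts all $n$ nodes of the ring.

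The main obstacle I anticipate is bookkeeping the asynchronous case: showing that, even when the scheduler activates only a strict subset of the robots in the $3$.tower, the intermediate configurations (handled via Rules $3A'10$ and $3A'11$) do not cause any node to be skipped. Here I would appeal to the observation, already embedded in Lemma~\ref{lem:Done}, that a partial activation only defers the move and results in an identical configuration to the synchronous case, so the set of visited nodes is unchanged. With that observation in hand, the full visitation claim follows and the lemma is established.
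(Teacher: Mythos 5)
Your proposal is correct and follows essentially the same route as the paper: termination by composing Lemmas~\ref{lem:Set2} and~\ref{lem:Done}, and visitation by noting that the nodes spanned by the initial $1$.block (hence the short arc between the $3$.tower and $r_2$ in \texttt{Set2}) are already visited, while the sweep of $r_1$ and the $3$.tower toward $r_2$ until adjacency covers the remaining arc. Your extra bookkeeping of the partially-activated tower moves is a slightly more careful version of what the paper leaves implicit in Lemma~\ref{lem:Done}, but it is not a different argument.
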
  

\begin{proof}
From Lemmas \ref{lem:Set2} and \ref{lem:Done} we deduce that staring from any towerless configuration that contains a single $1$.block such that $n>k+2$, {\tt Done} configuration is reached in a finite time. Let $T$ be the $3$.tower and let $r_1$ (resp $r_2$) be respectively the robots that is neighbor of the $3$.tower (resp the robot that cannot see any other robot). In {\tt Set2} configuration, the nodes between the $T$ and $r_2$ has been already visited since the starting configuration was a 1.block. $T$ and $r_1$ keep moving in the same direction such that the distance between $r_1$ and $r_2$ decreases at each time. When {\tt Done} configuration is reached, $r_1$ and $r_2$ become neighbor. We can then deduce that all the nodes of the ring have been explored and the lemma holds. 
\end{proof}

\section{Conclusion}
\label{sec:conclu}
In this paper, we studied the exploration of uniform rings by a team of oblivious robots. The assumptions
of unlimited visibility made in previous works has enabled them to focus only on overcoming the computational
weaknesses of robots introduced by the simultaneous presence of obliviousness and asynchrony in the design of
exploration algorithms.  In this paper, we added one more weakness: {\em Myopia} {\em ie,} robots have only a limited
visibility. We provided evidences that the exploration problem can still be solved under some conditions by
oblivious robots despite myopia.  We studied the problem for both synchronous and asynchronous settings,
and considered three types of visibility capabilities: $\phi=1$, $\phi = 2$, and $\phi =3$.  

The complete characterization for which the exploration of the ring by myopic robots is solvable remains open
in general. We conjuncture that the solutions proposed in this paper are optimal with respect to number of robots. We
also believe that the condition $n> k\phi +1$ is a necessary condition to solve the problem.  
Also, the problem of exploring other topologies and arbitrary graphs by myopic robots is a natural extension of this work.

\label{sec:concl}

\bibliographystyle{plain}
\bibliography{biblio-1}

\end{document}